\documentclass[11pt]{article}
\usepackage{preamble}

\title{Quantum Query Complexity of Finding a Tarski Fixed Point on a High-Dimensional Grid}
\author[1,2]{Tongyang Li\thanks{tongyangli@pku.edu.cn}}
\author[3]{Weiran Ma\thanks{2400013063@stu.pku.edu.cn}}
\author[1,2]{Ziyi Yang\thanks{yangziyi@stu.pku.edu.cn}}
\author[3]{Xingyu Zhao\thanks{2300012903@stu.pku.edu.cn}}
\affil[1]{Center on Frontiers of Computing Studies, Peking University, Beijing 100871, China}
\affil[2]{School of Computer Science, Peking University, Beijing 100871, China}
\affil[3]{School of Electronics Engineering and Computer Science, Peking University, Beijing 100871, China}
\date{}

\begin{document}

\maketitle

\thispagestyle{empty}

\begin{abstract}
    The Knaster-Tarski fixed-point theorem states that every monotone function over a complete lattice has a fixed point. Beyond its fundamental role in order theory, the theorem and its algorithmic variants have found broad applications in areas such as economics, game theory, and programming languages. While the query complexity of finding a Tarski fixed point has been extensively studied in classical models, comparatively little is known in the quantum setting.

    We prove an $\Omega(k\log n)$ quantum query lower bound for finding a fixed point of a monotone function on $[n]^k$, using the nonnegative spectral adversary method. In the two extremal regimes $n = 2$ and $k = 1$, our quantum lower bound matches the previous classical lower bounds $\Omega(k)$ and $\Omega(\log n)$, respectively.  For $n, k\geq 2$, our bound improves the best previous classical lower bound when $n < k$ and is within a factor of $\log n / \log k$ compared to the known classical lower bound when $n \geq k$. To construct the adversary matrix, we develop the Tree--Filtration Adversary Method. Besides yielding our lower bound, the method offers a more transparent combinatorial interpretation of the nonnegative spectral adversary method. When the hard instances of a problem admit a tree-like organization and suggest an intuition analogous to classical decision-tree lower bounds, our method provide a promising approach to establishing quantum complexity lower bounds.
\end{abstract}

\section{Introduction}
The Knaster--Tarski fixed-point theorem states that every monotone function $f \colon L \to L$ over a complete lattice $(L,\preceq)$ has a fixed point~\cite{knaster1928theoreme,Tarski1955ALF}. The theorem has extensive
applications in semantics, game theory, and complexity theory. In programming languages, the least and greatest fixed points provide the semantic foundation for recursive programs, monotone data-flow analysis, abstract interpretation, fixed-point logics, verification, and model checking~\cite{kildall1973unified,cousot1977abstract,kozen1983results}. In economics and game theory, Tarski fixed points characterize equilibria in supermodular games and clearing states in financial networks,
making their computation relevant to equilibrium analysis and systemic risk~\cite{topkis1979equilibrium,topkis1998supermodularity,Eisenberg01FinancialNetwork,milgrom1990rationalizability,etessami2019tarski}. In complexity theory, the existence of a $\mathrm{poly}(k, \log n)$ algorithm for the Tarski fixed-point problem on the grid lattice implies that various algorithmic game theory problems in $\textsf{NP}\cap \textsf{coNP}$, such as the Simple Stochastic Game \cite{CONDON1992203} and Arrival \cite{Dohrau2017}, will be in $\textsf{P}$ \cite{etessami2019tarski, gartner_et_al:LIPIcs.ICALP.2021.69}.

The computational complexity of finding Tarski fixed points on finite grids has been extensively studied. For positive integers $n$ and $k$, let $[n]:=\{0,1,\ldots ,n-1\}$ and equip $[n]^k$ with the coordinatewise partial order, where $\boldsymbol{x}\preceq \boldsymbol{y}$ if and only if $x_i\leq y_i$ for every $i\in[k]$. The resulting poset $([n]^k,\preceq)$ forms a complete lattice. In the value-oracle model, an algorithm is given query access to an unknown monotone function $\boldsymbol{f}\colon[n]^k\to[n]^k$, with each query at $\boldsymbol{x}$ returning $\boldsymbol{f}(\boldsymbol{x})$. We denote the problem of finding a fixed point $\boldsymbol{x}^*\in[n]^k$ satisfying $\boldsymbol{f}(\boldsymbol{x}^*)=\boldsymbol{x}^*$ on this lattice using as few queries as possible by \textsc{Tarski}$(n,k)$. Classical research on this problem has proceeded along two main directions: developing efficient algorithms for finding fixed points and constructing hard instances that establish query lower bounds.

In terms of classical algorithms, one of the simplest ways is to iterate $f$ from the bottom (or top) element of $[n]^k$ and find a fixed point using $O(kn)$ queries, since every nonstationary iteration makes
strict progress in at least one coordinate~\cite{etessami2019tarski}. Dang et al.~introduced a recursive coordinate-wise binary-search method using $O((\log n)^k)$ queries for constant $k$~\cite{dang2011computational}. Fearnley et al.~gave an algorithm using $O((\log n)^2)$ queries for $k=3$; together with their decomposition theorem and the bounds in dimensions one and two, this yields an algorithm using $O((\log n)^{\lceil 2k/3\rceil})$ queries for every constant $k$~\cite{dang2011computational,fearnley2022faster}.  Chen and Li subsequently improved the general
bound to $O((\log n)^{\lceil (k+1)/2\rceil})$
~\cite{chen2022improved}.  Haslebacher and Lill later obtained an alternative algorithm with $O((\log n)^2)$ queries for the three-dimensional case
~\cite{haslebacher2026levelset}. Most recently, Chen et al.~gave an algorithm with $O((\log n)^2)$ queries for $k=4$ and, more generally, an
algorithm with $O((\log n)^{\lceil (k-1)/3\rceil+1})$ queries for every constant $k$~\cite{chen2026mystery}.

In terms of classical query lower bounds, Etessami et al.~proved an $\Omega((\log n)^2)$ randomized lower bound for the two-dimensional grid by constructing herringbone functions that encode nested ordered search~\cite{etessami2019tarski}. Br{\^a}nzei et al.~later established the first dimension-sensitive bounds: they characterized the
randomized and deterministic query complexity on the Boolean hypercube as $\Theta(k)$ and proved the general randomized lower bound
$\Omega(k+k\log n/\log k)$~\cite{BPRRandomizedLowerbound}. Their subsequent
multidimensional herringbone construction proved the additional randomized lower bound $\Omega(k(\log n)^2/\log k)$ for $n,k\geq2$~\cite{branzei2025tarski}. In addition, Chen et al.
showed that, in the deterministic black box model, promising the fixed point to be unique does not reduce the query complexity~\cite{chen2026reducing}. Consequently, the
classical complexity is now tight at $\Theta((\log n)^2)$ for $k\in\{2,3,4\}$ and at $\Theta(k)$ when $n=2$, while a substantial gap remains
for general $n$ and $k$.

Despite the extensive classical literature, quantum query complexity of finding a Tarski fixed point has received little attention. To the best of our knowledge, the only prior work is Phillips' recent result for the two-dimensional grid~\cite{phillips2026quantum}. Phillips embedded nested ordered search into a family of herringbone functions and proved an $\Omega((\log n)^2)$ bounded-error quantum lower bound using a composition theorem for the nonnegative spectral adversary method. By monotonicity in the dimension, the same lower bound holds for every $k\geq2$~\cite{phillips2026quantum}. However, the quantum query complexity of this problem on high-dimensional grids remains largely open. Our work aims at deriving quantum query bounds for general $n$ and $k$ that reveal their joint influence on the complexity of finding a Tarski fixed point.

\subsection{Our Contributions}
Our main result is the following theorem.

\begin{theorem}
\label{thm:main-theorem}
    For every constant $0 < \epsilon < 1/2$ and integers $n\geq 2, k\geq 1$,
    $$
    Q_\epsilon(\textsc{Tarski}(n, k)) = \Omega(k\log n).
    $$
\end{theorem}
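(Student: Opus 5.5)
We prove the bound with the nonnegative spectral adversary method. For a nonnegative symmetric matrix $\Gamma$ indexed by hard instances, with $\Gamma[\boldsymbol f,\boldsymbol g]=0$ whenever $\boldsymbol f$ and $\boldsymbol g$ share a valid output, we have
\[
Q_\epsilon\ \ge\ \frac{1-2\sqrt{\epsilon(1-\epsilon)}}{2}\cdot\frac{\|\Gamma\|}{\max_{\boldsymbol x}\|\Gamma\circ D_{\boldsymbol x}\|}.
\]
Here $D_{\boldsymbol x}[\boldsymbol f,\boldsymbol g]=1$ exactly when $\boldsymbol f(\boldsymbol x)\neq \boldsymbol g(\boldsymbol x)$. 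The goal is a family of instances with unique fixed points and a $\Gamma$ for which this ratio is $\Omega(k\log n)$.

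\textbf{Hard instances.} Each instance is indexed by a hidden point $\boldsymbol a\in[n]^k$, which will be its unique fixed point. The first attempt is the coordinatewise function
\[
f_{\boldsymbol a}(\boldsymbol x)_i=x_i+\operatorname{sgn}(a_i-x_i).
\]
It is monotone with unique fixed point $\boldsymbol a$, but a query at $\boldsymbol x$ reveals the comparison of $x_i$ with $a_i$ in every coordinate at once. That would allow binary search in all coordinates in parallel with $O(\log n)$ queries, so this family is too weak.

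I would instead follow the classical construction of Br\^anzei et al. There the coordinates are revealed one at a time, giving a \emph{sequential} search: roughly $k$ nested ordered searches over $[n]$. This is arranged so that, off the ``active'' coordinate, $f$ moves toward a fixed reference point. The function then only betrays information about the next unresolved coordinate, once the previous coordinates are pinned down correctly.

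Concretely, I would use a herringbone-style composition. Coordinate $i$'s behaviour depends on whether $x_1,\dots,x_{i-1}$ already equal $a_1,\dots,a_{i-1}$. Monotonicity must be checked carefully, for instance by letting coordinate $j>i$ drift up when some earlier coordinate is above its target and down when it is below. The $n=2$ case is the hypercube $\Theta(k)$ construction and the $k=1$ case is ordered search, so both extremes are covered by the same template.

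\textbf{Tree--filtration adversary.} The instances naturally form a tree of depth $k$. Level $i$ branches on $a_i\in[n]$, and each node's children form an ordered-search instance. I would build $\Gamma$ as a sum over levels, $\Gamma=\sum_{i}\Gamma_i$:
\begin{itemize}
\item $\Gamma_i$ pairs instances that agree on $a_1,\dots,a_{i-1}$ and differ at $a_i$;
\item the weight of such a pair is the known optimal ordered-search adversary weight (Høyer--Neerbek--Shi style, giving $\Omega(\log n)$) applied to $(a_i,a_i')$;
\item the weight is tensored with uniform weights on the subtrees below.
\end{itemize}
Then $\|\Gamma\|$ should add across levels to $\Theta(k\log n)$ times a normalization. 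For a fixed query $\boldsymbol x$, the key claim is that $\Gamma_i\circ D_{\boldsymbol x}$ is nonzero only on pairs at essentially one level, namely the level determined by the first coordinate where $\boldsymbol x$ deviates from the common prefix. On that level the ordered-search adversary bounds the norm by $O(1)$ times the normalization. Summing gives $\max_{\boldsymbol x}\|\Gamma\circ D_{\boldsymbol x}\|=O(1)$, a filtration-type argument over the tree levels.

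\textbf{Main obstacle.} The hard step is exactly this per-query norm bound. A query $\boldsymbol x$ lies on many prefix branches simultaneously, so $\Gamma\circ D_{\boldsymbol x}$ is a block sum over tree nodes. I need the blocks to be orthogonal (supported on disjoint index sets), or at least to have norms that do not accumulate across levels; the construction must guarantee that one query distinguishes pairs only at the level where $\boldsymbol x$ first leaves the correct prefix. I would first design $f_{\boldsymbol a}$ so that $f_{\boldsymbol a}(\boldsymbol x)$ depends only on $a_{\le t}$, where $t$ is that first deviation level. I would then verify that the difference pattern at level $t$ is exactly the one-dimensional ordered-search difference pattern. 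If there is leakage, I would reweight levels geometrically and accept a constant loss.
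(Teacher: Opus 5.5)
Your proposal has two genuine gaps, and both sit at the load-bearing steps.

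\textbf{The instance family.} If you literally follow Br\^anzei et al., you get the family recalled in Appendix~\ref{apdx:bpr25-upperbound}. There, coordinate $i$ reacts as soon as $v_j\le a_j$ (or $v_j\ge a_j$) for all $j<i$. The paper gives an $O(\sqrt{k}\log k\log n\log\log n)$-query quantum algorithm for that family via wildcard search, so no adversary matrix on it can certify $\Omega(k\log n)$.

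The variant you then sketch is a different family: coordinate $i$ is informative only when $x_{<i}=a_{<i}$ exactly. It must be written down and checked, and it can be. Let $t$ be the first index with $x_t\ne a_t$. Move $x_t$ one step toward $a_t$, and move every $x_j$ with $j>t$ one step in the opposite direction, clipping to $[n]$. A case analysis on $t_{\boldsymbol x}$ versus $t_{\boldsymbol y}$ for $\boldsymbol x\preceq\boldsymbol y$ shows this map is monotone with unique fixed point $\boldsymbol a$. A query then reveals exactly $t$ and the sign of $a_t-x_t$. As written, though, your proposal never fixes a valid family.

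\textbf{The denominator.} Your key claim, that only one level survives in $\Gamma\circ D_{\boldsymbol x}$, is false. Take $\Gamma_i=I_{n^{i-1}}\otimes H\otimes J_{n^{k-i}}/n^{k-i}$ with $H[a,b]=1/|a-b|$. Then $M_i:=\Gamma_i\circ D_{\boldsymbol x}=e_{x_{<i}}e_{x_{<i}}^{\mathsf T}\otimes G_i\otimes J_{n^{k-i}}/n^{k-i}$, where $G_i$ is $H$ restricted to pairs $(a,b)$ with $x_i$ between $a$ and $b$ inclusive.

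This is nonzero for every $i$, and the supports $\{\boldsymbol a:a_{<i}=x_{<i}\}$ are nested rather than disjoint. The triangle inequality therefore gives only $\max_{\boldsymbol x}\|\Gamma\circ D_{\boldsymbol x}\|=O(k)$, hence only $\Omega(\log n)$. Adjacent levels also genuinely overlap: $\|M_iM_{i+1}\|$ can be of constant order when $x_{i+1}$ is mid-range.

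What saves the argument is decay across non-adjacent levels. The tensor structure gives $\|M_iM_j\|\le Cn^{-(j-i-1)/2}$, because each intermediate coordinate contributes a factor $\|\tfrac{1}{n}J_ne_{x_l}\|=n^{-1/2}$. Cotlar--Stein then yields $\|\sum_iM_i\|=O(1)$. This plays the role of Lemmas~\ref{lem:geometric-supression} and~\ref{lem:summation-spectral-upperbound} in the paper.

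Your fallback of reweighting levels geometrically would instead cap the numerator at $O(\log n)$, losing exactly the factor $k$ you are after.

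\textbf{Comparison with the paper.} The paper avoids Hilbert-matrix weights altogether. Its spine family over tokens $4^r\boldsymbol e_j$ has a filtration tree of depth $k(U+1)/2=\Theta(k\log n)$, and each level contributes $1/2$ to the numerator. The $\log n$ therefore comes from depth rather than from within-level weights, and cross-level control comes from $2/3$-geometric indistinguishability.
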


This new quantum lower bound matches the classical $\Theta(k)$ tight bound when $n = 2$~\cite{BPRRandomizedLowerbound} and the $\Theta(\log n)$ tight bound when $k = 1$~\cite{chang2008complexity,HPJ01BinarySearch}.
For general $n$ and $k$, our quantum lower bound naturally yields the same classical lower bound. For $n,k\geq2$, our bound improves the classical $\Omega(k\log^2 n / \log k)$ lower bound~\cite{branzei2025tarski} when $n< k$ (and has the same order when $n=k$), while for $n\geq k$ it is smaller by a factor of $\Theta(\log n/\log k)$. We summarize the best-known query lower bounds in Table~\ref{tab:lower-bound-comparison}.

\begin{table}[H]
    \centering
    \footnotesize
    \setlength{\tabcolsep}{6pt}
    \renewcommand{\arraystretch}{1.18}
    \begin{tabular}{c|ccc}
        \hline
        Model & Parameter Setting & Best-Known Bounds & Reference \\
        \hline
        \multirow{5}{*}{Classical} & $k = 1$  & $\Theta(\log n)$ & \cite{chang2008complexity} \\
        ~ & $k = 2, 3, 4$ & $\Theta(\log^2 n)$ & \cite{etessami2019tarski}; \cite{fearnley2022faster, chen2026mystery} \\
        ~ & $k > 4, n = 2$ & $\Theta(k)$ & \cite{BPRRandomizedLowerbound}; trivial iterative algorithm \\
        ~ & $k > 4, k \leq n, n > 2$ & $\Omega\left({k\log^2 n}/{\log k}\right)$ & \cite{branzei2025tarski} \\
        ~ & $k > 4, k > n > 2$ & $\Omega(k\log n)$ & \textbf{This work} (implied by Theorem \ref{thm:main-theorem}) \\
        \hline
        \multirow{3}{*}{Quantum} & $k = 1$ & $\Theta(\log n)$ & Reduction to \cite{HPJ01BinarySearch}; \cite{chang2008complexity}\\
        ~ & $2\leq k \leq \log n$ & $\Omega(\log^2 n)$ & \cite{phillips2026quantum} \\
        ~ & $k > \log n$ & $\Omega(k\log n)$ & \textbf{This work} (Theorem \ref{thm:main-theorem}) \\
        \hline
    \end{tabular}

    \caption{Best-known query lower bounds for \textsc{Tarski}$(n,k)$.
    Tight bounds are stated using $\Theta(\cdot)$. If a lower bound is tight, the work establishing the matching upper bound is also included in the table, separated from the lower-bound reference by a semicolon.}
    \label{tab:lower-bound-comparison}
\end{table}

At a high level, we construct a family of hard functions and analyze it using the nonnegative spectral adversary method introduced in \cite{HSAdversary}, in the formulation of \cite{SSAdversary}. Based on the structure of the problem and our hard instances, we organize the hard instance family into a \textit{filtration tree} and define a top-down random walk on this tree that exposes its key structural properties. We then construct an adversary matrix solely from the probability distribution induced by the random walk and use it to establish the desired lower bound.

Motivated by the potential applicability of this approach, we formalize it as the \textit{Tree-Filtration Adversary Method} and develop a general framework in Section~\ref{subsec:general_framework}. The framework exploits the underlying
structure of the problem within the spectral adversary method while avoiding
problem-specific and often cumbersome algebraic calculations. To illustrate
its applicability, we provide in Appendix \ref{apdx:binary-search} a new proof of the \(\Omega(\log n)\) lower bound for one-dimensional binary search that bypasses the spectral analysis of the Hilbert matrix~\cite{HPJ01BinarySearch}.

\subsection{Technical Overview}
We give intuitions and a high-level overview of our quantum query lower bound. In general, we work within the framework of the spectral adversary method, which is reviewed in Section \ref{subsec:prelim_adversary}. Our strategy of constructing and analyzing the adversary matrix may be of independent interest. We derive the matrix canonically for the hard instances organized in a tree structure. By contrast, many earlier applications of the adversary method relied on problem-specific choices of relations, weights, and test vectors, followed in some cases by spectral calculations that obscure the original combinatorial intuition. For example, in ordered search, H{\o}yer, Neerbek, and Shi  \cite{HPJ01BinarySearch} set $\boldsymbol{\Gamma}[\boldsymbol{x},\boldsymbol{y}]=1/|\boldsymbol{x}-\boldsymbol{y}|$ and bound the query matrices by identifying their off-diagonal blocks with finite sections of the Hilbert matrix, whose spectral norm is at most $\pi$; this analytic step has little direct connection to the structure of ordered search. The argument is powerful, but the relevant algebraic objects are designed and analyzed separately. Our method is analogous in spirit to \cite{AmbainisAndOr}, which shows that if the hard instance can be partitioned into two disjoint sets $X$ and $Y$ such that $f(\boldsymbol{x})\ne f(\boldsymbol{y})$ for every $\boldsymbol{x}\in X$ and $\boldsymbol{y}\in Y$, then a quantum query lower bound can be derived from certain combinatorial parameters of a relation $R\subseteq X\times Y$. In comparison, our method shows that when the hard instances admit a tree-like organization, a quantum query lower bound can be derived from two probabilistic properties. Thus, our method has the potential to provide unified proofs of quantum lower bounds for problems that admit intuitions analogous to classical decision-tree lower bounds.

\paragraph{The tree-filtration adversary method.}
Let $\mathcal F$ be a family of hard instances for a function problem. We assume that the instances in $\mathcal F$ have been organized into a rooted tree $\mathcal{T} = (\mathcal{V}, \mathcal{E}, r_{\mathcal{T}})$ whose leaves all have the same depth $L$, with each leaf corresponding to exactly one hard instance. An internal node $u$ therefore represents the set of all instances associated with the leaves in the subtree rooted at $u$. Moving down the tree reveals progressively more information about the hidden instance, until the instance is completely determined at a leaf.

The construction is guided by the same intuition as a classical decision-tree lower bound. A useful decision tree should have two properties:
\begin{enumerate}
    \item Every proper subtree should still contain instances with different answers; and
    \item The instances within a subtree should remain difficult to distinguish using only a small number of queries.
\end{enumerate}

To give a unified formulation of the second property, we associate with every node $u$ a set $\hat D(u)$ of query locations, determined solely by $u$, called the \emph{simplified difference criterion} (Definition \ref{def:difference-criterion}). This set provides a sufficient certificate for the possible distinguishability of instances within the subtree: for any two instances $\boldsymbol{x}$ and $\boldsymbol{y}$ below $u$,
$$
    \boldsymbol{x}_i\neq \boldsymbol{y}_i \quad\Longrightarrow\quad i\in\hat D(u).
$$
We further require these sets to be \emph{nested}: if $v$ is a descendant of $u$ on $\mathcal{T}$, then $\hat D(v)\subseteq\hat D(u)$. Hence, as more information about the hidden instance is exposed, the set of queries that may still distinguish the remaining candidates becomes progressively smaller. This nested family of potential difference sets is the \textit{filtration} in the name of our method.

We probe the tree using a top-down random walk $\mathcal A$. At every internal node, $\mathcal A$ chooses one of its children according to a prescribed probability distribution. The walk induces a distribution over the leaves of the tree and, after conditioning on any node $u$, a distribution over the hard instances in the subtree rooted at $u$. We require this walk to satisfy two complementary properties.

\begin{enumerate}
    \item \textbf{Anti-concentration (Definition \ref{def:anti-concentration}).} For a node $u$ at depth $t$, a random continuation of $\mathcal A$ should retain a noticeable probability of producing an answer different from that of any prescribed leaf below $u$. Formally, this probability is required to be at least $a_t$. Intuitively, even after the information represented by $u$ has been revealed, the output remains dispersed among the instances in its subtree.

    \item \textbf{Geometric-indistinguishability (Definition \ref{def:geometric-indistinguishable}).} Fix a query location $i$. Suppose the walk starts from a node $u$ at depth $s$ and reaches a random descendant $V$ at depth $t>s$. We require
    $$
        \Pr\left[i\in\hat D(V)\middle| \mathcal A\text{ starts at }u\right]
        \leq r^{t-s},
    $$
    for some constant $r<1$. Since $\hat D(V)$ contains every location at which instances below $V$ may disagree, this condition says that the distinguishing power of any fixed query decays geometrically as the walk descends the tree.
\end{enumerate}

We next construct an adversary matrix directly from the tree and the distribution induced by $\mathcal A$. Let $\mathrm{ins}(\ell)$ denote the hard instance associated with a leaf $\ell$, and $\mathrm{ins}^{-1}$ its inverse map, i.e. the corresponding leaf to a hard instance. Write $\Pr_{\mathcal A}[u]$ for the probability that the walk reaches a node $u$, and $\phi_u[\boldsymbol{x}] = \sqrt{\frac{\Pr_{\mathcal{A}}[\mathrm{ins}^{-1}(\boldsymbol{x})]}{\Pr_{\mathcal{A}}[u]}}\mathds{I}\{\mathrm{ins}^{-1}(\boldsymbol{x})\in \mathcal{L}(\mathrm{Sub}(u))\}$. For every node $u$, define
$$
\boldsymbol{\Gamma}_u[\boldsymbol{x},\boldsymbol{y}]=\phi_{u}[\boldsymbol{x}]\phi_u[\boldsymbol{y}]\mathds{I}\{f(\boldsymbol{x}) \ne f(\boldsymbol{y})\}
$$
and let $\boldsymbol{\Gamma}=\sum_{u\in \mathcal{V}\setminus \mathcal{L}(\mathcal{T})}\boldsymbol{\Gamma}_u$. 

The two properties of $\mathcal A$ control the numerator and denominator of the spectral adversary bound as we prove in Lemma \ref{lem:numerator-lb} and Lemma \ref{lem:denominator-ub}. The proof of the lower bound is therefore reduced to finding a suitable tree, a nested difference criterion $\hat D$, and a random walk satisfying the two probabilistic conditions above.

\paragraph{The hard-instance family.}
We now describe a family of Tarski instances that admits exactly such a tree. Each hard instance is built around a \emph{spine} consisting of $\Theta(k\log n)$ vertices:
$$
\mathbf 0=\boldsymbol{x}_0\preceq \boldsymbol{x}_1\preceq\cdots\preceq \boldsymbol{x}_{|\textsf{Tok}(n,k)|}=B\mathbf 1.
$$

The sequence $\{\boldsymbol{x}_i-\boldsymbol{x}_{i-1}\}_{i=1}^{|\textsf{Tok}(n,k)|}$ forms a \textit{balanced} permutation (Definition \ref{def:balanced-permutations}) of a set of vectors that we call \emph{tokens} (Definition \ref{def:tokens}). Each token has the form $(j, r)$ and represents the vector $4^r \boldsymbol{e}_j$, where $r\leq U$, $U$ is maximal subject to $1 + \sum_{r = 0}^{U}4^r \leq n$ and $\boldsymbol{e}_j$ denotes the $j$th standard basis vector. Thus, a token specifies both a coordinate direction and a geometric scale. The $k$ possible directions and the $\Theta(\log n)$ possible scales together yield $\Theta(k\log n)$ tokens.

We designate the midpoint of the spine as the unique fixed point. Every other point on the spine is mapped to the adjacent spine vertex that is one step closer to the midpoint. To define the function on the rest of the grid, we first project each lattice point $\boldsymbol{v}$ onto the spine and then move one step toward the midpoint. For a point $\boldsymbol{v}$ in the lower half of the lattice, the projection is the greatest spine vertex $\boldsymbol{x}_i$ satisfying $\boldsymbol{x}_i\preceq \boldsymbol{v}$; for a point in the upper half, it is the least spine vertex $\boldsymbol{x}_i$ satisfying $\boldsymbol{v}\preceq \boldsymbol{x}_i$. Here the two halves are determined by the $\ell_1$-norm of $\boldsymbol{v}$.

We show that every function in the resulting family is monotone (Lemma \ref{lem:monotonicity}) and has a unique fixed point (Lemma \ref{lem:unique_fixed_point}), namely the midpoint of its spine.

\paragraph{The prefix-suffix tree.}
We organize these instances according to the prefix--suffix hierarchy of their token permutations. A node at depth $t$ specifies the first $t$ tokens and the last $t$ tokens, with the latter recorded in reverse order. Its subtree contains all balanced token permutations consistent with this partial assignment. At the final depth, the prefix and suffix determine the complete permutation, so the leaves are in bijection with the hard instances and all occur at the same depth.

This tree realizes the two classical decision-tree intuitions described above. First, every proper subtree hides many possible fixed points, and in particular at least two. Second, queries on most lattice points reveal very little about the unrevealed part of the spine (see Example \ref{eg:function-2-8} for some numerical results). The corresponding simplified difference criterion is
$$
\hat{D}(u_{\boldsymbol{p}, \boldsymbol{q}}) = \{\boldsymbol{x} : \norm{\boldsymbol{x}} \leq W / 2 \wedge \boldsymbol{c}(\boldsymbol{p}) \preceq \boldsymbol{x}\} \cup \{\boldsymbol{x} : \norm{\boldsymbol{x}} > W / 2 \wedge \boldsymbol{x} \preceq B\boldsymbol{1} - \boldsymbol{c}(\boldsymbol{q})\}.
$$

It remains to choose the appropriate distribution on the tree. Our random walk $\mathcal A$, described formally in Definition \ref{def:distinguishing-walk}, proceeds as follows. At a node $(\boldsymbol{p},\boldsymbol{q})$, it selects a scale $r$ with probability proportional to the number of unused tokens at that scale multiplied by their length $4^r$. It then chooses two distinct unused tokens of scale $r$ uniformly at random without replacement, appending one to the prefix and the other to the suffix.

This walk satisfies both requirements of the general framework. Its induced distribution is $1/2$-anti-concentrated and $2/3$-geometrically indistinguishable. Finally, the depth of the prefix-suffix tree is $\frac{|\textsf{Tok}(n,k)|}{2}=\Theta(k\log n)$. Applying the tree-filtration adversary method with $a_t=1/2$ at every non-leaf depth and $r=2/3$ therefore gives
$$
    Q_\epsilon(\textsc{Tarski}(n,k))=\Omega(k\log n),
$$
which proves our main theorem.

\subsection{Paper Organization}

In Section \ref{sec:preliminaries}, we provide some definitions and notation for sets, sequences, permutations and quantum algorithms, and review necessary preliminaries in linear algebra and the spectral adversary method. In Section \ref{sec:hard_instance}, we present our hard instance construction and prove its properties. In Section \ref{sec:adversary_method}, we present the high-level idea of constructing our adversary matrix, and prove the main theorem. Finally, we conclude with a discussion and several open problems in Section \ref{sec:conclusion}.

\section{Preliminaries} \label{sec:preliminaries}

\subsection{Notations}

We begin by introducing the notation used throughout the paper. Sequences and vectors are denoted by bold lowercase letters and linear maps by bold uppercase letters.

Let $\mathds{I}\{P\}$ denote the indicator of the proposition $P$; that is, it equals $1$ if $P$ is true and $0$ otherwise. For every positive integer $n$, let $[n] := \{0, 1, \ldots , n - 1\}$. For a set $S$ and a nonnegative integer $\ell$, let $S^\ell$ denote the set of all sequences of length $\ell$ over $S$ and define $S^* := \cup_{\ell\geq 0}S^{\ell}$. For a sequence $\boldsymbol{p}$, we denote its length by $|\boldsymbol{p}|$. For $\boldsymbol{p},\boldsymbol{q}\in S^{*}$, their concatenation, denoted by $\boldsymbol{p}\mathbin{\|}\boldsymbol{q}$, is the sequence of length $|\boldsymbol{p}|+|\boldsymbol{q}|$ defined by \[ (\boldsymbol{p}\mathbin{\|}\boldsymbol{q})_i := \begin{cases} p_i, & 0\leq i<|\boldsymbol{p}|,\\ q_{i-|\boldsymbol{p}|}, & |\boldsymbol{p}|\leq i<|\boldsymbol{p}|+|\boldsymbol{q}|. \end{cases} \]

Let $S$ be a finite set. For every integer $0\leq \ell \le|S|$, we define the set of length $\ell$ partial permutations of $S$ as $\textsf{P}_S^\ell := \{ \boldsymbol{p}\in S^\ell : \forall 0 \leq i < j < \ell, p_i\ne p_j\}.$ For brevity, write $\textsf{P}_S := \textsf{P}_S^{|S|}$ for the set of all permutations of $S$. It is immediate that $|\textsf{P}_S^\ell| = |S|!/(|S| - \ell)!$. In our proof, we will sometimes consider permutations whose prefixes and suffixes are fixed. Such a restriction is valid only when the prescribed prefix and suffix contain no common element. More precisely, for nonnegative integers $\ell_1$ and $\ell_2$ satisfying $\ell_1+\ell_2\leq|{S}|$, define
$$
\textsf{PPair}_{S}^{\ell_1, \ell_2} = \{(\boldsymbol{p}, \boldsymbol{q}) \in \textsf{P}_S^{\ell_1}\times \textsf{P}_S^{\ell_2} : \boldsymbol{p} \mathbin{\|} \boldsymbol{q}\in \textsf{P}_S^{\ell_1 + \ell_2}\}.
$$ 
We refer to the elements of $\textsf{PPair}_S^{\ell_1,\ell_2}$ as \emph{valid permutation pairs}. Given a valid permutation pair $(\boldsymbol{p}, \boldsymbol{q})\in \textsf{PPair}_S^{\ell_1, \ell_2}$, we can define the \emph{prefix-suffix restricted permutations} as follows:
$$
\textsf{P}_S|_{\boldsymbol{p}, \boldsymbol{q}} = \left\{\boldsymbol{\pi} \in \textsf{P}_S : \begin{aligned}
    \forall i = 0, 1, \ldots , \ell_1 - 1, & \pi_i = p_i \\
    \forall i = 0, 1, \ldots , \ell_2 - 1, & \pi_{|S| - i - 1} = q_i
\end{aligned}\right\}.
$$

Let $\mathcal{T} = (\mathcal{V}, \mathcal{E}, r)$ be a tree rooted at $r\in \mathcal{V}$. Let $\mathcal{L}(\mathcal{T})$ denote the set of all leaf nodes in $\mathcal{T}$. For each node $u$, denote the depth of $u$ by $\mathrm{dep}(u)$, the subtree of $u$ in $\mathcal{T}$ by $\mathrm{Sub}(u)$, all children of $u$ by $\mathrm{ch}(u)$, and the ancestor of $u$ at depth $d$ by $\mathrm{anc}(u, d)$.
\subsection{Basic Properties of the Lattice} \label{subsec:lattice}

Let $(L,\preceq)$ be a partially ordered set.  It is called a complete lattice if every subset $S\subseteq L$ has both a supremum and an infimum. In this work, we consider the $k$-dimensional grid $[n]^k$. A coordinatewise partial order $\preceq$ on $[n]^k$ is defined by $\forall \boldsymbol{a}, \boldsymbol{b}\in [n]^k, \boldsymbol{a}\preceq \boldsymbol{b} \Leftrightarrow \forall i\in [k], a_i\leq b_i$. Clearly, $([n]^k, \preceq)$ forms a complete lattice. For any function $\boldsymbol{f} \colon [n]^k\rightarrow [n]^k$, say $\boldsymbol{f}$ is monotone if and only if $\forall \boldsymbol{a}, \boldsymbol{b}\in [n]^k, \boldsymbol{a}\preceq \boldsymbol{b}\rightarrow \boldsymbol{f}(\boldsymbol{a}) \preceq \boldsymbol{f}(\boldsymbol{b}).$ The composition of two monotone functions is also monotone.

\begin{lemma} \label{lem:composition_monotonicity}
    Let $\boldsymbol{f}, \boldsymbol{g}$ be two monotone functions on $[n]^k$. Then $\boldsymbol{f}\circ \boldsymbol{g}$ is monotone.
\end{lemma}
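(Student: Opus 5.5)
The plan is to verify the definition of monotonicity for $\boldsymbol{f}\circ \boldsymbol{g}$ directly, by chaining the monotonicity of $\boldsymbol{g}$ and then of $\boldsymbol{f}$.

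Fix arbitrary $\boldsymbol{a}, \boldsymbol{b}\in [n]^k$ with $\boldsymbol{a}\preceq \boldsymbol{b}$.

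\begin{enumerate}
    \item Since $\boldsymbol{g}$ is monotone, $\boldsymbol{g}(\boldsymbol{a})\preceq \boldsymbol{g}(\boldsymbol{b})$.
    \item Both $\boldsymbol{g}(\boldsymbol{a})$ and $\boldsymbol{g}(\boldsymbol{b})$ are points of $[n]^k$, which is the domain of $\boldsymbol{f}$. So the monotonicity of $\boldsymbol{f}$ applies to this comparable pair and gives $\boldsymbol{f}(\boldsymbol{g}(\boldsymbol{a}))\preceq \boldsymbol{f}(\boldsymbol{g}(\boldsymbol{b}))$.
    \item This is exactly $(\boldsymbol{f}\circ\boldsymbol{g})(\boldsymbol{a})\preceq(\boldsymbol{f}\circ\boldsymbol{g})(\boldsymbol{b})$, which is the definition of monotonicity for $\boldsymbol{f}\circ \boldsymbol{g}$.
\end{enumerate}

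There is no real obstacle. The only point to confirm is that $\boldsymbol{f}\circ\boldsymbol{g}$ is well defined as a map $[n]^k\to[n]^k$, which holds because $\boldsymbol{g}$ maps into $[n]^k$.
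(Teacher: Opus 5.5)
Your proof is correct and is exactly the paper's argument: monotonicity of $\boldsymbol{g}$ gives $\boldsymbol{g}(\boldsymbol{a})\preceq\boldsymbol{g}(\boldsymbol{b})$, and monotonicity of $\boldsymbol{f}$ applied to this pair gives $\boldsymbol{f}(\boldsymbol{g}(\boldsymbol{a}))\preceq\boldsymbol{f}(\boldsymbol{g}(\boldsymbol{b}))$. Your remark that $\boldsymbol{f}\circ\boldsymbol{g}$ is well defined because $\boldsymbol{g}$ maps into $[n]^k$ is a harmless point that the paper leaves implicit.
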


\begin{proof}
    It follows directly from the definition of monotone functions that
    $$
    \boldsymbol{x}\preceq \boldsymbol{y}\Rightarrow \boldsymbol{g}(\boldsymbol{x})\preceq \boldsymbol{g}(\boldsymbol{y})\Rightarrow \boldsymbol{f}(\boldsymbol{g}(\boldsymbol{x})) \preceq \boldsymbol{f}(\boldsymbol{g}(\boldsymbol{y})),
    $$ which shows $\boldsymbol{f}(\boldsymbol{g}(\cdot))$ is monotone.
\end{proof}

\subsection{Linear Algebra}

A central component of our analysis is bounding the spectral norms of certain matrices. For a matrix $\mathbf{A}\in\R^{n\times n}$, its spectral norm is defined by \[ \norm{\mathbf{A}} := \max_{\boldsymbol{x}\in\R^n:\,\norm{\boldsymbol{x}}_2=1} \norm{\mathbf{A}\boldsymbol{x}}_2. \]

We will use the following standard facts. If $\mathbf{A}$ is real symmetric, then $\norm{\mathbf{A}} = \sup_{\boldsymbol{x} : \norm{\boldsymbol{x}}_2 = 1} |\boldsymbol{x}^\top \mathbf{A}\boldsymbol{x}|$ and $\norm{\mathbf{A}} \leq \norm{\mathbf{A}}_1$, where $\norm{\cdot}_1$ is the induced $1$-norm.

The following two lemmas are useful for bounding the spectral norm of matrices.

\begin{lemma}[Entrywise monotonicity of the spectral norm] \label{lem:spectral-norm-monotonicity}
    Let $\mathbf{A}, \mathbf{B}\in \R^{n\times n}$ be two real symmetric matrices, satisfying $\mathbf{B}\geq \mathbf{A}\geq \mathbf{0}$ (where $\geq$ means entry-wise greater or equal). Then
    $$
    \norm{\mathbf{B}} \geq \norm{\mathbf{A}}.
    $$
\end{lemma}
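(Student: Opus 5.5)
The plan is to use the variational characterization of the spectral norm together with the nonnegativity of the entries. Since $\mathbf{A}$ is real symmetric, the fact recalled above gives $\norm{\mathbf{A}} = \sup_{\norm{\boldsymbol{x}}_2=1} |\boldsymbol{x}^\top \mathbf{A}\boldsymbol{x}|$, and the supremum is attained at some unit vector $\boldsymbol{x}$ because the unit sphere is compact. The idea is to replace $\boldsymbol{x}$ by its entrywise absolute value $|\boldsymbol{x}|$. This is still a unit vector, and it can only increase the quadratic form of a nonnegative matrix.

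The first step is the chain
$$
|\boldsymbol{x}^\top \mathbf{A}\boldsymbol{x}| = \Bigl|\sum_{i,j} A_{ij}x_ix_j\Bigr| \leq \sum_{i,j} A_{ij}|x_i||x_j|,
$$
which uses the triangle inequality and $A_{ij}\geq 0$. The second step is to apply $B_{ij}\geq A_{ij}$ and $|x_i||x_j|\geq 0$, which gives
$$
\sum_{i,j} A_{ij}|x_i||x_j| \leq \sum_{i,j} B_{ij}|x_i||x_j| = |\boldsymbol{x}|^\top \mathbf{B}|\boldsymbol{x}|.
$$
The third step uses that $\mathbf{B}$ is real symmetric and $\norm{|\boldsymbol{x}|}_2=1$, so this last quantity is at most $\sup_{\norm{\boldsymbol{y}}_2=1}|\boldsymbol{y}^\top\mathbf{B}\boldsymbol{y}| = \norm{\mathbf{B}}$. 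Chaining the three inequalities yields $\norm{\mathbf{A}}\leq\norm{\mathbf{B}}$.

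There is no real obstacle here. The only point needing care is that the quadratic-form characterization of the norm requires symmetry, and both matrices are assumed symmetric. Alternatively, the same argument goes through without symmetry by working with the bilinear form $\boldsymbol{x}^\top\mathbf{A}\boldsymbol{y}$, using $\norm{\mathbf{A}}=\sup_{\norm{\boldsymbol{x}}_2=\norm{\boldsymbol{y}}_2=1}|\boldsymbol{x}^\top\mathbf{A}\boldsymbol{y}|$; one replaces both $\boldsymbol{x}$ and $\boldsymbol{y}$ by their entrywise absolute values in the same way.
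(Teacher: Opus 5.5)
Your proposal is correct and is essentially the paper's own argument. The paper also passes from $\boldsymbol{v}$ to $|\boldsymbol{v}|$ in the quadratic-form characterization, uses $\mathbf{A}\geq\mathbf{0}$ and then $\mathbf{B}\geq\mathbf{A}$, and bounds the result by $\norm{\mathbf{B}}$. Your bilinear-form remark is also correct: it drops the symmetry hypothesis, which the paper does not need.
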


\begin{proof}
    For every unit vector $\boldsymbol{v}\in \R^n$, write $|\boldsymbol{v}|$ for the entry-wise absolute value of $\boldsymbol{v}$. Then since $\mathbf{A}, \mathbf{B}$ are real symmetric,
    $$
    \norm{\mathbf{A}} = \sup_{\boldsymbol{v} :  \norm{\boldsymbol{v}}_2 = 1}|\boldsymbol{v}^\top\mathbf{A}\boldsymbol{v}| \leq \sup_{\boldsymbol{v}:\norm{\boldsymbol{v}}_2 = 1}|\boldsymbol{v}|^\top\mathbf{A}|\boldsymbol{v}| \leq \sup_{\boldsymbol{v}:\norm{\boldsymbol{v}}_2 = 1}|\boldsymbol{v}|^\top\mathbf{B}|\boldsymbol{v}| \leq \norm{\mathbf{B}}.
    $$
\end{proof}

\begin{lemma}
    \label{lem:summation-spectral-upperbound} Let $\mathbf{A}_0,\ldots ,\mathbf{A}_{m - 1}\in\R^{n\times n}$ be real symmetric idempotent matrices, and let $\mathbf{B}:=\sum_{i=0}^{m - 1}\mathbf{A}_i$. Suppose that there exists $r\in(0,1)$ such that \[ \norm{\mathbf{A}_i\mathbf{A}_j} \leq r^{|i-j|} \qquad \text{for all } i,j\in[m]. \] Then $\norm{\mathbf{B}}\leq(1+r)/(1-r)$.
\end{lemma}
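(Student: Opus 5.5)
This is a Cotlar--Stein type estimate, and we can exploit the idempotence to make it short. Since $\mathbf{B}$ is real symmetric, it suffices to bound $\boldsymbol{v}^\top \mathbf{B}^2 \boldsymbol{v}$ for unit vectors $\boldsymbol{v}$, because $\norm{\mathbf{B}}^2 = \norm{\mathbf{B}^2} = \sup_{\norm{\boldsymbol{v}}_2=1}\boldsymbol{v}^\top\mathbf{B}^2\boldsymbol{v}$. Expanding, we get
\[
\boldsymbol{v}^\top \mathbf{B}^2\boldsymbol{v} = \sum_{i,j}\boldsymbol{v}^\top\mathbf{A}_i\mathbf{A}_j\boldsymbol{v}.
\]
Each $\mathbf{A}_i$ is a real symmetric idempotent, hence an orthogonal projection, so $\mathbf{A}_i = \mathbf{A}_i^2$. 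Writing $\boldsymbol{w}_i := \mathbf{A}_i\boldsymbol{v}$, the cross term becomes
\[
\boldsymbol{v}^\top\mathbf{A}_i\mathbf{A}_j\boldsymbol{v} = \boldsymbol{w}_i^\top \mathbf{A}_i\mathbf{A}_j \boldsymbol{w}_j \leq r^{|i-j|}\norm{\boldsymbol{w}_i}_2\norm{\boldsymbol{w}_j}_2 .
\]
Note also that $\norm{\boldsymbol{w}_i}_2^2 = \boldsymbol{v}^\top\mathbf{A}_i\boldsymbol{v}$.

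Set $a_i := \norm{\boldsymbol{w}_i}_2$ and $S := \sum_i a_i^2 = \boldsymbol{v}^\top\mathbf{B}\boldsymbol{v}$. The Toeplitz kernel $K_{ij} = r^{|i-j|}$ has induced $1$-norm at most $\sum_{d\in\mathbb{Z}} r^{|d|} = (1+r)/(1-r)$. Being symmetric, its spectral norm is bounded by the same quantity (a fact noted in the preliminaries). Therefore
\[
\boldsymbol{v}^\top\mathbf{B}^2\boldsymbol{v} \leq \boldsymbol{a}^\top K\boldsymbol{a} \leq \frac{1+r}{1-r}\,\boldsymbol{v}^\top \mathbf{B}\boldsymbol{v}.
\]

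To close the argument, observe that $\mathbf{B}$ is positive semidefinite as a sum of projections. Taking $\boldsymbol{v}$ to be a top eigenvector of $\mathbf{B}$ with eigenvalue $\lambda = \norm{\mathbf{B}} \ge 0$ in the inequality above gives $\lambda^2 \leq \frac{1+r}{1-r}\lambda$. Hence $\lambda \leq (1+r)/(1-r)$.

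The only delicate step is the insertion of the projections on both sides, $\mathbf{A}_i\mathbf{A}_j = \mathbf{A}_i(\mathbf{A}_i\mathbf{A}_j)\mathbf{A}_j$. This is what converts the operator-norm hypothesis into a bound weighted by $\norm{\mathbf{A}_i\boldsymbol{v}}_2\norm{\mathbf{A}_j\boldsymbol{v}}_2$ rather than by $\norm{\boldsymbol{v}}_2^2$. Without it we would only obtain the useless bound $m(1+r)/(1-r)$. The rest of the argument is the elementary Schur-test bound for the geometric Toeplitz kernel.
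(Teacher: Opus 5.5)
Your proof is correct and is essentially the paper's argument: the paper writes $\mathbf{B}=\mathcal{A}^\top\mathcal{A}$ with $\mathcal{A}\boldsymbol{v}=(\mathbf{A}_0\boldsymbol{v},\ldots,\mathbf{A}_{m-1}\boldsymbol{v})$ (using $\mathbf{A}_i^2=\mathbf{A}_i$), passes to the block Gram matrix $\mathcal{A}\mathcal{A}^\top=[\mathbf{A}_i\mathbf{A}_j]_{i,j}$, and bounds it by the scalar matrix $[\norm{\mathbf{A}_i\mathbf{A}_j}]_{i,j}$ and its induced $1$-norm. Your computation is this same bound evaluated on vectors $\boldsymbol{u}=\mathcal{A}\boldsymbol{v}$, with the eigenvector step applied to $\boldsymbol{v}^\top\mathbf{B}^2\boldsymbol{v}\leq\frac{1+r}{1-r}\boldsymbol{v}^\top\mathbf{B}\boldsymbol{v}$ in place of the identity $\norm{\mathcal{A}^\top\mathcal{A}}=\norm{\mathcal{A}\mathcal{A}^\top}$; it also proves by hand the ``standard block-matrix norm comparison'' that the paper only cites.
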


\begin{proof}
    Define the linear operator $\mathcal{A}\colon \R^n\rightarrow \R^{mn}: \mathcal{A}\boldsymbol{v} :=(\mathbf{A}_0\boldsymbol{v},\ldots , \mathbf{A}_{m - 1}\boldsymbol{v})^{\top}$. Then $\mathcal{A}^\top\mathcal{A} = \mathbf{B}$ since for any $\boldsymbol{u}, \boldsymbol{v}$, 
    $$\boldsymbol{u}^\top\mathcal{A}^\top\mathcal{A}\boldsymbol{v} = \sum_{i = 0}^{m - 1} \boldsymbol{u}^\top \mathbf{A}_i^2 \boldsymbol{v} = \boldsymbol{u}^\top\left(\sum_{i = 0}^{m - 1} \mathbf{A}_i\right)\boldsymbol{v} = \boldsymbol{u}^\top\mathbf{B}\boldsymbol{v}.
    $$

    Therefore, $\norm{\mathbf{B}} = \norm{\mathcal{A}^\top\mathcal{A}} = \norm{\mathcal{A}\mathcal{A}^\top}$. The operator $\mathcal{A}\mathcal{A}^{\top}$ is an $m\times m$ block matrix whose $(i,j)$-block is $\mathbf{A}_i\mathbf{A}_j$. Define $\mathbf{K}\in\R^{m\times m}$ by $K_{ij}:=\norm{\mathbf{A}_i\mathbf{A}_j}$. It is easy to see that $\mathbf{K}$ is symmetric. The standard block-matrix norm comparison gives $$\norm{\mathcal{A}\mathcal{A}^\top} \leq \norm{\mathbf{K}}\leq \norm{\mathbf{K}}_1 = \max_{j\in[m]}\sum_{i=0}^{m - 1} K_{ij} \leq 1+2\sum_{d=1}^{\infty}r^d= \frac{1 + r}{1 - r}.
    $$
\end{proof}

\subsection{Computational Model} \label{subsec:computation_model}

In this paper, we work in the quantum query model. Let $\Sigma$ be a finite alphabet equipped
with an abelian group operation $\oplus$, let $\Pi$ be a finite output
alphabet, let $I$ be an index set of cardinality $n$, and let $f\colon S\to\Pi$ be a partial function, where $S\subseteq \Sigma^I$. The input
$\boldsymbol{x}\in S$ is not given explicitly but is accessed through
the oracle $\mathbf{O}_{\boldsymbol{x}}$, defined by $\mathbf{O}_{\boldsymbol{x}}\ket{i,z} = \ket{i,z\oplus x_i}$ for every $i\in I$ and $z\in\Sigma$. 

A $T$-query quantum algorithm starts in a fixed state $\ket{\psi_0}$ that is
independent of $\boldsymbol{x}$ and applies input-independent unitary
operators $\mathbf{U}_0,\ldots ,\mathbf{U}_T$, interleaved with $T$ applications
of $\mathbf{O}_{\boldsymbol{x}}$. Its final state is therefore
\[
\ket{\psi_{\boldsymbol{x}}^T}
=
\mathbf{U}_T\mathbf{O}_{\boldsymbol{x}}
\mathbf{U}_{T-1}\cdots
\mathbf{O}_{\boldsymbol{x}}\mathbf{U}_0\ket{\psi_0},
\]
where the identity action of the oracle on the ancillary registers is
implicit. The algorithm then measures a designated output register whose
computational-basis states are indexed by $\Pi$.

The algorithm computes $f$ with error at most $\epsilon$ if, for every
$\boldsymbol{x}\in S$, the measurement outputs $f(\boldsymbol{x})$ with
probability at least $1-\epsilon$. The bounded-error quantum query
complexity of $f$, denoted by $Q_\epsilon(f)$, is the minimum nonnegative
integer $T$ for which such a $T$-query quantum algorithm exists.

\subsection{Spectral Adversary Method}
\label{subsec:prelim_adversary}

The quantum adversary methods are a family of powerful tools for deriving quantum query lower bounds for computing functions in the quantum query model.

The nonnegative spectral adversary method was introduced in \cite{HSAdversary}. It was proved to be equivalent to other nonnegative adversary methods in \cite{SSAdversary}. Lower bounds for many problems were established using the adversary methods, for instance, an $\Omega(\sqrt{n})$ lower bound for $\mathrm{AND}, \mathrm{OR}$ in \cite{AmbainisAndOr}, and an $\Omega(\log n)$ lower bound for ordered search in \cite{HPJ01BinarySearch}. The nonnegative spectral adversary method was proved to be weaker than the adversary method with negative weights in \cite{HLS07Negative}.

Throughout this section, let $\Pi, \Sigma$ denote two finite alphabets, let $I$ be an index set of cardinality $n$, and let $f\colon S\rightarrow \Pi$ be a partial function from $\Sigma^I$ to $\Pi$, where $S\subseteq \Sigma^I$. We define two types of matrices used in the argument as follows.

\begin{definition}[Nonnegative Adversary Matrix]
A matrix $\boldsymbol{\Gamma}\in \R_{\geq 0}^{|S|\times |S|}$ is called a \emph{nonnegative adversary matrix} if it is symmetric and satisfies the following property:
$$
\forall x, y, f(x)= f(y)\rightarrow \boldsymbol{\Gamma}[x, y] = 0.
$$
\end{definition}

\begin{definition}[Distinguisher Matrix]
For every $i\in I$, the $i$-th \emph{distinguisher matrix} for $f$ is $\boldsymbol{\Delta}_i \in \{0, 1\}^{|S|\times |S|}$ with entries:
$$
\Delta_i[x, y] = \mathbf{1}\{ x_i\ne y_i \}.
$$
\end{definition}

Any such $\boldsymbol{\Gamma}$ yields a lower bound on the quantum query complexity of computing $f$ in the quantum query model, as stated in the following theorem. Intuitively, a good adversary matrix should assign $\boldsymbol{\Gamma}[x, y]$ a magnitude that reflects the difficulty of distinguishing the two instances $x, y$.

\begin{theorem}[See Theorem 3.1 of \cite{SSAdversary}]
    For any nonnegative adversary matrix $\boldsymbol{\Gamma}$ for $f$, we have
    $$
    Q_\epsilon(f)\geq \left(1 - 2\sqrt{\epsilon(1 - \epsilon)}\right)\frac{\norm{\boldsymbol{\Gamma}}}{\max_{i\in I}\norm{\boldsymbol{\Gamma}\circ\boldsymbol{\Delta}_{i}}},
    $$
    where $\boldsymbol{\Delta}_i$ is the $i$-th distinguisher matrix for $f$ and $\circ$ denotes the Hadamard product.
\end{theorem}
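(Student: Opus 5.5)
The plan is the standard progress-measure (hybrid) argument, phrased spectrally. Since $\boldsymbol{\Gamma}$ is symmetric and entrywise nonnegative, Perron--Frobenius gives a unit principal eigenvector $\boldsymbol{\delta}\geq \mathbf 0$ with $\boldsymbol{\delta}^\top\boldsymbol{\Gamma}\boldsymbol{\delta}=\norm{\boldsymbol{\Gamma}}$. Fix a $T$-query algorithm with error at most $\epsilon$, and let $\ket{\psi_x^t}$ be its state just before the $(t+1)$-st oracle call on input $x$. Define the weighted progress
$$
W^t:=\sum_{x,y\in S}\boldsymbol{\Gamma}[x,y]\,\delta_x\delta_y\,\bigl|\langle\psi_x^t|\psi_y^t\rangle\bigr| .
$$
The proof has three steps: an initial value, a final bound, and a per-query bound. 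Together they give $T\cdot\max_i\norm{\boldsymbol{\Gamma}\circ\boldsymbol{\Delta}_i}\ge W^0-W^T$.

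\emph{Initial and final values.} The starting state does not depend on the input, so every inner product equals $1$ and $W^0=\boldsymbol{\delta}^\top\boldsymbol{\Gamma}\boldsymbol{\delta}=\norm{\boldsymbol{\Gamma}}$. The input-independent unitaries $\mathbf U_t$ preserve inner products, so only the oracle calls change $W$. At the end, $\boldsymbol{\Gamma}[x,y]\ne0$ forces $f(x)\ne f(y)$. Then the measurement must output two different answers, each with probability at least $1-\epsilon$, so the standard fidelity bound gives $|\langle\psi_x^T|\psi_y^T\rangle|\le2\sqrt{\epsilon(1-\epsilon)}$. Hence $W^T\le2\sqrt{\epsilon(1-\epsilon)}\,\norm{\boldsymbol{\Gamma}}$.

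\emph{One query.} Decompose $\ket{\psi_x^t}=\sum_i\ket{i}\ket{\alpha_{x,i}}$ by query register. On blocks with $x_i=y_i$ the two oracles act identically and the inner products agree. So
$$
\bigl|\langle\psi_x^t|\psi_y^t\rangle-\langle\psi_x^{t+1}|\psi_y^{t+1}\rangle\bigr|
=\Bigl|\sum_{i:x_i\ne y_i}\langle\alpha_{x,i}|(\mathbf I-\mathbf V_{xy,i})|\alpha_{y,i}\rangle\Bigr|
$$
for unitaries $\mathbf V_{xy,i}$. Weighting by $\boldsymbol{\Gamma}[x,y]\delta_x\delta_y$ and summing, the Cauchy--Schwarz step reads
$$
\sum_{x,y}\boldsymbol{\Gamma}[x,y]\Delta_i[x,y]\,\bigl(\delta_x\norm{\alpha_{x,i}}\bigr)\bigl(\delta_y\norm{\alpha_{y,i}}\bigr)
\le\norm{\boldsymbol{\Gamma}\circ\boldsymbol{\Delta}_i}\sum_x\delta_x^2\norm{\alpha_{x,i}}^2 .
$$
This step uses Lemma~\ref{lem:spectral-norm-monotonicity} to pass to absolute values, since $\boldsymbol{\Gamma}\circ\boldsymbol{\Delta}_i$ is symmetric and nonnegative. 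Summing over $i$ uses $\sum_i\norm{\alpha_{x,i}}^2=1$ and $\sum_x\delta_x^2=1$, giving a per-query loss of at most $c\cdot\max_i\norm{\boldsymbol{\Gamma}\circ\boldsymbol{\Delta}_i}$. Here $c$ is the constant bounding $\norm{\mathbf I-\mathbf V_{xy,i}}$ on the relevant subspace.

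\emph{Main obstacle.} The crude bound is $c=2$, which yields only $(1-2\sqrt{\epsilon(1-\epsilon)})/2$ times the ratio, not the stated bound. To reach $c=1$, I would first pass to the phase-oracle form via the Fourier transform on the abelian group $\Sigma$. I would then replace the progress measure by the real-part measure $\sum\boldsymbol{\Gamma}[x,y]\delta_x\delta_y\,\mathrm{Re}\langle\psi_x^t|\psi_y^t\rangle$, and try to bound only its one-sided decrease using $\mathrm{Re}(1-\omega)\le|1-\omega|^2/2$. Symmetrizing over the pair $(x,y)$ should then give a first-order loss of exactly one copy of $\norm{\boldsymbol{\Gamma}\circ\boldsymbol{\Delta}_i}$. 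If this symmetrization fails, I would instead import the exact constant from Theorem~3.1 of \cite{SSAdversary}, as the statement is cited. I would then only verify that the hypotheses there (symmetric, nonnegative $\boldsymbol{\Gamma}$ vanishing on equal outputs, and distinguishers $\boldsymbol{\Delta}_i$ as defined) coincide with ours.
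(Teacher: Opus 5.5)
\textbf{Where this leaves the proposal.} The paper gives no proof of this statement; it only cites it. Your $c=2$ argument is correct. With $u_i[x]:=\delta_x\norm{\alpha_{x,i}}\ge 0$, each query lowers $W^t$ by at most $2\sum_i u_i^{\mathsf T}(\boldsymbol{\Gamma}\circ\boldsymbol{\Delta}_i)u_i\le 2\max_i\norm{\boldsymbol{\Gamma}\circ\boldsymbol{\Delta}_i}$. This is a Rayleigh-quotient bound, so Lemma~\ref{lem:spectral-norm-monotonicity} is not needed. You therefore obtain
\[
Q_\epsilon(f)\ge\frac{1-2\sqrt{\epsilon(1-\epsilon)}}{2}\cdot\frac{\norm{\boldsymbol{\Gamma}}}{\max_i\norm{\boldsymbol{\Gamma}\circ\boldsymbol{\Delta}_i}},
\]
which is the standard form of the Barnum--Saks--Szegedy / \v{S}palek--Szegedy bound, including the factor $\tfrac12$.

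\textbf{The gap.} The gap is the step you flag as the main obstacle: reaching $c=1$ is impossible, because the inequality as printed is false. Take $\Sigma=\{0,1\}$ with XOR, index set $[4]$, $S=\{\boldsymbol{e}_0,\ldots,\boldsymbol{e}_3\}$, $f(\boldsymbol{e}_j)=j$, and $\boldsymbol{\Gamma}[\boldsymbol{e}_a,\boldsymbol{e}_b]=\mathbf{1}\{a\ne b\}$.
\begin{itemize}
\item $\norm{\boldsymbol{\Gamma}}=3$.
\item $\boldsymbol{\Gamma}\circ\boldsymbol{\Delta}_i$ is the adjacency matrix of the star $K_{1,3}$ centred at $\boldsymbol{e}_i$, so its norm is $\sqrt3$.
\item The printed bound would give $Q_\epsilon(f)\ge(1-2\sqrt{\epsilon(1-\epsilon)})\sqrt3>1$, hence $Q_\epsilon(f)\ge 2$, for every $\epsilon\le 0.04$.
\item Yet one Grover iteration on four items (phase kickback with a $\ket{-}$ ancilla, then the diffusion) maps the uniform superposition exactly to $\ket{j}$. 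So $Q_0(f)=1$.
\end{itemize}
The discrepancy is not a small-$n$ artifact. For search on $n$ items the ratio is $\sqrt{n-1}$, while Grover uses at most $\frac{\pi}{4}\sqrt n$ queries with error at most $1/n$. Hence the printed bound fails for every fixed $\epsilon\le 0.01$ once $n$ is large.

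\textbf{Why your repair cannot succeed.} On the unit circle, $\mathrm{Re}(1-\omega)=\tfrac12|1-\omega|^2$ holds with equality, so the real-part progress measure buys nothing. In the phase form of the Boolean oracle the relevant phase is $\omega=-1$. There the per-block change $2\langle\alpha_{x,i}|\alpha_{y,i}\rangle$ is genuinely first order with coefficient $2$.

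In the example above this is visible directly. With $\boldsymbol{\delta}$ uniform, the single Grover query takes $W$ from $W^0=3$ to $W^1=0$. That is a loss of $3$, which exceeds the $c=1$ per-query allowance $\max_i\norm{\boldsymbol{\Gamma}\circ\boldsymbol{\Delta}_i}=\sqrt3$.

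Your fallback of importing the constant from the cited source will not rescue the printed form either. The standard statement there has the factor $\tfrac12$, and any correct version must, by the counterexample.

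\textbf{Conclusion.} The statement should read $\frac{1-2\sqrt{\epsilon(1-\epsilon)}}{2}$, and your $c=2$ argument already proves that. Downstream, this halves the constant in Theorem~\ref{thm:tree-adversary}. It leaves the $\Omega(k\log n)$ and $\Omega(\log n)$ conclusions intact.
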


\section{Hard Instance} \label{sec:hard_instance}

The classical lower-bound constructions yield two main families of hard instances. For the first family, introduced in~\cite{BPRRandomizedLowerbound}, we design an efficient quantum algorithm using only $\widetilde{O}(\sqrt{k})$ queries by reducing these instances to the quantum wildcard-search problem~\cite{WildcardSearch}. The details are shown in Appendix~\ref{apdx:bpr25-upperbound}. For the second family~\cite{branzei2025tarski}, no quantum upper bound is currently known; however, its intricate combinatorial structure makes standard techniques for proving quantum query lower bounds difficult to apply. We therefore introduce a new family of hard instances whose structure is tailored to our spectral adversary method.

It was proved in~\cite{phillips2026quantum} that the query complexity of
$\textsc{Tarski}(n,k)$ is monotone in both parameters. In the nontrivial regime
$n,k\geq2$, we may therefore replace $k$ by the largest even integer not
exceeding it and restrict the grid width to $B_U+1$, where
$B_U:=\sum_{r=0}^{U}4^r$ and $U$ is maximal subject to $B_U+1\leq n$. These
replacements preserve $k$ and $\log n$ up to constant factors. Relabeling the
resulting parameters, we henceforth assume that $k$ is even and that $n-1=B:=\sum_{r=0}^{U}4^r=1+4+16+\cdots+4^U$. For a lattice point $\boldsymbol{v}$, the notation
$\norm{\boldsymbol{v}}$ denotes its $\ell_1$-norm
$\sum_{j\in[k]}v_j$.

At a high level, each hard instance is built around a monotone path, called its
\emph{spine} (Definition \ref{def:spine}), from $\boldsymbol{0}$ to $B\boldsymbol{1}$. Set $W := \norm{B\boldsymbol{1}}=kB$. Every lattice point is
projected onto the spine (Definition \ref{def:projection}) and then moved by one spine edge toward the central vertex (Definition \ref{def:hard_function}), which is the unique fixed point. The step lengths
$1,4,\ldots ,4^U$ provide $\Theta(\log n)$ geometric scales; the bound
$\sum_{4^r\leq A}4^r\leq\frac{4}{3}A$ is the reason for choosing powers of
four and will be used in the adversary analysis.

\begin{definition}[Tokens] \label{def:tokens}
Let $\textsf{T}:=\textsf{Tok}(n,k):=[k]\times[U+1]$. For each token
$t=(j_t,r_t)\in\textsf{T}$, call $j_t$ its \emph{direction} and $r_t$ its
\emph{layer}, and associate with it a vector $\boldsymbol{v}_t:=4^{r_t}\boldsymbol{e}_{j_t}$, where $\boldsymbol{e}_{j_t}$ is the $j_t$-th standard basis vector.
\end{definition}

The token set has size $N:=|\textsf{T}|=k(U+1)$. Since $k$ is even, write
$N=2m$. With each partial token permutation
$\boldsymbol{p}$, we associate two summary vectors. The \emph{position vector}
$\boldsymbol{c}(\boldsymbol{p})$ is the lattice point reached by applying the
moves in $\boldsymbol{p}$, whereas the \emph{layer-count vector}
$\boldsymbol{d}(\boldsymbol{p})$ records how many tokens from each layer occur
in $\boldsymbol{p}$. We define these two vectors formally as follows.

\begin{definition}[Position and layer-count vectors]
Let $0\leq \ell\leq N$ and
$\boldsymbol{p}=(p_0,\ldots ,p_{\ell-1})
\in\textsf{P}_{\textsf{T}}^\ell$. Define
\[
\boldsymbol{c}(\boldsymbol{p})
:=\sum_{i=0}^{\ell-1}\boldsymbol{v}_{p_i}\in[n]^k,
\]
and define $\boldsymbol{d}(\boldsymbol{p})\in[k+1]^{U+1}$ by
\[
\boldsymbol{d}(\boldsymbol{p})_r
:=\bigl|\{i\in[\ell]:r_{p_i}=r\}\bigr|,
\qquad r\in[U+1].
\]
\end{definition}

Every full token permutation orders all $N$ moves and hence determines a spine.

\begin{definition}[Spine] \label{def:spine}
Let $\pi=(\pi_0,\ldots ,\pi_{N-1})\in\textsf{P}_{\textsf{T}}$. For
$i\in[N+1]$, let $\pi_{<i}:=(\pi_0,\ldots ,\pi_{i-1})$, with
$\pi_{<0}:=\varepsilon$, and define $\boldsymbol{x}_i^\pi:=\boldsymbol{c}(\pi_{<i}).$ The spine associated with $\pi$ is
$\mathcal{X}_\pi:=\{\boldsymbol{x}_i^\pi:i\in[N+1]\}$. Its vertices are
distinct and satisfy
\[
\boldsymbol{0}=\boldsymbol{x}_0^\pi
\prec\boldsymbol{x}_1^\pi\prec\cdots
\prec\boldsymbol{x}_N^\pi=B\boldsymbol{1}.
\]
\end{definition}

We next project the lattice onto the spine. The auxiliary map $\rho_\pi$ records
the index of the projected vertex.

\begin{definition}[Projection onto a spine] \label{def:projection}
For $\pi\in\textsf{P}_{\textsf{T}}$ and $\boldsymbol{v}\in[n]^k$, define
\[
\rho_\pi(\boldsymbol{v}):=
\begin{cases}
\max\{i\in[N+1]:\boldsymbol{x}_i^\pi\preceq\boldsymbol{v}\},
    & \norm{\boldsymbol{v}}\leq W/2,\\
\min\{i\in[N+1]:\boldsymbol{v}\preceq\boldsymbol{x}_i^\pi\},
    & \norm{\boldsymbol{v}}>W/2.
\end{cases}
\]
The projection $\Pi_\pi\colon[n]^k\to\mathcal{X}_\pi$ is $\Pi_\pi(\boldsymbol{v}):=\boldsymbol{x}_{\rho_\pi(\boldsymbol{v})}^\pi.$
The two index sets above are nonempty because they contain $0$ and $N$,
respectively.
\end{definition}

Define $h\colon[N+1]\to[N+1]$ by
\[
h(i):=
\begin{cases}
i+1, & i<m,\\
m,   & i=m,\\
i-1, & i>m.
\end{cases}
\]
Thus, $h$ moves every spine index one step toward $m$ and fixes $m$.

\begin{definition}[Hard function] \label{def:hard_function}
For $\pi\in\textsf{P}_{\textsf{T}}$, define
$g_\pi\colon\mathcal{X}_\pi\to\mathcal{X}_\pi$ by $g_\pi(\boldsymbol{x}_i^\pi):=\boldsymbol{x}_{h(i)}^\pi$, $i\in[N+1]$, and define $\boldsymbol{f}_\pi\colon[n]^k\to[n]^k$ by
\[
\boldsymbol{f}_\pi:=g_\pi\circ\Pi_\pi.
\]
\end{definition}

For the adversary construction, we restrict attention to permutations whose
prefix and reversed suffix pair tokens from the same layer. For
$\pi\in\textsf{P}_{\textsf{T}}$ and $0\leq\ell\leq m$, write $\boldsymbol{ps}_\ell(\pi)
:=\bigl(\boldsymbol{p}_\ell(\pi),\boldsymbol{q}_\ell(\pi)\bigr),$ where $\boldsymbol{p}_\ell(\pi):=(\pi_0,\ldots ,\pi_{\ell-1})$ records the length-$\ell$ prefix of $\pi$,
$\boldsymbol{q}_\ell(\pi):=(\pi_{N-1},\ldots ,\pi_{N-\ell})$ records the length-$\ell$ suffix of $\pi$ in reversed order. Both sequences are empty when $\ell=0$.

\begin{definition}[Balanced prefix--suffix pairs]
For $0\leq\ell\leq m$, define
\[
\textsf{SPair}_{\textsf{T}}^\ell
:=\left\{(\boldsymbol{p},\boldsymbol{q})
\in\textsf{PPair}_{\textsf{T}}^{\ell,\ell}:
r_{p_i}=r_{q_i}\text{ for every }i\in[\ell]\right\}.
\]
\end{definition}

\begin{definition}[Balanced permutations] \label{def:balanced-permutations}
Define $\textsf{BP}_{\textsf{T}}
:=\left\{\pi\in\textsf{P}_{\textsf{T}}:
\boldsymbol{ps}_m(\pi)\in\textsf{SPair}_{\textsf{T}}^m\right\}.$

\end{definition}

If $\pi$ is balanced, each half contains exactly $k/2$ tokens from every
layer. Consequently, $\norm{\boldsymbol{x}_m^\pi}
=\sum_{r=0}^{U}4^rk/2
=W/2$.

\begin{definition}[Hard instance family] \label{def:hard-instance-family}
    Define $\mathcal{F}_{n, k} = \{\boldsymbol{f}_\pi : \pi \in \textsf{BP}_{\textsf{T}}\}.$
\end{definition}

The map $\pi\mapsto \boldsymbol{f}_\pi$ is injective. Indeed, the orbit of
$\boldsymbol{0}$ under $\boldsymbol{f}_\pi$ recovers
$\boldsymbol{x}_0^\pi,\ldots ,\boldsymbol{x}_m^\pi$, while the orbit of
$B\boldsymbol{1}$ recovers
$\boldsymbol{x}_N^\pi,\ldots ,\boldsymbol{x}_m^\pi$. The difference between each pair of consecutive vertices on the spine then identifies the corresponding token. 

To illustrate the definitions above, we visualize a hard instance on $[22]^2$ in Example \ref{eg:function-21-2}.

\begin{example} \label{eg:function-21-2}
    \normalfont
    Consider $k = 2$ and $n = 1 + \sum_{r = 0}^{U} 4^r = 22$, where $U = 2$. Then $\textsf{Tok}(n, k) = [2]\times [3]$. Take the following permutation of $\textsf{Tok}(n, k)$:
    $$
    \pi = [(1, 2), (0, 0), (1, 1), (0, 1), (1, 0), (0, 2)].
    $$

    Then $\pi$ is balanced since $r_{\pi_0} = r_{\pi_5} = 2, r_{\pi_1} = r_{\pi_4} = 0$ and $r_{\pi_2} = r_{\pi_3} = 1$. The six vectors corresponding to $\pi_0, \ldots, \pi_5$ are
    $$
    16\boldsymbol{e}_1, ~\boldsymbol{e}_0, ~4\boldsymbol{e}_1, ~4\boldsymbol{e}_0, ~\boldsymbol{e}_1, ~16\boldsymbol{e}_0.
    $$
    Taking prefix sums of this sequence gives the spine associated with $\pi$:
    $$
    \mathcal{X}_\pi = \{(0, 0), (0, 16), (1, 16), (1, 20), (5, 20), (5, 21), (21, 21)\}.
    $$
    We represent the $\boldsymbol{x}\rightarrow \boldsymbol{f}(\boldsymbol{x})$ relationship on the spine by black arrows. For other points on the grid, we color the $\boldsymbol{x}\rightarrow \boldsymbol{f}(\boldsymbol{x})$ arrows according to the color scale: the color intensity encodes the distance of $\boldsymbol{f}_\pi(\boldsymbol{x})$ from the fixed point; darker color indicates smaller distance. The unique fixed point is marked by $\color{red} \star$. The resulting function is illustrated in Figure \ref{fig:function-21-2}.

    \begin{figure}[H]
        \centering
        \includegraphics[width=0.5\linewidth]{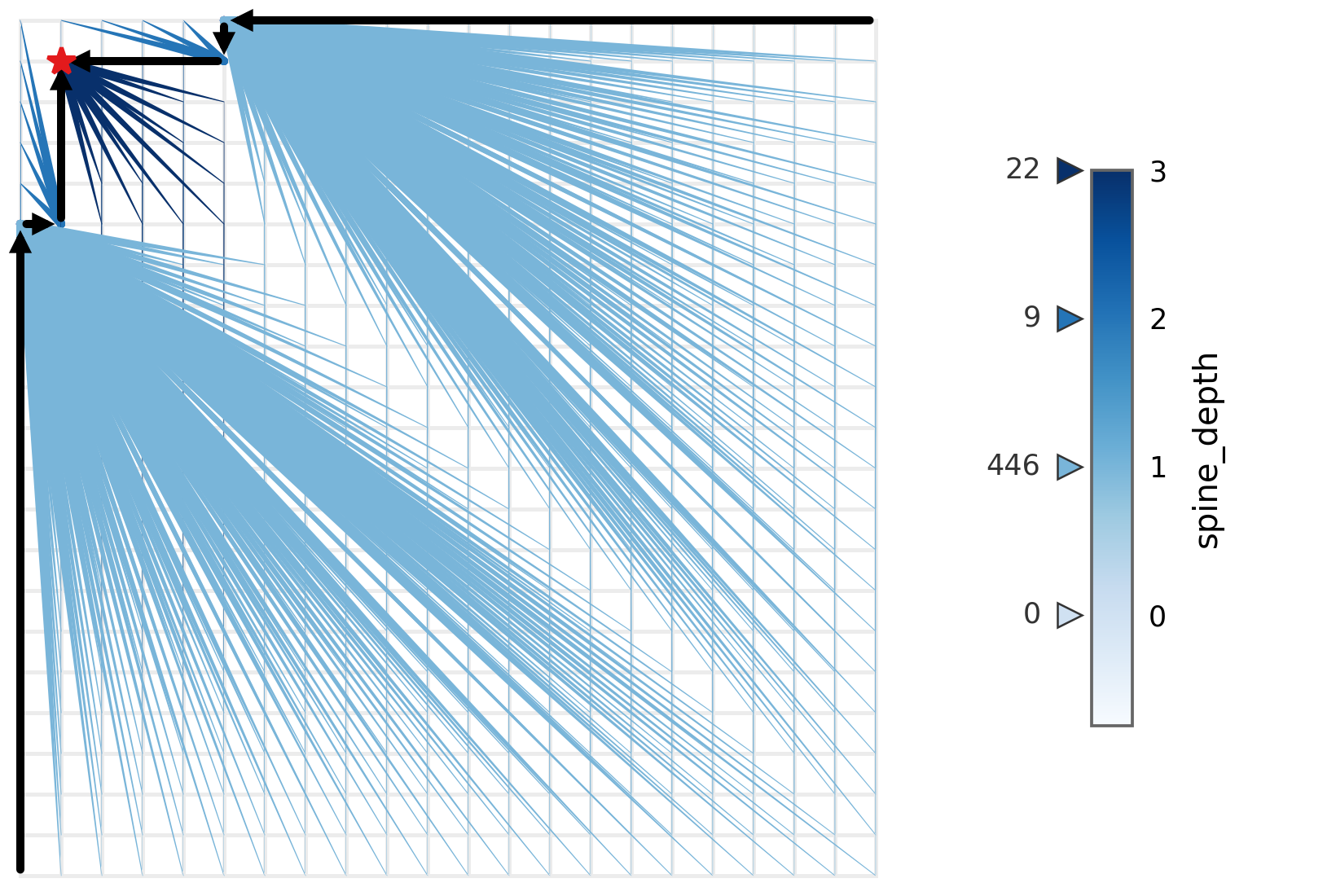}
        \caption{Visualization of a function on $[22]^2$.}
        \label{fig:function-21-2}
    \end{figure}

    Each number to the left of the color scale counts the number of the inputs $\boldsymbol{x}$ for which the distance from $\boldsymbol{f}(\boldsymbol{x})$ to the endpoint of the spine equals the value shown to the right of the color scale.
\end{example}

To illustrate the geometric structure of our hard instance, we visualize a hard instance on $[2]^8$ in Example \ref{eg:function-2-8}.

\begin{example} \label{eg:function-2-8}
    \normalfont
    Consider the token permutation 
    $$
    \pi = [(0, 0), (3, 0), (2, 0), (7, 0), (5, 0), (6, 0), (4, 0), (1, 0)].
    $$

    The function is visualized in the same manner as described in Example \ref{eg:function-21-2}, and the resulting visualization is shown in Figure \ref{fig:function-2-8}.

    \begin{figure}[H]
        \centering
        \includegraphics[width=0.5\linewidth]{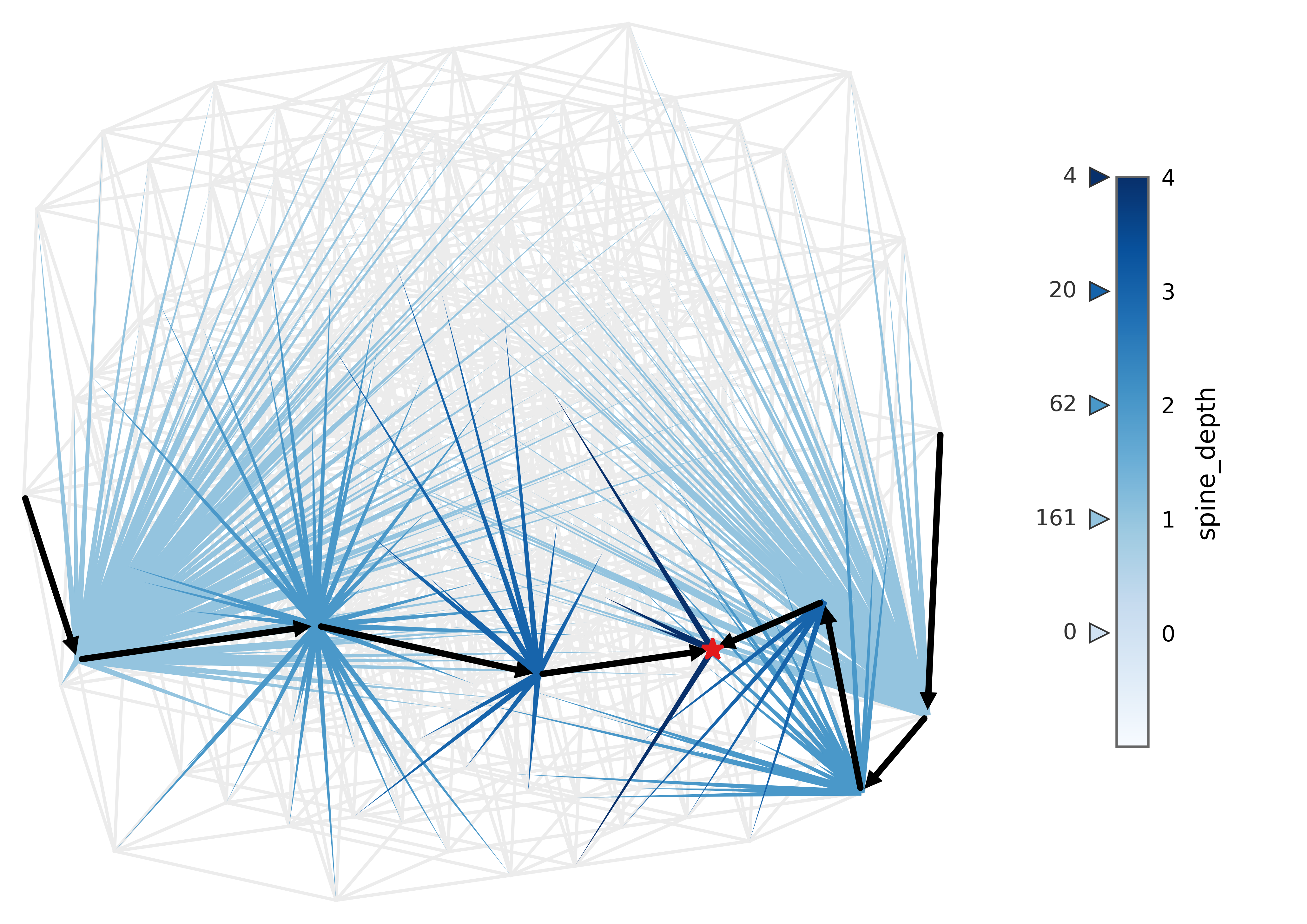}
        \caption{Visualization of a function on $[2]^8$.}
        \label{fig:function-2-8}
    \end{figure}

    The statistics on the right suggest that the number of points projected to successive depths of the spine decreases geometrically. This indicates that most points reveal only a small amount of information about the spine, consistent with the intuition that its structure becomes increasingly difficult to distinguish closer to the fixed point.
\end{example}

We now verify that the construction gives valid Tarski instances with a unique
fixed point.

\begin{lemma} \label{lem:unique_fixed_point}
For every $\pi\in\textsf{P}_{\textsf{T}}$, the unique fixed point of $\boldsymbol{f}_\pi$ is
$\boldsymbol{x}_m^\pi$.
\end{lemma}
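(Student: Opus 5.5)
The plan is to compute $\boldsymbol{f}_\pi$ explicitly on the spine and then to observe that any fixed point must lie on the spine. The basic ingredient is that the spine is a strict chain. Each token vector $\boldsymbol{v}_t=4^{r_t}\boldsymbol{e}_{j_t}$ is nonzero and entrywise nonnegative, so $\boldsymbol{x}_{i+1}^\pi=\boldsymbol{x}_i^\pi+\boldsymbol{v}_{\pi_i}\succ\boldsymbol{x}_i^\pi$, as recorded in Definition~\ref{def:spine}. Consequently, for $i,j\in[N+1]$ we have $\boldsymbol{x}_i^\pi\preceq\boldsymbol{x}_j^\pi$ if and only if $i\leq j$: if $i>j$ then $\boldsymbol{x}_j^\pi\prec\boldsymbol{x}_i^\pi$, and antisymmetry rules out $\boldsymbol{x}_i^\pi\preceq\boldsymbol{x}_j^\pi$. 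In particular the vertices are pairwise distinct.

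\textbf{Step 1: $\rho_\pi$ restricted to the spine.} From the chain property, $\rho_\pi(\boldsymbol{x}_j^\pi)=j$ for every $j\in[N+1]$, whichever case of Definition~\ref{def:projection} applies.
\begin{itemize}
\item If $\norm{\boldsymbol{x}_j^\pi}\leq W/2$, then $\max\{i:\boldsymbol{x}_i^\pi\preceq\boldsymbol{x}_j^\pi\}=\max\{i:i\leq j\}=j$.
\item Otherwise, $\min\{i:\boldsymbol{x}_j^\pi\preceq\boldsymbol{x}_i^\pi\}=\min\{i:i\geq j\}=j$.
\end{itemize}
This matters because $\pi$ is arbitrary here, not necessarily balanced, so we cannot know which side of $W/2$ the point $\boldsymbol{x}_m^\pi$ lies on; the argument is insensitive to it. Hence $\boldsymbol{f}_\pi(\boldsymbol{x}_j^\pi)=g_\pi(\boldsymbol{x}_j^\pi)=\boldsymbol{x}_{h(j)}^\pi$.

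\textbf{Step 2: existence.} Taking $j=m$ gives $\boldsymbol{f}_\pi(\boldsymbol{x}_m^\pi)=\boldsymbol{x}_{h(m)}^\pi=\boldsymbol{x}_m^\pi$, so $\boldsymbol{x}_m^\pi$ is a fixed point.

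\textbf{Step 3: uniqueness.} Suppose $\boldsymbol{f}_\pi(\boldsymbol{v})=\boldsymbol{v}$. The image of $\boldsymbol{f}_\pi=g_\pi\circ\Pi_\pi$ lies in $\mathcal{X}_\pi$, so $\boldsymbol{v}=\boldsymbol{x}_j^\pi$ for some $j$. By Step 1, $\boldsymbol{x}_{h(j)}^\pi=\boldsymbol{x}_j^\pi$. Distinctness of the spine vertices then forces $h(j)=j$, and by the definition of $h$ this holds only for $j=m$.

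The only point requiring any care is the chain property: strictness of the increments, and the observation that the projection formula returns $j$ on $\boldsymbol{x}_j^\pi$ in both cases. Everything else is direct unwinding of the definitions, so I expect no real obstacle.
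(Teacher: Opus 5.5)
Your proof is correct and follows the same route as the paper: the image of $\boldsymbol{f}_\pi$ lies in $\mathcal{X}_\pi$, on the spine $\boldsymbol{f}_\pi(\boldsymbol{x}_i^\pi)=\boldsymbol{x}_{h(i)}^\pi$, and distinctness of the spine vertices forces $i=m$. Your explicit check that $\rho_\pi(\boldsymbol{x}_j^\pi)=j$ in both cases of the projection, with no use of balance, fills in a step the paper leaves implicit.
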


\begin{proof}
Since the image of $\boldsymbol{f}_\pi$ is contained in $\mathcal{X}_\pi$, every fixed
point of $\boldsymbol{f}_\pi$ must lie on the spine. For every $i\in[N+1]$, $\boldsymbol{f}_\pi(\boldsymbol{x}_i^\pi)=\boldsymbol{x}_{h(i)}^\pi$. This equals $\boldsymbol{x}_i^\pi$ if and only if $i=m$. Therefore,
$\boldsymbol{x}_m^\pi$ is the unique fixed point of $\boldsymbol{f}_\pi$.
\end{proof}

\begin{lemma}\label{lem:monotonicity}
For every $\pi\in\textsf{P}_{\textsf{T}}$, the function $\boldsymbol{f}_\pi$ is monotone.
\end{lemma}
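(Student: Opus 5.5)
The plan is to factor $\boldsymbol{f}_\pi=g_\pi\circ\Pi_\pi$ through the index set $[N+1]$. We have $\boldsymbol{f}_\pi(\boldsymbol{v})=\boldsymbol{x}^\pi_{h(\rho_\pi(\boldsymbol{v}))}$, and the spine is a strict chain $\boldsymbol{x}_0^\pi\prec\cdots\prec\boldsymbol{x}_N^\pi$ (Definition~\ref{def:spine}), so $i\le j$ implies $\boldsymbol{x}_i^\pi\preceq\boldsymbol{x}_j^\pi$. It therefore suffices to prove two facts about integer maps:
\begin{itemize}
\item $h$ is nondecreasing on $[N+1]$;
\item $\rho_\pi$ is order-preserving, i.e.\ $\boldsymbol{v}\preceq\boldsymbol{w}$ implies $\rho_\pi(\boldsymbol{v})\le\rho_\pi(\boldsymbol{w})$.
\end{itemize}
Composing these with the order embedding $i\mapsto\boldsymbol{x}_i^\pi$ gives monotonicity of $\boldsymbol{f}_\pi$. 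This is in the spirit of Lemma~\ref{lem:composition_monotonicity}, but carried out on indices rather than on the grid.

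\emph{Monotonicity of $h$.} This is a short check. For $i<m$ we have $h(i)=i+1\le m$, for $i>m$ we have $h(i)=i-1\ge m$, and $h(m)=m$. On each side $h$ is a shift, hence nondecreasing. Moreover every value on the left side is at most $m$ and every value on the right side is at least $m$, so $h$ is nondecreasing globally. This argument does not use balancedness, which matches the statement being for all $\pi\in\textsf{P}_{\textsf{T}}$.

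\emph{Monotonicity of $\rho_\pi$.} Fix $\boldsymbol{v}\preceq\boldsymbol{w}$. Since the $\ell_1$-norm is monotone under $\preceq$, we have $\norm{\boldsymbol{v}}\le\norm{\boldsymbol{w}}$. Hence the case $\norm{\boldsymbol{v}}>W/2\ge\norm{\boldsymbol{w}}$ cannot occur, and three cases remain.
\begin{itemize}
\item \emph{Both norms at most $W/2$.} Any $i$ with $\boldsymbol{x}_i^\pi\preceq\boldsymbol{v}$ also satisfies $\boldsymbol{x}_i^\pi\preceq\boldsymbol{w}$. So the index set defining $\rho_\pi(\boldsymbol{v})$ is contained in that for $\boldsymbol{w}$, and the maxima compare correctly.
\item \emph{Both norms exceed $W/2$.} Any $i$ with $\boldsymbol{w}\preceq\boldsymbol{x}_i^\pi$ also satisfies $\boldsymbol{v}\preceq\boldsymbol{x}_i^\pi$. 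So the set for $\boldsymbol{w}$ is contained in the set for $\boldsymbol{v}$, and the minima compare correctly.
\item \emph{Mixed case $\norm{\boldsymbol{v}}\le W/2<\norm{\boldsymbol{w}}$.} Write $i=\rho_\pi(\boldsymbol{v})$ and $j=\rho_\pi(\boldsymbol{w})$. Then $\boldsymbol{x}_i^\pi\preceq\boldsymbol{v}$ gives $\norm{\boldsymbol{x}_i^\pi}\le W/2$, and $\boldsymbol{w}\preceq\boldsymbol{x}_j^\pi$ gives $\norm{\boldsymbol{x}_j^\pi}>W/2$. Norms along the spine are strictly increasing in the index, since each token adds a positive vector. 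Hence $\norm{\boldsymbol{x}_i^\pi}<\norm{\boldsymbol{x}_j^\pi}$ forces $i<j$.
\end{itemize}

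The main point needing care is this mixed case. In it the two projections are defined by different rules (greatest vertex below versus least vertex above), so no containment of index sets is available. The threshold $W/2$ on the $\ell_1$-norm is exactly what separates the two. Everything else is a direct unwinding of the definitions.
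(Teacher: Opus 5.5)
Your proposal is correct and takes essentially the same route as the paper: the same three-case argument that $\rho_\pi$ is order-preserving, followed by the facts that $h$ is nondecreasing and that $i\le j$ implies $\boldsymbol{x}_i^\pi\preceq\boldsymbol{x}_j^\pi$. The only difference is cosmetic and lies in the mixed case: you separate the two indices by comparing $\ell_1$-norms across the threshold $W/2$, whereas the paper uses the chain $\boldsymbol{x}_{\rho_\pi(\boldsymbol{u})}^\pi\preceq\boldsymbol{u}\preceq\boldsymbol{v}\preceq\boldsymbol{x}_{\rho_\pi(\boldsymbol{v})}^\pi$ together with the strict increase of the spine, and both arguments are valid.
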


\begin{proof}
Let $\boldsymbol{u},\boldsymbol{v}\in[n]^k$ satisfy
$\boldsymbol{u}\preceq\boldsymbol{v}$. Then
$\norm{\boldsymbol{u}}\leq\norm{\boldsymbol{v}}$. We first show that $\rho_\pi(\boldsymbol{u})\leq\rho_\pi(\boldsymbol{v})$.

If both points lie in the lower half of the lattice, every spine vertex below
$\boldsymbol{u}$ is also below $\boldsymbol{v}$, so the inequality follows from
the two maxima defining $\rho_\pi$. If both lie in the upper half, every spine
vertex above $\boldsymbol{v}$ is also above $\boldsymbol{u}$, so the inequality
follows from the two minima. In the remaining case,
$\norm{\boldsymbol{u}}\leq W/2<\norm{\boldsymbol{v}}$, and $\boldsymbol{x}_{\rho_\pi(\boldsymbol{u})}^\pi
\preceq\boldsymbol{u}\preceq\boldsymbol{v}
\preceq\boldsymbol{x}_{\rho_\pi(\boldsymbol{v})}^\pi$, which gives the same inequality because the spine is strictly increasing.

The map $h$ is nondecreasing. Therefore, $h(\rho_\pi(\boldsymbol{u}))
\leq h(\rho_\pi(\boldsymbol{v})),$ and the ordering of the spine yields
$\boldsymbol{f}_\pi(\boldsymbol{u})\preceq \boldsymbol{f}_\pi(\boldsymbol{v})$.
\end{proof}

\section{Tree-Filtration Adversary Method} \label{sec:adversary_method}

\subsection{Introduction to the General Framework} \label{subsec:general_framework}
We introduce the Tree-Filtration Adversary Method in this section. From a high-level perspective, our method requires:

\begin{enumerate}
    \item All hard instances are organized as an equal-depth (Definition \ref{def:equal-depth-tree}) \emph{filtration tree} $\mathcal{T}$ (Definition \ref{def:decision-tree}), with all leaves having depth $L$. Every leaf represents a hard instance. Every non-leaf node represents a computationally indistinguishable subset of hard instances. This tree structure should provide a \emph{simplified difference criterion} $\hat{D}(u)$ on each node (Definition \ref{def:simplified_difference_criterion}) for determining whether $\boldsymbol{x}^1_i$ and $\boldsymbol{x}^2_i$ differ for $\boldsymbol{x}^1, \boldsymbol{x}^2\in \mathcal{F}(u)$ and $i\in I$. 
    \item There exists a \emph{distinguishing random walk} $\mathcal{A}$ (Definition \ref{def:distinguishing-random-walk}), such that:
    \begin{enumerate}
        \item The distribution of the $f$-values in any subtree is anti-concentrated (Definition \ref{def:anti-concentration}). Intuitively, the output in the subtree rooted at $u$ remains dispersed even after the information represented by $u$ is revealed.
        \item For any fixed $i\in I$, the probability that $\mathcal{A}$ reaches a node $v$ with $i\in \hat{D}(v)$ decays exponentially with respect to the depth (Definition \ref{def:geometric-indistinguishable}). Intuitively, the distinguishing power of any fixed query decays geometrically as the walk descends the tree.
    \end{enumerate}
\end{enumerate}

Then, the adversary matrix and an adversary bound can be derived from our construction in Definition \ref{def:adversary-matrix}. Next, we will formalize the above intuitions step by step. 

\begin{definition}[Equal depth tree] \label{def:equal-depth-tree}
A rooted tree $\mathcal{T}$ is \emph{$L$-equal-depth} if
$\mathrm{dep}(\ell)=L$ for every $\ell\in\mathcal{L}(\mathcal{T})$.
\end{definition}

\begin{definition}[Hard-instance filtration tree] \label{def:decision-tree}
Let $S\subseteq\Sigma^I$, and let $f\colon S\to\Pi$ be a possibly
partial function. For a family $\mathcal{F}\subseteq S$ of hard instances, an $L$-equal-depth rooted tree $\mathcal{T}$ is a \emph{hard-instance filtration tree} of $\mathcal{F}$ if there is a bijection $\mathrm{ins}\colon\mathcal{L}(\mathcal{T})\to\mathcal{F}$. Thus,
$\mathrm{ins}(\ell)$ is the instance represented by the leaf $\ell$.
\end{definition}

Throughout the remainder of this subsection, fix finite alphabets $\Sigma$ and
$\Pi$, an index set $I$ with cardinality $n$, a promise set $S\subseteq\Sigma^I$, a function $f\colon S\to\Pi$, a
hard-instance family $\mathcal{F}\subseteq S$, and an $L$-equal-depth
hard-instance filtration tree
$\mathcal{T}=(\mathcal{V},\mathcal{E},r_{\mathcal{T}})$ for $\mathcal{F}$.
For $u\in\mathcal{V}$, write
$\mathcal{F}(u):=\{\mathrm{ins}(\ell):
\ell\in\mathcal{L}(\mathrm{Sub}(u))\}$ for the instances represented below
$u$.

The tree records how instances are grouped. To connect this hierarchy with
queries, we first record the coordinates on which two instances differ.

\begin{definition}[Difference criterion]
The \emph{difference criterion} of $f$ is the map
$D\colon S\times S\to\mathcal{P}(I)$ defined by
$D(x,y):=\{i\in I:x_i\ne y_i\}$.
\end{definition}

The exact set $D(x,y)$ depends on a pair of leaves. The following node-level
upper bound is easier to use and is required to shrink along the filtration.

\begin{definition}[Simplified difference criterion] \label{def:simplified_difference_criterion}
A map $\hat{D}\colon\mathcal{V}\to\mathcal{P}(I)$ is a
\emph{simplified difference criterion} on $\mathcal{T}$ if
\begin{enumerate}[label=(\roman*)]
    \item $D(x,y)\subseteq\hat{D}(u)$ for every $u\in\mathcal{V}$ and all $x,y\in\mathcal{F}(u)$; and
    \item $\hat{D}$ is \emph{nested} along $T$, meaning that $\hat{D}(v)\subseteq\hat{D}(u)$ whenever $v$ is a descendant of $u$.
\end{enumerate}

\end{definition}

Thus, $\hat{D}(u)$ contains every coordinate on which two instances below
$u$ may differ, while the possible difference sets can only decrease as one
moves from the root toward a leaf.

We next equip the filtration tree with a probability distribution on each
level.

\begin{definition}[Distinguishing random walk] \label{def:distinguishing-random-walk}
A \emph {distinguishing random walk} $\mathcal{A}$ is a
top-down random walk on $\mathcal{T}$. We write
$\mathcal{A}\xrightarrow{t}u$ for the event that the walk reaches $u$ at
depth $t$. For every non-leaf node $u$ and every $s\in\mathrm{ch}(u)$, let
\[
P_{\mathcal{A}}(s\mid u)
:=
\Pr[\mathcal{A}\xrightarrow{\mathrm{dep}(u)+1}s
\mid \mathcal{A}\xrightarrow{\mathrm{dep}(u)}u].
\]
The transition probabilities are required to satisfy
\begin{equation}
P_{\mathcal{A}}(s\mid u)>0,
\qquad
\sum_{s\in\mathrm{ch}(u)}P_{\mathcal{A}}(s\mid u)=1,
\qquad
\Pr[\mathcal{A}\xrightarrow{0}r_{\mathcal{T}}]=1.
\label{eq:pda-sum-1}
\end{equation}
\end{definition}

Strict positivity makes every node reachable, so all conditional
probabilities used below are well defined.

\begin{definition}[Induced distribution] \label{def:induced-distribution}
Let $r_{\mathcal{T}}=u_0\to u_1\to\cdots\to u_d=u$ be the unique path from
the root to $u$. The probability induced by $\mathcal{A}$ is
\begin{equation}
\Pr_{\mathcal{A}}[u]
:=
\prod_{j=1}^{d}P_{\mathcal{A}}(u_j\mid u_{j-1}),
\qquad
\Pr_{\mathcal{A}}[r_{\mathcal{T}}]:=1.
\label{eq:induced-distribution}
\end{equation}
\end{definition}

The mass of a node is precisely the total mass of the leaves below it.

\begin{lemma}\label{lem:subtree-mass}
For every $u\in\mathcal{V}$,
\[
\Pr_{\mathcal{A}}[u]
=
\sum_{\ell\in\mathcal{L}(\mathrm{Sub}(u))}
\Pr_{\mathcal{A}}[\ell].
\]
\end{lemma}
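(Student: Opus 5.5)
We argue by backward induction on the height of $u$, i.e.\ on $L-\mathrm{dep}(u)$, which is well defined because $\mathcal{T}$ is $L$-equal-depth. The key identity is the one-step recursion
\[
\Pr_{\mathcal{A}}[s]=\Pr_{\mathcal{A}}[u]\,P_{\mathcal{A}}(s\mid u)\qquad\text{for every non-leaf }u\text{ and every }s\in\mathrm{ch}(u).
\]
It follows directly from Definition~\ref{def:induced-distribution}: the root-to-$s$ path is the root-to-$u$ path extended by the edge $u\to s$, so the product defining $\Pr_{\mathcal{A}}[s]$ is the product defining $\Pr_{\mathcal{A}}[u]$ times one extra factor. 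When $u$ is the root, the empty product is $1$, which matches the convention $\Pr_{\mathcal{A}}[r_{\mathcal{T}}]:=1$.

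In the base case $u$ is a leaf. Then $\mathcal{L}(\mathrm{Sub}(u))=\{u\}$, so the right-hand side is $\Pr_{\mathcal{A}}[u]$ and the claim holds trivially.

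For the inductive step, let $u$ be a non-leaf node and assume the claim for all its children. The leaves of $\mathrm{Sub}(u)$ are partitioned as the disjoint union $\bigsqcup_{s\in\mathrm{ch}(u)}\mathcal{L}(\mathrm{Sub}(s))$, since each leaf below $u$ has a unique ancestor at depth $\mathrm{dep}(u)+1$. Splitting the sum along this partition, then applying the induction hypothesis and the recursion above, gives
\[
\sum_{\ell\in\mathcal{L}(\mathrm{Sub}(u))}\Pr_{\mathcal{A}}[\ell]=\sum_{s\in\mathrm{ch}(u)}\Pr_{\mathcal{A}}[s]=\Pr_{\mathcal{A}}[u]\sum_{s\in\mathrm{ch}(u)}P_{\mathcal{A}}(s\mid u)=\Pr_{\mathcal{A}}[u].
\]
The last equality is the normalization condition in \eqref{eq:pda-sum-1}.

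There is no real obstacle here. The only points needing care are the disjointness of the leaf partition, which follows from uniqueness of paths in a tree, and the root convention for the empty product. Positivity of the transition probabilities is not needed for this identity.
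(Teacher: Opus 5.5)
Your proof is correct and follows the paper's argument: induction on $L-\mathrm{dep}(u)$, grouping the leaves by child subtree, applying the induction hypothesis, factoring out $\Pr_{\mathcal{A}}[u]$, and finishing with the normalization in \eqref{eq:pda-sum-1}. One small improvement over the paper: you derive the factorization $\Pr_{\mathcal{A}}[s]=\Pr_{\mathcal{A}}[u]\,P_{\mathcal{A}}(s\mid u)$ explicitly from Definition~\ref{def:induced-distribution}, whereas the paper attributes that step to \eqref{eq:pda-sum-1}.
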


\begin{proof}
We argue by induction on $L-\mathrm{dep}(u)$. The claim is immediate when
$u$ is a leaf. Otherwise, the induction hypothesis and
Equation~\eqref{eq:pda-sum-1} give
\[
\begin{aligned}
\sum_{\ell\in\mathcal{L}(\mathrm{Sub}(u))}
\Pr_{\mathcal{A}}[\ell]
&=
\sum_{s\in\mathrm{ch}(u)}
\sum_{\ell\in\mathcal{L}(\mathrm{Sub}(s))}
\Pr_{\mathcal{A}}[\ell]
\quad\text{(grouping the leaves by child subtree)}\\
&=
\sum_{s\in\mathrm{ch}(u)}\Pr_{\mathcal{A}}[s]
\quad\text{(By induction hypothesis)}\\
&=
\Pr_{\mathcal{A}}[u]
\sum_{s\in\mathrm{ch}(u)}P_{\mathcal{A}}(s\mid u)
\quad\text{(By Equation~\eqref{eq:pda-sum-1})} \\
&=\Pr_{\mathcal{A}}[u].
\end{aligned}
\]
\end{proof}

Accordingly, conditioned on reaching $u$, every descendant $v$ at depth
$s\geq\mathrm{dep}(u)$ is sampled with probability
$\Pr_{\mathcal{A}}[v]/\Pr_{\mathcal{A}}[u]$. We write
$V\xleftarrow{s}\mathcal{A}$ for the resulting random node $V$ at depth $s$. 

We need two complementary properties of this random walk. The first ensures
that the target value remains sufficiently dispersed inside every subtree;
the second controls how quickly a fixed query coordinate disappears from the
simplified difference criterion.

\begin{definition}[Anti-concentration property] \label{def:anti-concentration}
Let $\boldsymbol{a}=(a_0,\ldots,a_{L - 1})$ be a nonnegative sequence. We say that $\mathcal{A}$ satisfies the
\emph{$\boldsymbol{a}$-anti-concentration property} if, for every node $u$
at depth $t<L$ and every $\ell\in\mathcal{L}(\mathrm{Sub}(u))$,
\[
\Pr_{V\xleftarrow{L}\mathcal{A}}
\bigl[f(\mathrm{ins}(\ell))\ne f(\mathrm{ins}(V))
\mid \mathcal{A}\xrightarrow{t}u\bigr]
\geq a_t.
\]
\end{definition}

\begin{definition}[Geometric-indistinguishability property] \label{def:geometric-indistinguishable}
For $0<r<1$, we say that $\mathcal{A}$ satisfies the
\emph{$r$-geometric-indistinguishability property} if, for every
$i\in I$, every pair of depths $0\leq t<s\leq L$, and every node $u$ at
depth $t$,
\[
\Pr_{V\xleftarrow{s}\mathcal{A}}
\bigl[i\in\hat{D}(V)\mid\mathcal{A}\xrightarrow{t}u\bigr]
\leq r^{s-t}.
\]
\end{definition}

Figure~\ref{fig:properties} illustrates the two properties. Anti-concentration
will lower bound the numerator of the adversary ratio, while geometric-indistinguishability will upper bound its denominator.

\begin{figure}[H]
    \centering
    \begin{subfigure}{0.48\textwidth}
        \centering
        \resizebox{\linewidth}{!}{%
\begin{tikzpicture}[
    treepath/.style={
        thin,
        draw=black!80,
        -{Stealth[length=2mm,width=1.35mm]}
    },
    bluepath/.style={
        thin,
        draw=blue!70!black,
        -{Stealth[length=2mm,width=1.35mm]}
    },
    guidearrow/.style={
        thin,
        densely dashed,
        -{Stealth[length=1.7mm,width=1.2mm]}
    },
    blacknode/.style={
        circle,
        fill=black,
        inner sep=2.2pt
    },
    bluenode/.style={
        circle,
        fill=blue!70!black,
        inner sep=2.4pt
    },
    depthline/.style={
        densely dashed,
        draw=black!15
    },
    depthlabel/.style={
        font=\scriptsize,
        text=black!55
    }
]

% ------------------------------------------------------------------
% Formula
% ------------------------------------------------------------------

\matrix (formula) [
    matrix of math nodes,
    ampersand replacement=\&,
    nodes={
        font=\scriptsize,
        inner sep=0pt,
        outer sep=0pt,
        text height=1.8ex,
        text depth=0.6ex
    },
    column sep=0pt,
    anchor=center
] at (0,0) {
    \forall v\in\mathcal{L}(\mathrm{Sub}(u)),\quad
    \Pr_{v'\leftarrow\mathcal{A}}\!
    \bigl[f\bigl(\mathrm{ins}(
    \&
    {\color{black}v}
    \&
    )\bigr)\ne f\bigl(\mathrm{ins}(
    \&
    {\color{blue!70!black}v'}
    \&
    )\bigr)
    \mid \mathcal{A}\rightarrow u
    \bigr]\geq a_t
    \\
};

\coordinate (formulaV)  at (formula-1-2.north);
\coordinate (formulaVp) at (formula-1-4.north);

% ------------------------------------------------------------------
% Tree coordinates
% ------------------------------------------------------------------

\coordinate (v)  at ($(formulaV)+(0,1.30)$);
\coordinate (vp) at ($(formulaVp)+(0,1.30)$);
\coordinate (leafmid) at ($(v)!0.5!(vp)$);

\coordinate (u) at ($(leafmid)+(0,1.28)$);
\coordinate (r) at ($(u)+(0,1.18)$);

% ------------------------------------------------------------------
% Control points
%
% Each pair is obtained from points on the corresponding straight
% segment. The first control point is shifted left; the second is
% shifted right. The small shifts keep both curves inside Sub(u).
% ------------------------------------------------------------------

% Control points for r -> u.
\coordinate (cruA)
    at ($(r)!0.33!(u)+(-0.34,0)$);
\coordinate (cruB)
    at ($(r)!0.67!(u)+( 0.34,0)$);

% Control points for u -> v.
\coordinate (cuvA)
    at ($(u)!0.30!(v)+(-0.14,0)$);
\coordinate (cuvB)
    at ($(u)!0.72!(v)+( 0.12,0)$);

% Control points for u -> v'.
\coordinate (cuvpA)
    at ($(u)!0.30!(vp)+(-0.10,0)$);
\coordinate (cuvpB)
    at ($(u)!0.72!(vp)+( 0.14,0)$);

% ------------------------------------------------------------------
% Light blue triangle representing Sub(u)
% ------------------------------------------------------------------

\filldraw[
    fill=blue!5,
    draw=blue!15,
    thin
]
    (u)
    --
    ($(v)+(-0.62,-0.15)$)
    --
    ($(vp)+(0.62,-0.15)$)
    -- cycle;

% ------------------------------------------------------------------
% Depth guides
% ------------------------------------------------------------------

\draw[depthline]
    ($(u)+(-2.85,0)$)
    --
    ($(u)+(2.85,0)$);

\draw[depthline]
    ($(leafmid)+(-2.85,0)$)
    --
    ($(leafmid)+(2.85,0)$);

\node[
    depthlabel,
    anchor=east
] at ($(u)+(-2.92,0)$)
{depth \(t\)};

\node[
    depthlabel,
    anchor=east
] at ($(leafmid)+(-2.92,0)$)
{depth \(L\)};

% ------------------------------------------------------------------
% S-shaped paths
% ------------------------------------------------------------------

% r -> u: first left, then right.
\draw[treepath]
    (r)
    .. controls
        (cruA)
        and
        (cruB)
    ..
    (u);

% u -> v:
% contained in the left portion of the triangle.
\draw[treepath]
    (u)
    .. controls
        (cuvA)
        and
        (cuvB)
    ..
    (v);

% u -> v':
% contained in the right portion of the triangle.
\draw[bluepath]
    (u)
    .. controls
        (cuvpA)
        and
        (cuvpB)
    ..
    (vp);

% ------------------------------------------------------------------
% Other unspecified parts of the tree
% ------------------------------------------------------------------

\node[text=black!50]
    at ($(r)+(-1.75,-0.30)$)
{\(\cdots\)};

\node[text=black!50]
    at ($(r)+(1.70,-0.38)$)
{\(\cdots\)};

\node[text=black!50]
    at ($(u)+(-1.30,-0.70)$)
{\(\cdots\)};

\node[text=blue!25!black]
    at ($(u)+(1.30,-0.72)$)
{\(\cdots\)};

% ------------------------------------------------------------------
% Distinguished vertices
% ------------------------------------------------------------------

\node[blacknode] at (r) {};
\node[above=3pt] at (r) {\(r\)};

\node[blacknode] at (u) {};
\node[above left=2pt] at (u) {\(u\)};

\node[blacknode] at (v) {};
\node[above=3pt] at (v) {\(v\)};

\node[bluenode] at (vp) {};
\node[
    text=blue!70!black,
    above=3pt
] at (vp)
{\(v'\)};

% ------------------------------------------------------------------
% Short dashed arrows to the formula
% ------------------------------------------------------------------

\draw[
    guidearrow,
    draw=black
]
    ($(v)+(0,-0.08)$)
    --
    ($(formulaV)+(0,0.07)$);

\draw[
    guidearrow,
    draw=blue!70!black
]
    ($(vp)+(0,-0.08)$)
    --
    ($(formulaVp)+(0,0.07)$);

\end{tikzpicture}%
}
        \caption{Visualization of Definition \ref{def:anti-concentration}.}
        \label{fig:anti-concentration}
    \end{subfigure}
    \hfill
    \begin{subfigure}{0.48\textwidth}
        \centering
        \makebox[\linewidth][c]{%
\resizebox{0.92\linewidth}{!}{%
\begin{tikzpicture}[
    blackpath/.style={
        thin,
        draw=black!80,
        -{Stealth[length=2mm,width=1.35mm]}
    },
    bluepath/.style={
        thin,
        draw=blue!70!black,
        -{Stealth[length=2mm,width=1.35mm]}
    },
    blacknode/.style={
        circle,
        fill=black,
        inner sep=2.3pt
    },
    bluenode/.style={
        circle,
        fill=blue!70!black,
        inner sep=2.5pt
    },
    depthline/.style={
        densely dashed,
        thin,
        draw=black!18
    },
    probabilityline/.style={
        densely dashed,
        thin,
        draw=blue!60!black
    },
    depthlabel/.style={
        font=\scriptsize,
        text=black!55
    },
    formulalabel/.style={
        font=\footnotesize,
        text=black,
        anchor=west,
        inner sep=1.5pt,
        fill=white
    }
]

% Vertical padding only
\path[draw=none]
    (0,-0.55) -- (0,6.25);

% Main coordinates
\coordinate (r)  at ( 0.00,5.75);
\coordinate (u)  at ( 0.00,4.45);
\coordinate (v1) at (-0.62,2.95);
\coordinate (v2) at ( 0.72,1.45);
\coordinate (vnext) at (0.22,0.38);

\coordinate (depthT)  at (-2.30,4.45);
\coordinate (depthT1) at (-2.30,2.95);
\coordinate (depthT2) at (-2.30,1.45);

\coordinate (prob1) at (2.35,2.95);
\coordinate (prob2) at (2.35,1.45);

% Subtree triangle
\filldraw[
    fill=blue!5,
    draw=blue!15,
    thin
]
    (u)
    --
    (-2.05,-0.05)
    --
    (2.05,-0.05)
    -- cycle;

% Depth guides
\draw[depthline]
    (depthT) -- (u);

\draw[depthline]
    (depthT1) -- (v1);

\draw[depthline]
    (depthT2) -- (v2);

\node[
    depthlabel,
    anchor=east
] at ($(depthT)+(-0.08,0)$)
{depth \(t\)};

\node[
    depthlabel,
    anchor=east
] at ($(depthT1)+(-0.08,0)$)
{depth \(t+1\)};

\node[
    depthlabel,
    anchor=east
] at ($(depthT2)+(-0.08,0)$)
{depth \(t+2\)};

% Probability guides
\draw[probabilityline]
    (v1) -- (prob1);

\draw[probabilityline]
    (v2) -- (prob2);

% Probability formulas
\node[
    formulalabel
] at ($(prob1)+(0.12,0)$)
{%
    \(\displaystyle
    \Pr_{v_1\leftarrow\mathcal{A}}
    \!\left[
        i\in\hat{D}(v_1)
        \,\middle|\,
        \mathcal{A}\rightarrow u
    \right]
    \leq r
    \)
};

\node[
    formulalabel
] at ($(prob2)+(0.12,0)$)
{%
    \(\displaystyle
    \Pr_{v_2\leftarrow\mathcal{A}}
    \!\left[
        i\in\hat{D}(v_2)
        \,\middle|\,
        \mathcal{A}\rightarrow u
    \right]
    \leq r^2
    \)
};

\node[
    font=\footnotesize,
    text=black!55
] at (5.25,0.18)
{\(\vdots\)};

% r -> u:
% the curve first moves left and then returns from the right.
\draw[blackpath]
    (r)
    .. controls
        ($(r)+(-1.10,-0.32)$)
        and
        ($(u)+(1.10,0.32)$)
    ..
    (u);

% Straight blue path
\draw[bluepath]
    (u) -- (v1);

\draw[bluepath]
    (v1) -- (v2);

\draw[bluepath]
    (v2) -- (vnext);

% Distinguished vertices
\node[blacknode] at (r) {};
\node[above=3pt] at (r) {\(r\)};

\node[blacknode] at (u) {};
\node[above right=2pt] at (u) {\(u\)};

\node[bluenode] at (v1) {};
\node[
    text=blue!70!black,
    above left=2pt
] at (v1)
{\(v_1\)};

\node[bluenode] at (v2) {};
\node[
    text=blue!70!black,
    above right=2pt
] at (v2)
{\(v_2\)};

% Continuation ellipsis
\node[
    text=blue!70!black,
    font=\large,
    rotate=-45
] at (0.00,0.18)
{\(\ddots\)};

\end{tikzpicture}%
}%
}
        \caption{Visualization of Definition \ref{def:geometric-indistinguishable}.}
        \label{fig:geometric-indistinguishability}
    \end{subfigure}

    \caption{{Visualization of the two properties.} In these figures, black nodes / black arrows represent fixed nodes / paths, while blue nodes / blue arrows represent random nodes / paths sampled by $\mathcal{A}$. Light blue triangles represent subtrees.}
    \label{fig:properties}
\end{figure}

Now, we can state the tree-filtration adversary method as follows:

\begin{theorem}[Tree-filtration adversary method] \label{thm:tree-adversary}
Suppose that the hard-instance family $\mathcal{F}$ admits an
$L$-equal-depth hard-instance filtration tree $\mathcal{T}$ equipped with a
simplified difference criterion $\hat{D}$. If a distinguishing random walk
$\mathcal{A}$ on $\mathcal{T}$ satisfies the
$\boldsymbol{a}$-anti-concentration property and the
$r$-geometric-indistinguishability property, then, for
$0\leq\epsilon<1/2$,
\[
Q_\epsilon(f)
\geq
\left(1-2\sqrt{\epsilon(1-\epsilon)}\right)
\frac{1-\sqrt{r}}{1+\sqrt{r}}
\sum_{t=0}^{L - 1}a_t.
\]
\end{theorem}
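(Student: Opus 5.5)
The plan is to apply the spectral adversary theorem (Theorem 3.1 of \cite{SSAdversary}) to the matrix $\boldsymbol{\Gamma}=\sum_{u\in\mathcal{V}\setminus\mathcal{L}(\mathcal{T})}\boldsymbol{\Gamma}_u$ from the technical overview. Here $\boldsymbol{\Gamma}_u[\boldsymbol{x},\boldsymbol{y}]=\phi_u[\boldsymbol{x}]\phi_u[\boldsymbol{y}]\mathds{I}\{f(\boldsymbol{x})\ne f(\boldsymbol{y})\}$. It is symmetric, entrywise nonnegative, and vanishes whenever $f(\boldsymbol{x})=f(\boldsymbol{y})$, so it is a valid nonnegative adversary matrix. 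By Lemma~\ref{lem:subtree-mass}, each $\phi_u$ is a unit vector supported on $\mathcal{F}(u)$. The proof then splits into a lower bound $\norm{\boldsymbol{\Gamma}}\geq\sum_{t<L}a_t$ and an upper bound $\norm{\boldsymbol{\Gamma}\circ\boldsymbol{\Delta}_i}\leq(1+\sqrt r)/(1-\sqrt r)$ for every $i\in I$.

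\textbf{Numerator.} Use the test vector $\boldsymbol{\psi}=\phi_{r_{\mathcal{T}}}$, so $\psi[\boldsymbol{x}]=\sqrt{\Pr_{\mathcal{A}}[\mathrm{ins}^{-1}(\boldsymbol{x})]}$. For $\boldsymbol{x}\in\mathcal{F}(u)$ we have $\psi[\boldsymbol{x}]\phi_u[\boldsymbol{x}]=\Pr_{\mathcal{A}}[\mathrm{ins}^{-1}(\boldsymbol{x})]/\sqrt{\Pr_{\mathcal{A}}[u]}$. Hence
\[
\boldsymbol{\psi}^\top\boldsymbol{\Gamma}_u\boldsymbol{\psi}
=\Pr_{\mathcal{A}}[u]\cdot\Pr\bigl[f(\mathrm{ins}(\ell))\ne f(\mathrm{ins}(V))\bigr],
\]
where $\ell$ and $V$ are two independent leaves drawn from the walk conditioned on reaching $u$. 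Conditioning on $\ell$ and applying the $\boldsymbol{a}$-anti-concentration property bounds the probability below by $a_{\mathrm{dep}(u)}$. The masses $\Pr_{\mathcal{A}}[u]$ over nodes at a fixed depth $t$ sum to $1$, again by Lemma~\ref{lem:subtree-mass}. Summing over depths $t=0,\ldots,L-1$ therefore gives $\norm{\boldsymbol{\Gamma}}\geq\boldsymbol{\psi}^\top\boldsymbol{\Gamma}\boldsymbol{\psi}\geq\sum_t a_t$.

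\textbf{Denominator.} Fix $i\in I$. If $(\boldsymbol{\Gamma}_u\circ\boldsymbol{\Delta}_i)[\boldsymbol{x},\boldsymbol{y}]\neq0$, then $\boldsymbol{x},\boldsymbol{y}\in\mathcal{F}(u)$ and $x_i\ne y_i$. Property (i) of $\hat D$ then gives $i\in\hat D(u)$. So, entrywise,
\[
0\leq\boldsymbol{\Gamma}\circ\boldsymbol{\Delta}_i\leq\sum_{t=0}^{L-1}\mathbf{A}_t,
\qquad
\mathbf{A}_t:=\sum_{\mathrm{dep}(u)=t,\ i\in\hat D(u)}\phi_u\phi_u^\top,
\]
and Lemma~\ref{lem:spectral-norm-monotonicity} reduces the task to bounding $\norm{\sum_t\mathbf{A}_t}$. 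The vectors $\phi_u$ at a fixed depth are unit vectors with disjoint supports, so each $\mathbf{A}_t$ is an orthogonal projection: symmetric and idempotent. Lemma~\ref{lem:summation-spectral-upperbound} with ratio $\sqrt r$ would then give the bound $(1+\sqrt r)/(1-\sqrt r)$, provided we show $\norm{\mathbf{A}_t\mathbf{A}_s}\leq r^{(s-t)/2}$ for $t<s$.

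\textbf{Key estimate.} The main step is this bound on $\norm{\mathbf{A}_t\mathbf{A}_s}$. For $u$ at depth $t$ and $v$ at depth $s$, the inner product $\phi_u^\top\phi_v$ is nonzero only if $v$ is a descendant of $u$, and then it equals $\sqrt{\Pr_{\mathcal{A}}[v]/\Pr_{\mathcal{A}}[u]}$. Hence
\[
\mathbf{A}_t\mathbf{A}_s=\sum_{u}\phi_u\boldsymbol{w}_u^\top,
\qquad
\boldsymbol{w}_u:=\sum_{v\in\mathrm{Sub}(u),\ \mathrm{dep}(v)=s,\ i\in\hat D(v)}\sqrt{\Pr_{\mathcal{A}}[v]/\Pr_{\mathcal{A}}[u]}\;\phi_v.
\]
The nesting property of $\hat D$ ensures that every $v$ with $i\in\hat D(v)$ has its depth-$t$ ancestor $u$ included in $\mathbf{A}_t$, so no terms are lost. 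The $\phi_u$ are orthonormal, and the $\boldsymbol{w}_u$ have disjoint supports for different $u$. Therefore $\norm{\mathbf{A}_t\mathbf{A}_s}=\max_u\norm{\boldsymbol{w}_u}_2$. Since the $\phi_v$ are orthonormal,
\[
\norm{\boldsymbol{w}_u}_2^2=\Pr\bigl[i\in\hat D(V)\mid\mathcal{A}\xrightarrow{t}u\bigr]\leq r^{s-t}
\]
by $r$-geometric-indistinguishability. This gives $\norm{\mathbf{A}_t\mathbf{A}_s}\leq r^{(s-t)/2}$. Symmetry covers $t>s$, and the case $t=s$ holds because $\mathbf{A}_t$ is a projection.

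Combining the numerator and denominator bounds in the adversary theorem yields the stated inequality.
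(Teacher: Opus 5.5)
Your proposal is correct and follows the paper's proof essentially step for step. It uses the same adversary matrix and test vector $\sqrt{p_x}$, the same entrywise domination of $\boldsymbol{\Gamma}\circ\boldsymbol{\Delta}_i$ by the projections $\mathbf{Q}_t$, and the same bound $\norm{\mathbf{Q}_t\mathbf{Q}_s}\leq r^{|s-t|/2}$ fed into Lemma~\ref{lem:summation-spectral-upperbound}. The only differences are cosmetic: you bound the Rayleigh quotient $\boldsymbol{\psi}^\top\boldsymbol{\Gamma}\boldsymbol{\psi}$ rather than showing $\boldsymbol{\Gamma}\boldsymbol{v}\geq(\sum_t a_t)\boldsymbol{v}$ entrywise, and you read off $\norm{\mathbf{A}_t\mathbf{A}_s}$ directly from the factorization $\sum_u\boldsymbol{\phi}_u\boldsymbol{w}_u^\top$ instead of going through $\norm{\mathbf{Q}_t\mathbf{Q}_s\mathbf{Q}_t}$.
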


We prove the theorem by constructing an adversary matrix for the restriction
$f|_{\mathcal{F}}$. The entries of the adversary matrix are labeled with $(x, y)\in \mathcal{F}\times \mathcal{F}$. For $x\in\mathcal{F}$, let
$\ell_x:=\mathrm{ins}^{-1}(x)$ and
$p_x:=\Pr_{\mathcal{A}}[\ell_x]$; for a node $u$, let
$p_u:=\Pr_{\mathcal{A}}[u]$.

\begin{definition}[Construction of the adversary matrix] \label{def:adversary-matrix}
For every $u\in\mathcal{V}$, define the normalized subtree vector
$\boldsymbol{\phi}_u\in\mathbb{R}^{\mathcal{F}}$ and the matrix
$\boldsymbol{\Gamma}_u$ by
\[
\boldsymbol{\phi}_u[x]
:=
\mathbf{1}\{x\in\mathcal{F}(u)\}\sqrt{\frac{p_x}{p_u}},
\qquad
\boldsymbol{\Gamma}_u[x,y]
:=
\boldsymbol{\phi}_u[x]\boldsymbol{\phi}_u[y]
\mathbf{1}\{f(x)\ne f(y)\}.
\]
For $0\leq t< L$, let
$\boldsymbol{\Gamma}_t:=
\sum_{u\in\mathcal{V}:\,\mathrm{dep}(u)=t}\boldsymbol{\Gamma}_u$, and set
$\boldsymbol{\Gamma}:=\sum_{t=0}^{L - 1}\boldsymbol{\Gamma}_t$.
\end{definition}

The matrix $\boldsymbol{\Gamma}$ is nonnegative and symmetric, and its
$(x,y)$ entry vanishes whenever $f(x)=f(y)$. It is therefore a valid
nonnegative adversary matrix for $f|_{\mathcal{F}}$. We will bound the numerator of the adversary bound in Lemma \ref{lem:numerator-lb}, and the denominator in Lemma \ref{lem:denominator-ub}.

\begin{lemma}[Lower bound of the numerator] \label{lem:numerator-lb}
    For the $\boldsymbol{\Gamma}$ constructed in Definition \ref{def:adversary-matrix}, we have
    $$
    \norm{\boldsymbol{\Gamma}} \geq \sum_{i = 0}^{L - 1} a_i.
    $$
\end{lemma}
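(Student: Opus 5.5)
Since $\boldsymbol{\Gamma}$ is real symmetric and entrywise nonnegative, it suffices to exhibit one unit test vector $\boldsymbol{w}$ with $\boldsymbol{w}^\top\boldsymbol{\Gamma}\boldsymbol{w}\geq\sum_t a_t$. The natural choice is the root vector $\boldsymbol{w}:=\boldsymbol{\phi}_{r_{\mathcal{T}}}$, i.e.\ $\boldsymbol{w}[x]=\sqrt{p_x}$. By Lemma~\ref{lem:subtree-mass} applied at the root, $\sum_x p_x=p_{r_{\mathcal{T}}}=1$, so $\boldsymbol{w}$ is a unit vector. Then $\norm{\boldsymbol{\Gamma}}\geq\boldsymbol{w}^\top\boldsymbol{\Gamma}\boldsymbol{w}=\sum_{t=0}^{L-1}\sum_{\mathrm{dep}(u)=t}\boldsymbol{w}^\top\boldsymbol{\Gamma}_u\boldsymbol{w}$, and the task reduces to showing, for each depth $t<L$, that $\sum_{\mathrm{dep}(u)=t}\boldsymbol{w}^\top\boldsymbol{\Gamma}_u\boldsymbol{w}\geq a_t$.

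Fix a node $u$ at depth $t$. Expanding the definition gives
\[
\boldsymbol{w}^\top\boldsymbol{\Gamma}_u\boldsymbol{w}
=\sum_{x,y\in\mathcal{F}(u)}\sqrt{p_x}\sqrt{p_y}\,\frac{\sqrt{p_xp_y}}{p_u}\,\mathbf{1}\{f(x)\ne f(y)\}
=p_u\sum_{x\in\mathcal{F}(u)}\frac{p_x}{p_u}\sum_{y\in\mathcal{F}(u)}\frac{p_y}{p_u}\mathbf{1}\{f(x)\ne f(y)\}.
\]
Next I need to identify the inner sum with the conditional probability in Definition~\ref{def:anti-concentration}. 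The leaf distribution of the walk conditioned on reaching $u$ is $p_y/p_u$ on $\mathcal{L}(\mathrm{Sub}(u))$, which follows from the product form \eqref{eq:induced-distribution} together with Lemma~\ref{lem:subtree-mass}. Hence the inner sum equals $\Pr_{V\xleftarrow{L}\mathcal{A}}[f(x)\ne f(\mathrm{ins}(V))\mid\mathcal{A}\xrightarrow{t}u]$, which is at least $a_t$ because $\ell_x$ lies in $\mathrm{Sub}(u)$. The outer weights $p_x/p_u$ sum to $1$ by Lemma~\ref{lem:subtree-mass}, so $\boldsymbol{w}^\top\boldsymbol{\Gamma}_u\boldsymbol{w}\geq a_tp_u$.

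Finally, summing over all nodes at depth $t$ uses $\sum_{\mathrm{dep}(u)=t}p_u=1$. This holds because the depth-$t$ subtrees partition the leaves (the tree is $L$-equal-depth with $t<L$), and Lemma~\ref{lem:subtree-mass} then gives total mass $1$. Summing over $t$ yields the claim. No step is a real obstacle; the only care needed is to justify that the conditional leaf distribution is exactly $p_y/p_u$, so that the anti-concentration hypothesis applies verbatim.
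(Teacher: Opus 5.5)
Your proposal is correct and is essentially the paper's proof: it uses the same unit test vector $\sqrt{p_x}$, the same identification of the conditional leaf distribution below $u$ as $p_y/p_u$, and the same application of anti-concentration at each depth. The only difference is cosmetic. The paper proves the entrywise bound $(\boldsymbol{\Gamma}\boldsymbol{v})[x]\geq(\sum_t a_t)\sqrt{p_x}$ using the unique depth-$t$ ancestor of $\ell_x$, then concludes via $\|\boldsymbol{\Gamma}\boldsymbol{v}\|_2$ and nonnegativity. You instead average that same bound against $\boldsymbol{w}$ in the Rayleigh quotient, with the partition fact $\sum_{\mathrm{dep}(u)=t}p_u=1$ doing the ancestor argument's job.
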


\begin{proof}
Let $\boldsymbol{v}\in\mathbb{R}^{\mathcal{F}}$ be given by
$\boldsymbol{v}[x]=\sqrt{p_x}$. Since
$\Pr_{\mathcal{A}}[r_{\mathcal{T}}]=1$, Lemma~\ref{lem:subtree-mass} gives
$\norm{\boldsymbol{v}}_2^2=\sum_{x\in\mathcal{F}}p_x=1$. Thus,
$\boldsymbol{v}$ is a unit vector. For every non-leaf node $u$ and every
$x\in\mathcal{F}(u)$,
\[
\begin{aligned}
(\boldsymbol{\Gamma}_u\boldsymbol{v})[x]
&=
\sum_{y\in\mathcal{F}}
\boldsymbol{\Gamma}_u[x,y]\boldsymbol{v}[y] \\
&=
\sum_{y\in\mathcal{F}}
\boldsymbol{\phi}_u[x]\boldsymbol{\phi}_u[y]
\mathbf{1}\{f(x)\ne f(y)\}\sqrt{p_y} \\
&=
\frac{\sqrt{p_x}}{p_u}
\sum_{y\in\mathcal{F}(u)}p_y\mathbf{1}\{f(x)\ne f(y)\} \\
&=
\sqrt{p_x}\,
\Pr_{Y\xleftarrow{L}\mathcal{A}}
\bigl[f(x)\ne f(\mathrm{ins}(Y))
\mid\mathcal{A}\xrightarrow{\mathrm{dep}(u)}u\bigr] \\
&\geq
a_{\mathrm{dep}(u)}\sqrt{p_x},
\end{aligned}
\]
where the inequality follows from anti-concentration. For every $t<L$, the
node $u_t(x):=\mathrm{anc}(\ell_x,t)$ is the unique node at depth $t$ for
which $x\in\mathcal{F}(u_t(x))$. Every other node $u$ at that depth satisfies
$(\boldsymbol{\Gamma}_u\boldsymbol{v})[x]=0$. Therefore,
$(\boldsymbol{\Gamma}_t\boldsymbol{v})[x]
=(\boldsymbol{\Gamma}_{u_t(x)}\boldsymbol{v})[x]
\geq a_t\boldsymbol{v}[x]$. Consequently,
entrywise,
\[
\boldsymbol{\Gamma}\boldsymbol{v}
\geq
\left(\sum_{t=0}^{L - 1}a_t\right)\boldsymbol{v}.
\]
All vectors in this inequality are nonnegative, so
$\norm{\boldsymbol{\Gamma}}\geq
\norm{\boldsymbol{\Gamma}\boldsymbol{v}}_2
\geq\sum_{t=0}^{L - 1}a_t$.
\end{proof}

We now turn to the denominator. Fix $i\in I$ and, for every node $u$,
define 
\[
\mathbf{Q}_u
:=
\mathbf{1}\{i\in\hat{D}(u)\}
\boldsymbol{\phi}_u\boldsymbol{\phi}_u^{\mathsf{T}},
\qquad
\mathbf{Q}_t
:=
\sum_{u:\,\mathrm{dep}(u)=t}\mathbf{Q}_u,
\qquad
\mathbf{Q}
:=
\sum_{t=0}^{L - 1}\mathbf{Q}_t.
\]
If $(\boldsymbol{\Gamma}_u\circ\boldsymbol{\Delta}_i)[x,y]>0$,
then $x,y\in\mathcal{F}(u)$ and $i\in D(x,y)$. The simplified difference
criterion therefore implies $i\in\hat{D}(u)$. Thus,
$(\boldsymbol{\Gamma}_u\circ\boldsymbol{\Delta}_i)[x,y]
=\boldsymbol{\phi}_u[x]\boldsymbol{\phi}_u[y]
\mathbf{1}\{f(x)\ne f(y)\}\mathbf{1}\{x_i\ne y_i\}
\leq \boldsymbol{\phi}_u[x]\boldsymbol{\phi}_u[y] = 
\mathbf{Q}_u[x,y].$ Therefore, for every node $u$, $\boldsymbol{0}\leq\boldsymbol{\Gamma}_u\circ\boldsymbol{\Delta}_i\leq\mathbf{Q}_u$ entrywise. Since $\boldsymbol{\Gamma}=\sum_{u\in\mathcal{V} \setminus \mathcal{L}(\mathcal{T})}\boldsymbol{\Gamma}_u$ and
$\mathbf{Q}=\sum_{u\in\mathcal{V} \setminus \mathcal{L}(\mathcal{T})}\mathbf{Q}_u$, summing the preceding
inequality over $u$ gives
\[
\boldsymbol{0}
\leq
\boldsymbol{\Gamma}\circ\boldsymbol{\Delta}_i
=
\sum_{u\in\mathcal{V}\setminus \mathcal{L}(\mathcal{T})}
\boldsymbol{\Gamma}_u\circ\boldsymbol{\Delta}_i
\leq
\sum_{u\in\mathcal{V}\setminus \mathcal{L}(\mathcal{T})}\mathbf{Q}_u
=
\mathbf{Q}.
\]
By Lemma~\ref{lem:spectral-norm-monotonicity},
$\norm{\boldsymbol{\Gamma}\circ\boldsymbol{\Delta}_i}
\leq\norm{\mathbf{Q}}$. We only need to upper bound $\norm{\mathbf{Q}}$.

\begin{lemma} \label{lem:idempotent}
For every $0\leq t< L$, the matrix $\mathbf{Q}_t$ is real symmetric and
idempotent.
\end{lemma}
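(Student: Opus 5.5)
The plan is to show that $\mathbf{Q}_t$ is the orthogonal projection onto the span of a family of pairwise orthogonal unit vectors, namely the vectors $\boldsymbol{\phi}_u$ over depth-$t$ nodes $u$ with $i\in\hat{D}(u)$.

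\emph{Symmetry.} Each summand $\mathbf{1}\{i\in\hat{D}(u)\}\boldsymbol{\phi}_u\boldsymbol{\phi}_u^{\mathsf{T}}$ is a real outer product of a vector with itself, so it is real symmetric. Hence $\mathbf{Q}_t$ is real symmetric.

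\emph{Unit norm.} For every node $u$, the definition of $\boldsymbol{\phi}_u$ and Lemma~\ref{lem:subtree-mass} give
\[
\norm{\boldsymbol{\phi}_u}_2^2=\sum_{x\in\mathcal{F}(u)}\frac{p_x}{p_u}=\frac{1}{p_u}\sum_{\ell\in\mathcal{L}(\mathrm{Sub}(u))}\Pr_{\mathcal{A}}[\ell]=1 .
\]
This is the one place where the positivity of the walk matters: it guarantees $p_u>0$.

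\emph{Orthogonality.} Let $u\neq u'$ both have depth $t$. Their subtrees are disjoint, because a leaf has exactly one ancestor at depth $t$. Since $\mathrm{ins}$ is a bijection, $\mathcal{F}(u)\cap\mathcal{F}(u')=\emptyset$. The supports of $\boldsymbol{\phi}_u$ and $\boldsymbol{\phi}_{u'}$ are therefore disjoint, and $\boldsymbol{\phi}_u^{\mathsf{T}}\boldsymbol{\phi}_{u'}=0$.

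\emph{Idempotence.} Write $\iota_u:=\mathbf{1}\{i\in\hat{D}(u)\}$ and expand
\[
\mathbf{Q}_t^2=\sum_{u,u'}\iota_u\iota_{u'}\,\boldsymbol{\phi}_u(\boldsymbol{\phi}_u^{\mathsf{T}}\boldsymbol{\phi}_{u'})\boldsymbol{\phi}_{u'}^{\mathsf{T}} .
\]
Every cross term with $u\neq u'$ vanishes by orthogonality. Every diagonal term reduces to $\iota_u^2\boldsymbol{\phi}_u\boldsymbol{\phi}_u^{\mathsf{T}}=\iota_u\boldsymbol{\phi}_u\boldsymbol{\phi}_u^{\mathsf{T}}$, using the unit norm and the fact that an indicator squares to itself. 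Hence $\mathbf{Q}_t^2=\mathbf{Q}_t$.

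No step here is a real obstacle; the only points needing care are the disjointness of same-depth subtrees and the unit-norm computation via Lemma~\ref{lem:subtree-mass}.
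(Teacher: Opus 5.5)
Your proof is correct and takes the same route as the paper. Both show that $\boldsymbol{\phi}_u$ has unit norm via Lemma~\ref{lem:subtree-mass}, that same-depth subtree vectors are orthogonal because their subtrees are disjoint, and hence that $\mathbf{Q}_t$ is a sum of mutually orthogonal rank-one projections; you just write out the expansion of $\mathbf{Q}_t^2$ that the paper leaves implicit.
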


\begin{proof}
Lemma~\ref{lem:subtree-mass} gives
$\norm{\boldsymbol{\phi}_u}_2=1$. Moreover, distinct nodes at the same
depth have disjoint sets of descendant leaves, so their subtree vectors are
orthogonal. Thus $\mathbf{Q}_t$ is the sum of a collection of mutually
orthogonal rank-one projections. It is therefore real symmetric and
idempotent.
\end{proof}

\begin{lemma} \label{lem:geometric-supression}
    For any $s, t\in \{0, \ldots , L - 1\}$, 
    $$
    \norm{\mathbf{Q}_s\mathbf{Q}_t} \leq r^{|s - t| / 2}.
    $$
\end{lemma}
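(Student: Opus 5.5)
The plan is to compute $\mathbf{Q}_s\mathbf{Q}_t$ explicitly as a sum of rank-one pieces. Using orthogonality of the subtree vectors, its norm then reduces to a maximum over single nodes, which is a conditional probability that geometric-indistinguishability bounds directly.

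\textbf{Reductions.} If $s=t$, Lemma~\ref{lem:idempotent} shows $\mathbf{Q}_t$ is an orthogonal projection, so $\norm{\mathbf{Q}_t\mathbf{Q}_t}=\norm{\mathbf{Q}_t}\leq 1$. If $s>t$, then $\mathbf{Q}_s\mathbf{Q}_t=(\mathbf{Q}_t\mathbf{Q}_s)^{\mathsf T}$, since both factors are symmetric, and transposition preserves the spectral norm. Hence it suffices to treat $s<t$.

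\textbf{Inner products of subtree vectors.} Fix $u$ at depth $s$ and $v$ at depth $t$. If $v\notin\mathrm{Sub}(u)$, the supports $\mathcal{F}(u)$ and $\mathcal{F}(v)$ are disjoint, so $\boldsymbol{\phi}_u^{\mathsf T}\boldsymbol{\phi}_v=0$. If $v\in\mathrm{Sub}(u)$, then $\mathcal{F}(v)\subseteq\mathcal{F}(u)$, and Lemma~\ref{lem:subtree-mass} gives
\[
\boldsymbol{\phi}_u^{\mathsf T}\boldsymbol{\phi}_v=\sum_{x\in\mathcal{F}(v)}\frac{p_x}{\sqrt{p_up_v}}=\sqrt{p_v/p_u}.
\]

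\textbf{Rank-one decomposition.} Substituting into the product,
\[
\mathbf{Q}_s\mathbf{Q}_t=\sum_{u:\,\mathrm{dep}(u)=s}\boldsymbol{\phi}_u\boldsymbol{w}_u^{\mathsf T},
\qquad
\boldsymbol{w}_u:=\mathbf{1}\{i\in\hat D(u)\}\sum_{\substack{v\in\mathrm{Sub}(u)\\ \mathrm{dep}(v)=t}}\mathbf{1}\{i\in\hat D(v)\}\sqrt{\tfrac{p_v}{p_u}}\,\boldsymbol{\phi}_v .
\]
The vectors $\{\boldsymbol{\phi}_u\}_{\mathrm{dep}(u)=s}$ are orthonormal (as in Lemma~\ref{lem:idempotent}). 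The vectors $\boldsymbol{w}_u$ are supported on the pairwise disjoint sets $\mathcal{F}(u)$, so they are mutually orthogonal.

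\textbf{Norm of the sum.} For any unit vector $\boldsymbol{z}$, the orthonormality of the $\boldsymbol{\phi}_u$ and Bessel's inequality for the orthonormal directions $\boldsymbol{w}_u/\norm{\boldsymbol{w}_u}_2$ give
\[
\norm{\textstyle\sum_u\boldsymbol{\phi}_u(\boldsymbol{w}_u^{\mathsf T}\boldsymbol{z})}_2^2=\sum_u(\boldsymbol{w}_u^{\mathsf T}\boldsymbol{z})^2\leq\max_u\norm{\boldsymbol{w}_u}_2^2 .
\]
It therefore suffices to bound each $\norm{\boldsymbol{w}_u}_2^2$.

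\textbf{Single-node bound.} Because the $\boldsymbol{\phi}_v$ for distinct $v$ at depth $t$ are orthonormal,
\[
\norm{\boldsymbol{w}_u}_2^2\leq\sum_{\substack{v\in\mathrm{Sub}(u)\\ \mathrm{dep}(v)=t}}\frac{p_v}{p_u}\mathbf{1}\{i\in\hat D(v)\}
=\Pr_{V\xleftarrow{t}\mathcal{A}}\bigl[i\in\hat D(V)\mid\mathcal{A}\xrightarrow{s}u\bigr]\leq r^{t-s}.
\]
The equality is the definition of the conditional distribution of $V$ given that the walk reaches $u$, and the final inequality is the $r$-geometric-indistinguishability property (Definition~\ref{def:geometric-indistinguishable}). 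Taking square roots gives $\norm{\mathbf{Q}_s\mathbf{Q}_t}\leq r^{(t-s)/2}$, as claimed.

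The step needing the most care is the norm of the sum, namely justifying that $\norm{\sum_u\boldsymbol{\phi}_u\boldsymbol{w}_u^{\mathsf T}}=\max_u\norm{\boldsymbol{w}_u}_2$ from the two orthogonality facts. The rest is direct computation, and the nesting of $\hat D$ is not needed here.
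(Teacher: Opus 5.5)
Your proof is correct and is essentially the paper's argument. Both use $\langle\boldsymbol{\phi}_u,\boldsymbol{\phi}_v\rangle=\sqrt{p_v/p_u}$ and the orthonormality of same-depth subtree vectors to reduce everything to the per-node conditional probability controlled by geometric-indistinguishability; the only difference is that the paper (translated to your labeling $s<t$) computes the sandwiched product $\mathbf{Q}_s\mathbf{Q}_t\mathbf{Q}_s=\sum_u\alpha_u\boldsymbol{\phi}_u\boldsymbol{\phi}_u^{\mathsf T}$ and uses $\norm{\mathbf{Q}_s\mathbf{Q}_t}^2=\norm{\mathbf{Q}_s\mathbf{Q}_t\mathbf{Q}_s}$, whereas you bound $\sum_u\boldsymbol{\phi}_u\boldsymbol{w}_u^{\mathsf T}$ directly, and your $\norm{\boldsymbol{w}_u}_2^2$ equals the paper's $\alpha_u$.

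You are also right that the nesting of $\hat D$ is not actually needed in this lemma, although the paper's proof mentions it.
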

\begin{proof}
By Lemma~\ref{lem:idempotent}, each $\mathbf{Q}_t$ is an orthogonal
projection. Hence, if $s=t$, $\norm{\mathbf{Q}_s\mathbf{Q}_t}=\norm{\mathbf{Q}_t^2}
=\norm{\mathbf{Q}_t}\leq 1=r^{|s-t|/2}.$

Suppose now that $s>t$. For $q\in\{t,s\}$, write $\mathcal{U}_{q,i}:=\{u\in\mathcal{V}:\mathrm{dep}(u)=q,\ i\in\hat{D}(u)\}.$ For $u\in\mathcal{U}_{t,i}$, let $\mathcal{W}_s(u):=\{w\in\mathcal{U}_{s,i}:w\text{ is a descendant of }u\}$.

We first compute the inner product between subtree vectors at the two
depths. Fix $u\in\mathcal{U}_{t,i}$ and $w\in\mathcal{U}_{s,i}$. If $w$ is a descendant of $u$, then
$\mathcal{F}(w)\subseteq\mathcal{F}(u)$, and therefore
\begin{align*}
\left\langle\boldsymbol{\phi}_u,
\boldsymbol{\phi}_w\right\rangle
&=
\sum_{x\in\mathcal{F}}
\mathbf{1}\{x\in\mathcal{F}(u)\}
\mathbf{1}\{x\in\mathcal{F}(w)\}
\sqrt{\frac{p_x}{p_u}}
\sqrt{\frac{p_x}{p_w}} \\
&=
\frac{1}{\sqrt{p_up_w}}
\sum_{x\in\mathcal{F}(w)}p_x\qquad\text{(By $\mathcal{F}(w)\subseteq\mathcal{F}(u)$)}\\
&=
\frac{p_w}{\sqrt{p_up_w}}\qquad\text{(By Lemma~\ref{lem:subtree-mass})}\\
&=
\sqrt{\frac{p_w}{p_u}}.
\end{align*}
If $w$ is not a descendant of $u$, then $\mathcal{F}(u)\cap\mathcal{F}(w)=\varnothing$. It follows directly from the definition of the subtree vectors that
$\langle\boldsymbol{\phi}_u,\boldsymbol{\phi}_w\rangle=0$.

By definition,
$\mathbf{Q}_q=\sum_{v\in\mathcal{U}_{q,i}}
\boldsymbol{\phi}_v\boldsymbol{\phi}_v^{\mathsf{T}}$ for
$q\in\{t,s\}$. Hence
\begin{align*}
\mathbf{Q}_t\mathbf{Q}_s\mathbf{Q}_t
&=
\sum_{\substack{u,v\in\mathcal{U}_{t,i}\\
                 w\in\mathcal{U}_{s,i}}}
\boldsymbol{\phi}_u
\left\langle\boldsymbol{\phi}_u,
\boldsymbol{\phi}_w\right\rangle
\left\langle\boldsymbol{\phi}_w,
\boldsymbol{\phi}_v\right\rangle
\boldsymbol{\phi}_v^{\mathsf{T}} \\
&=
\sum_{u\in\mathcal{U}_{t,i}}
\sum_{w\in\mathcal{W}_s(u)}
\frac{p_w}{p_u}
\boldsymbol{\phi}_u\boldsymbol{\phi}_u^{\mathsf{T}} \\
&=
\sum_{u\in\mathcal{U}_{t,i}}
\alpha_u
\boldsymbol{\phi}_u\boldsymbol{\phi}_u^{\mathsf{T}},
\end{align*}
where $\alpha_u:=\sum_{w\in\mathcal{W}_s(u)}p_w/p_u$. To justify the
second equality, observe that $w$ has a unique ancestor at depth $t$.
Thus a term in the first sum can be nonzero only when $u=v$ is that
ancestor. Moreover, the nesting of $\hat{D}$ guarantees that the depth-$t$ ancestor of every $w\in\mathcal{U}_{s,i}$ lies in $\mathcal{U}_{t,i}$. If $w$ is a depth-$s$ descendant of $u$, the definition of the induced
distribution gives $p_w/p_u=\Pr\left[\mathcal{A}\xrightarrow{s}w\,\middle|\,\mathcal{A}\xrightarrow{t}u\right]$. The events on the right are disjoint for distinct depth-$s$ descendants.
Consequently,
\[
\alpha_u=
\sum_{w\in\mathcal{W}_s(u)}
\Pr\left[
\mathcal{A}\xrightarrow{s}w
\,\middle|\,
\mathcal{A}\xrightarrow{t}u
\right] =
\Pr_{W\xleftarrow{s}\mathcal{A}}
\left[i\in\hat{D}(W)
\,\middle|\,
\mathcal{A}\xrightarrow{t}u\right]
\leq
r^{s-t},
\]
where the final inequality is the geometric-indistinguishability property.

The vectors $\{\boldsymbol{\phi}_u:u\in\mathcal{U}_{t,i}\}$ are
orthonormal. Therefore, for every
$\boldsymbol{z}\in\mathbb{R}^{\mathcal{F}}$,
\begin{align*}
\norm{\mathbf{Q}_t\mathbf{Q}_s\mathbf{Q}_t
\boldsymbol{z}}_2^2
&=
\sum_{u\in\mathcal{U}_{t,i}}
\alpha_u^2
\left|\left\langle\boldsymbol{\phi}_u,
\boldsymbol{z}\right\rangle\right|^2 \\
&\leq
r^{2(s-t)}
\sum_{u\in\mathcal{U}_{t,i}}
\left|\left\langle\boldsymbol{\phi}_u,
\boldsymbol{z}\right\rangle\right|^2 \\
&\leq
r^{2(s-t)}\norm{\boldsymbol{z}}_2^2
\qquad\text{(By Bessel's inequality)}.
\end{align*}
It follows that
$\norm{\mathbf{Q}_t\mathbf{Q}_s\mathbf{Q}_t}\leq r^{s-t}$.

Finally, symmetry and idempotence of $\mathbf{Q}_s$ give
\[
\norm{\mathbf{Q}_s\mathbf{Q}_t}^2=
\norm{(\mathbf{Q}_s\mathbf{Q}_t)^{\mathsf{T}}
(\mathbf{Q}_s\mathbf{Q}_t)} =
\norm{\mathbf{Q}_t^{\mathsf{T}}
\mathbf{Q}_s^{\mathsf{T}}\mathbf{Q}_s\mathbf{Q}_t} =
\norm{\mathbf{Q}_t\mathbf{Q}_s\mathbf{Q}_t} \leq r^{s-t}.
\]
Taking square roots proves
$\norm{\mathbf{Q}_s\mathbf{Q}_t}\leq r^{(s-t)/2}$ when $s>t$.

If $s<t$, apply the case just proved with $s$ and $t$ interchanged. Since
the spectral norm is invariant under transposition and both matrices are
symmetric,
\[
\norm{\mathbf{Q}_s\mathbf{Q}_t}
=
\norm{(\mathbf{Q}_s\mathbf{Q}_t)^{\mathsf{T}}}
=
\norm{\mathbf{Q}_t\mathbf{Q}_s}
\leq
r^{(t-s)/2}.
\]
Combining the three cases proves the claim.
\end{proof}

\begin{lemma}[Upper bound of the denominator] \label{lem:denominator-ub}
    For the matrix $\boldsymbol{\Gamma}$ in Definition \ref{def:adversary-matrix}, we have
    $$
    \max_{i\in I}\norm{\boldsymbol{\Gamma}\circ \boldsymbol{\Delta}_i} \leq \frac{1 + \sqrt{r}}{1 - \sqrt{r}}.
    $$
\end{lemma}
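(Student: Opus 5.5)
The plan is to assemble the pieces already established just before the statement. Fix an arbitrary $i\in I$. We have shown that $\boldsymbol{0}\leq\boldsymbol{\Gamma}\circ\boldsymbol{\Delta}_i\leq\mathbf{Q}$ entrywise. Both matrices are real symmetric: $\boldsymbol{\Gamma}\circ\boldsymbol{\Delta}_i$ is a Hadamard product of symmetric matrices, and $\mathbf{Q}$ is a sum of the symmetric rank-one matrices $\boldsymbol{\phi}_u\boldsymbol{\phi}_u^{\mathsf{T}}$. Lemma~\ref{lem:spectral-norm-monotonicity} therefore gives $\norm{\boldsymbol{\Gamma}\circ\boldsymbol{\Delta}_i}\leq\norm{\mathbf{Q}}$. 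It thus suffices to bound $\norm{\mathbf{Q}}$ by $(1+\sqrt{r})/(1-\sqrt{r})$ uniformly in $i$.

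For this we apply Lemma~\ref{lem:summation-spectral-upperbound} to the family $\mathbf{A}_t:=\mathbf{Q}_t$ for $t=0,\ldots,L-1$. We have $m=L$, and $\mathbf{B}=\sum_t\mathbf{Q}_t=\mathbf{Q}$. The hypotheses of that lemma are the following.
\begin{enumerate}
    \item Each $\mathbf{A}_t$ is real symmetric and idempotent. This is exactly Lemma~\ref{lem:idempotent}.
    \item The decay condition $\norm{\mathbf{A}_s\mathbf{A}_t}\leq\rho^{|s-t|}$ holds for some $\rho\in(0,1)$. Lemma~\ref{lem:geometric-supression} gives this with $\rho:=\sqrt{r}$, because $r^{|s-t|/2}=(\sqrt{r})^{|s-t|}$, and $0<r<1$ implies $\sqrt{r}\in(0,1)$.
\end{enumerate}
The lemma then yields $\norm{\mathbf{Q}}\leq(1+\sqrt{r})/(1-\sqrt{r})$. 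Since this bound does not depend on $i$, taking the maximum over $i\in I$ proves the statement.

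No step is a real obstacle, because the substantive work is done in Lemmas~\ref{lem:idempotent} and~\ref{lem:geometric-supression}. The only points to check are bookkeeping.
\begin{enumerate}
    \item The indexing must be consistent. $\mathbf{Q}$ sums only over non-leaf depths $0,\ldots,L-1$, matching the $\boldsymbol{\Gamma}_t$ and the index range $[m]$ in Lemma~\ref{lem:summation-spectral-upperbound}.
    \item If some $\mathbf{Q}_t$ is zero (no node at depth $t$ has $i\in\hat{D}(u)$), it is still trivially symmetric idempotent and satisfies the decay bound, so the lemma applies unchanged.
\end{enumerate}
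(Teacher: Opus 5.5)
Your proposal is correct and matches the paper's proof: you use the entrywise domination $\boldsymbol{0}\leq\boldsymbol{\Gamma}\circ\boldsymbol{\Delta}_i\leq\mathbf{Q}$ together with Lemma~\ref{lem:spectral-norm-monotonicity}, and then apply Lemma~\ref{lem:summation-spectral-upperbound} to $\mathbf{Q}_0,\ldots,\mathbf{Q}_{L-1}$ with decay parameter $\sqrt{r}$, justified by Lemmas~\ref{lem:idempotent} and~\ref{lem:geometric-supression}. Your extra checks (symmetry of both matrices, consistent indexing over non-leaf depths, and the case of a zero $\mathbf{Q}_t$) are sound bookkeeping that the paper leaves implicit.
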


\begin{proof}
Fix $i\in I$. By Lemmas~\ref{lem:idempotent} and
\ref{lem:geometric-supression}, the matrices $\mathbf{Q}_0,\ldots,
\mathbf{Q}_{L - 1}$ satisfy Lemma~\ref{lem:summation-spectral-upperbound} with
its decay parameter equal to $\sqrt{r}$. Hence
\[
\norm{\mathbf{Q}}
\leq
\frac{1+\sqrt{r}}{1-\sqrt{r}}.
\]
Together with
$\norm{\boldsymbol{\Gamma}\circ\boldsymbol{\Delta}_i}
\leq\norm{\mathbf{Q}}$, this proves the claimed bound for every $i$.
\end{proof}

\begin{proof}[Proof of Theorem \ref{thm:tree-adversary}]
    Any algorithm for $f$ also computes its restriction $f|_{\mathcal{F}}$, so
    $Q_\epsilon(f)\geq Q_\epsilon(f|_{\mathcal{F}})$. It follows immediately from Lemma \ref{lem:numerator-lb} and \ref{lem:denominator-ub} that
    $$
    Q_\epsilon(f|_{\mathcal{F}}) \geq \left(1 - 2\sqrt{\epsilon(1 - \epsilon)}\right)\frac{\norm{\boldsymbol{\Gamma}}}{\max_{i\in I}\norm{\boldsymbol{\Gamma}\circ \boldsymbol{\Delta}_i}} \geq \left(1 - 2\sqrt{\epsilon(1 - \epsilon)}\right)\frac{1 - \sqrt r}{1 + \sqrt r}\sum_{t = 0}^{L - 1}a_t.
    $$
\end{proof}

\subsection{Proof of the Main Theorem}

In this subsection, we will use the tree-filtration adversary method to prove Theorem \ref{thm:main-theorem}. By
Lemmas~\ref{lem:unique_fixed_point} and~\ref{lem:monotonicity}, every
$\boldsymbol{f}_\pi\in\mathcal{F}_{n,k}$ is monotone and has the unique fixed point
$\boldsymbol{x}_m^\pi$. We therefore consider the restricted search problem
\[
\operatorname{Fix}_{n,k}\colon\mathcal{F}_{n,k}\longrightarrow[n]^k,
\qquad
\operatorname{Fix}_{n,k}(\boldsymbol{f}_\pi):=\boldsymbol{x}_m^\pi.
\]
A lower bound for this restricted problem is automatically a lower bound for \textsc{Tarski}$(n,k)$. 

For completeness, we explain how the quantum query model from Section \ref{subsec:computation_model} applies to the computation of $\operatorname{Fix}_{n, k}$. Let $\Sigma = \Pi = I = [n]^k$, and $\oplus$ denote the coordinate-wise addition modulo $n$ on $[n]^k$, making $(\Sigma, \oplus)$ an abelian group. The domain $S = \mathcal{F}_{n, k}$ is a subset of $\Sigma^I$. The input instance $\boldsymbol{f}_\pi\in S$ is not given explicitly but can be queried by $\mathbf{O}_{\boldsymbol{f}_\pi}$, where $\mathbf{O}_{\boldsymbol{f}_\pi}\ket{\boldsymbol{x}, \boldsymbol{z}} = \ket{\boldsymbol{x}, \boldsymbol{z}\oplus \boldsymbol{f}_\pi(\boldsymbol{x})}$ for any $\boldsymbol{x}\in I = [n]^k$ and $\boldsymbol{z}\in \Sigma = [n]^k$. The goal is to compute $\operatorname{Fix}_{n, k}(\boldsymbol{f}_\pi)$, namely the unique fixed point of $\boldsymbol{f}_\pi$.

Throughout the construction below, we retain the normalized parameters from
Section~\ref{sec:hard_instance}: $N=|\textsf{T}|=2m$, $k$ is even, and
$n-1=B=\sum_{r=0}^{U}4^r$. For a sequence $\boldsymbol{p}$ and a token $a$, write
$\boldsymbol{p}\mathbin{\|}a$ for
$\boldsymbol{p}\mathbin{\|}(a)$. Also write
$\boldsymbol{q}^{\mathrm{rev}}:=(q_{|\boldsymbol{q}|-1},\ldots,q_0)$.

First, we construct a filtration tree for $\mathcal{F}_{n, k}$ using a hierarchical prefix-suffix restriction structure. For every $0\leq t\leq m$ and every
$(\boldsymbol{p},\boldsymbol{q})\in
\textsf{SPair}_{\textsf{T}}^t$, let $\mathcal{B}(\boldsymbol{p},\boldsymbol{q})
:=\textsf{BP}_{\textsf{T}}\cap\left(\left.\textsf{P}_{\textsf{T}}\right|_{\boldsymbol{p},\boldsymbol{q}}
\right)$.

\begin{definition}[Filtration tree for $\mathcal{F}_{n,k}$]
\label{def:tarski-filtration-tree}
The rooted tree
$\mathcal{T}_{n,k}=(\mathcal{V}_{n,k},\mathcal{E}_{n,k},r_{n,k})$ is
specified as follows.
\begin{itemize}
    \item \textbf{The vertex set.} The vertices are the families of
    balanced completions determined by balanced prefix--suffix pairs:
    \[
    \mathcal{V}_{n,k}
    :=
    \bigcup_{t=0}^{m}
    \left\{
    \mathcal{B}(\boldsymbol{p},\boldsymbol{q}):
    (\boldsymbol{p},\boldsymbol{q})
    \in\textsf{SPair}_{\textsf{T}}^t
    \right\}.
    \]
    The vertex $\mathcal{B}(\boldsymbol{p},\boldsymbol{q})$ associated with
    a pair in $\textsf{SPair}_{\textsf{T}}^t$ has depth $t$. For brevity,
    we denote this vertex by $u_{\boldsymbol{p},\boldsymbol{q}}$.

    \item \textbf{The edge set.} Define
    \[
    \mathcal{E}_{n,k}
    :=
    \bigcup_{t=0}^{m-1}
    \left\{
    \left(
    u_{\boldsymbol{p},\boldsymbol{q}},
    u_{\boldsymbol{p}\mathbin{\|}a,\,
       \boldsymbol{q}\mathbin{\|}b}
    \right):
    (\boldsymbol{p},\boldsymbol{q})
    \in\textsf{SPair}_{\textsf{T}}^t,\quad
    a,b\in\textsf{T},\quad
    (\boldsymbol{p}\mathbin{\|}a,
     \boldsymbol{q}\mathbin{\|}b)
    \in\textsf{SPair}_{\textsf{T}}^{t+1}
    \right\}.
    \]

    \item \textbf{The root.} Define
    $r_{n,k}:=u_{\varepsilon,\varepsilon}$, corresponding to
    $\mathcal{B}(\varepsilon,\varepsilon)=\textsf{BP}_{\textsf{T}}$.
\end{itemize}
\end{definition}

Figure~\ref{fig:decision-tree} visualizes the tree  $\mathcal{T}_{2, 4}$.

\begin{figure}[t]
    \centering
    \resizebox{\textwidth}{!}{%
\begin{tikzpicture}[
    tree node/.style={
        circle,
        fill=black,
        inner sep=1.35pt
    },
    node label/.style={
        font=\scriptsize,
        text=black
    },
    leaf label/.style={
        font=\scriptsize,
        text=black,
        anchor=north
    },
    pi label/.style={
        font=\scriptsize,
        text=blue!70!black
    },
    tree edge/.style={
        draw=black!68,
        line width=0.38pt,
        shorten <=1.3pt,
        shorten >=1.3pt
    },
    pi arrow/.style={
        draw=blue!70!black,
        line width=0.65pt,
        -{Stealth[length=3.2pt,width=3.2pt]}
    },
    level line/.style={
        draw=yellow!42!orange,
        dashed,
        dash pattern=on 3pt off 2.5pt,
        line width=0.45pt
    },
    level label/.style={
        font=\small,
        anchor=east
    }
]

% Horizontal depth lines
\draw[level line] (-0.75,6.4) -- (23.75,6.4);
\draw[level line] (-0.75,4.2) -- (23.75,4.2);
\draw[level line] (-0.75,2.1) -- (23.75,2.1);

\node[level label] at (-1.00,6.4) {$\mathrm{dep}=0$};
\node[level label] at (-1.00,4.2) {$\mathrm{dep}=1$};
\node[level label] at (-1.00,2.1) {$\mathrm{dep}=2$};
\node[level label, text=blue!70!black]
    at (-1.00,0.30) {$\pi=$};

% Root
\node[tree node] (root) at (11.5,6.4) {};
\node[node label, anchor=south]
    at ([yshift=3pt]root.north)
    {$\varepsilon,\varepsilon$};

% #1 = node name
% #2 = x-coordinate
% #3 = displayed label
\newcommand{\parentnode}[3]{%
    \coordinate (#1-pos) at (#2,4.2);
    \draw[tree edge] (root) -- (#1-pos);
    \node[tree node] (#1) at (#1-pos) {};
    \node[node label, anchor=west]
        at ([xshift=3pt]#1.east) {$#3$};
}

% #1 = leaf name
% #2 = x-coordinate
% #3 = leaf label ab,cd
% #4 = permutation abdc
% #5 = parent node
\newcommand{\leafnode}[5]{%
    \coordinate (#1-pos) at (#2,2.1);
    \draw[tree edge] (#5) -- (#1-pos);
    \node[tree node] (#1) at (#1-pos) {};

    \node[leaf label]
        at ([yshift=-3pt]#1.south) {$#3$};

    \draw[pi arrow]
        (#2,1.48) -- (#2,0.68);

    \node[pi label] at (#2,0.30) {$#4$};
}

% Depth 1
\parentnode{p01}{0.5}{0,1}
\parentnode{p02}{2.5}{0,2}
\parentnode{p03}{4.5}{0,3}

\parentnode{p10}{6.5}{1,0}
\parentnode{p12}{8.5}{1,2}
\parentnode{p13}{10.5}{1,3}

\parentnode{p20}{12.5}{2,0}
\parentnode{p21}{14.5}{2,1}
\parentnode{p23}{16.5}{2,3}

\parentnode{p30}{18.5}{3,0}
\parentnode{p31}{20.5}{3,1}
\parentnode{p32}{22.5}{3,2}

% Depth 2 and the permutation row

% Children of 0,1
\leafnode{l0213}{0}{02,13}{0231}{p01}
\leafnode{l0312}{1}{03,12}{0321}{p01}

% Children of 0,2
\leafnode{l0123}{2}{01,23}{0132}{p02}
\leafnode{l0321}{3}{03,21}{0312}{p02}

% Children of 0,3
\leafnode{l0132}{4}{01,32}{0123}{p03}
\leafnode{l0231}{5}{02,31}{0213}{p03}

% Children of 1,0
\leafnode{l1203}{6}{12,03}{1230}{p10}
\leafnode{l1302}{7}{13,02}{1320}{p10}

% Children of 1,2
\leafnode{l1023}{8}{10,23}{1032}{p12}
\leafnode{l1320}{9}{13,20}{1302}{p12}

% Children of 1,3
\leafnode{l1032}{10}{10,32}{1023}{p13}
\leafnode{l1230}{11}{12,30}{1203}{p13}

% Children of 2,0
\leafnode{l2103}{12}{21,03}{2130}{p20}
\leafnode{l2301}{13}{23,01}{2310}{p20}

% Children of 2,1
\leafnode{l2013}{14}{20,13}{2031}{p21}
\leafnode{l2310}{15}{23,10}{2301}{p21}

% Children of 2,3
\leafnode{l2031}{16}{20,31}{2013}{p23}
\leafnode{l2130}{17}{21,30}{2103}{p23}

% Children of 3,0
\leafnode{l3102}{18}{31,02}{3120}{p30}
\leafnode{l3201}{19}{32,01}{3210}{p30}

% Children of 3,1
\leafnode{l3012}{20}{30,12}{3021}{p31}
\leafnode{l3210}{21}{32,10}{3201}{p31}

% Children of 3,2
\leafnode{l3021}{22}{30,21}{3012}{p32}
\leafnode{l3120}{23}{31,20}{3102}{p32}

\end{tikzpicture}%
}
    \caption{Visualization of the filtration tree $\mathcal{T}_{2, 4}$. For simplicity, $u_{\boldsymbol{p},\boldsymbol{q}}$ is written as $\boldsymbol{p}, \boldsymbol{q}$.}
    \label{fig:decision-tree}
\end{figure} 

\begin{lemma}
\label{lem:tarski-filtration-tree}
$\mathcal{T}_{n,k}$ is an $m$-equal-depth hard-instance filtration
tree for $\mathcal{F}_{n,k}$. Moreover,
\[
\mathcal{F}(u_{\boldsymbol{p},\boldsymbol{q}})
=
\{\boldsymbol{f}_\pi:\pi\in\mathcal{B}(\boldsymbol{p},\boldsymbol{q})\}.
\]
\end{lemma}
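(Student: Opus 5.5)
We will check, in order, that $\mathcal{T}_{n,k}$ is a rooted tree, that all its leaves lie at depth $m$, that the leaves are in bijection with $\mathcal{F}_{n,k}$, and finally the formula for $\mathcal{F}(u_{\boldsymbol{p},\boldsymbol{q}})$.

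\emph{Tree structure.} First we must see that $u_{\boldsymbol{p},\boldsymbol{q}}$ is well defined as a vertex, i.e.\ that distinct pairs in $\bigcup_t\textsf{SPair}^t_{\textsf{T}}$ give distinct sets $\mathcal{B}(\boldsymbol{p},\boldsymbol{q})$. Every $\mathcal{B}(\boldsymbol{p},\boldsymbol{q})$ is nonempty: its unused tokens contain exactly $k/2-|\{i:r_{p_i}=r\}|\geq 0$ pairs' worth of tokens in each layer $r$ (two per pair). Since $(\boldsymbol{p},\boldsymbol{q})$ is balanced, the number of unused tokens in each layer is even. We therefore pair them arbitrarily within layers and insert the pairs symmetrically into the middle to obtain a balanced completion. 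From any $\pi\in\mathcal{B}(\boldsymbol{p},\boldsymbol{q})$ we recover $(\boldsymbol{p},\boldsymbol{q})=\boldsymbol{ps}_t(\pi)$ once $t$ is known. To make the depth intrinsic, we identify vertices with the pairs themselves; equivalently, for $t<m$ the set $\mathcal{B}$ determines $t$ as the largest $t'$ such that $\boldsymbol{ps}_{t'}$ is constant on $\mathcal{B}$. This needs at least two completions disagreeing at position $t$, which holds because $m-t\geq1$ remaining pairs can be reordered or swapped. Each non-root pair $(\boldsymbol{p}\|a,\boldsymbol{q}\|b)$ has the unique parent $(\boldsymbol{p},\boldsymbol{q})$, obtained by truncation, and truncation of a balanced pair stays balanced. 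Hence every vertex has a unique path to $u_{\varepsilon,\varepsilon}$, so the graph is a tree rooted at $r_{n,k}$ with depth equal to $t$.

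\emph{Equal depth and leaves.} A vertex at depth $t<m$ has a child: choose a layer $r$ with unused tokens, which must contain at least two of them by the evenness above, and take $a\neq b$ from it. So all leaves have depth exactly $m$. At depth $m$ we have $|\boldsymbol{p}|+|\boldsymbol{q}|=N$, so $\boldsymbol{p}\|\boldsymbol{q}^{\mathrm{rev}}$ is the unique element $\pi$ of $\mathcal{B}(\boldsymbol{p},\boldsymbol{q})$. Define $\mathrm{ins}(u_{\boldsymbol{p},\boldsymbol{q}}):=\boldsymbol{f}_{\boldsymbol{p}\|\boldsymbol{q}^{\mathrm{rev}}}$. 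This map is surjective onto $\mathcal{F}_{n,k}$ because every $\pi\in\textsf{BP}_{\textsf{T}}$ satisfies $\boldsymbol{ps}_m(\pi)\in\textsf{SPair}^m_{\textsf{T}}$. It is injective because $\pi\mapsto\boldsymbol{f}_\pi$ is injective, as noted after Definition~\ref{def:hard-instance-family}, and $\pi\mapsto\boldsymbol{ps}_m(\pi)$ is injective.

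\emph{The formula for $\mathcal{F}(u_{\boldsymbol{p},\boldsymbol{q}})$.} We show that the leaves below $u_{\boldsymbol{p},\boldsymbol{q}}$ are exactly the $u_{\boldsymbol{ps}_m(\pi)}$ with $\pi\in\mathcal{B}(\boldsymbol{p},\boldsymbol{q})$. A leaf $u_{\boldsymbol{p}',\boldsymbol{q}'}$ is a descendant of $u_{\boldsymbol{p},\boldsymbol{q}}$ if and only if its depth-$t$ ancestor, which is $\boldsymbol{ps}_t$ of the corresponding permutation, equals $(\boldsymbol{p},\boldsymbol{q})$. This in turn holds if and only if $\pi=\boldsymbol{p}'\|\boldsymbol{q}'^{\mathrm{rev}}$ has prefix $\boldsymbol{p}$ and reversed suffix $\boldsymbol{q}$, i.e.\ $\pi\in\mathcal{B}(\boldsymbol{p},\boldsymbol{q})$. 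Applying $\mathrm{ins}$ then gives the formula.

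The only mildly delicate point is the well-definedness and identification of vertices with pairs, together with the nonemptiness and evenness argument. This is where we would spend care; the remaining steps are routine bookkeeping.
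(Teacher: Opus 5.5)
Your proposal is correct and follows the paper's proof essentially step for step: unique parents by truncation, a child at every depth $t<m$ from the evenness of the per-layer counts $k-2\boldsymbol{d}(\boldsymbol{p})_r$, the bijection between depth-$m$ pairs and $\textsf{BP}_{\textsf{T}}$ via $\boldsymbol{p}\mathbin{\|}\boldsymbol{q}^{\mathrm{rev}}$ together with injectivity of $\pi\mapsto\boldsymbol{f}_\pi$, and the characterization of descendant leaves through $\boldsymbol{ps}_t$ for the formula for $\mathcal{F}(u_{\boldsymbol{p},\boldsymbol{q}})$. You also check that each $\mathcal{B}(\boldsymbol{p},\boldsymbol{q})$ is nonempty and determines $(\boldsymbol{p},\boldsymbol{q})$, so the vertex labels are well defined; the paper leaves this implicit, and your argument for it is sound.
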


\begin{proof}
Every nonroot node has a unique parent, obtained by deleting the last entry
of each of its two sequences. This parent obviously remains a balanced
prefix--suffix pair.

Consider a node $u_{\boldsymbol{p},\boldsymbol{q}}$ at depth $t<m$. For
each layer $r$, exactly
$k-2\boldsymbol{d}(\boldsymbol{p})_r$ tokens from that layer remain unused. These
remaining counts are even, and their sum is $N-2t>0$. Thus some layer has
at least two unused tokens, which can be appended to
$\boldsymbol{p}$ and $\boldsymbol{q}$ to form a child. It follows that the
leaves are precisely the nodes at depth $m$.

If $(\boldsymbol{p},\boldsymbol{q})\in
\textsf{SPair}_{\textsf{T}}^m$, then
$\boldsymbol{p}\mathbin{\|}\boldsymbol{q}^{\mathrm{rev}}$ uses every token
exactly once and is balanced. Conversely, every
$\pi\in\textsf{BP}_{\textsf{T}}$ determines the leaf
$u_{\boldsymbol{p}_m(\pi),\boldsymbol{q}_m(\pi)}$. The injectivity of
$\pi\mapsto \boldsymbol{f}_\pi$, established in Section~\ref{sec:hard_instance}, shows
that $\mathrm{ins}$ is a bijection from the leaves onto
$\mathcal{F}_{n,k}$. The same correspondence below an arbitrary node gives
the relation we want.
\end{proof}

We have the following simplified difference criterion on $\mathcal{T}_{n, k}$:

\begin{definition}[Simplified difference criterion on $\mathcal{T}_{n, k}$] \label{def:difference-criterion}
    Let $\ell \leq m$, $(\boldsymbol{p}, \boldsymbol{q})\in \textsf{SPair}_{\textsf{T}}^{\ell}$. Define
    $$
    \hat{D}(u_{\boldsymbol{p}, \boldsymbol{q}}) = \{\boldsymbol{x} : \norm{\boldsymbol{x}} \leq W / 2 \wedge \boldsymbol{c}(\boldsymbol{p}) \preceq \boldsymbol{x}\} \cup \{\boldsymbol{x} : \norm{\boldsymbol{x}} > W / 2 \wedge \boldsymbol{x} \preceq B\boldsymbol{1} - \boldsymbol{c}(\boldsymbol{q})\}.
    $$
\end{definition}

\begin{lemma} \label{lem:difference-criterion}
    The map $\hat{D}$ in Definition~\ref{def:difference-criterion} is a simplified difference criterion.
\end{lemma}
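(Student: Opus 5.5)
We have to verify the two conditions of Definition~\ref{def:simplified_difference_criterion}: nesting, and the covering property $D(\boldsymbol{f}_\pi,\boldsymbol{f}_\sigma)\subseteq\hat D(u_{\boldsymbol{p},\boldsymbol{q}})$ for all $\pi,\sigma\in\mathcal{B}(\boldsymbol{p},\boldsymbol{q})$. Nesting is the easy part. If $u_{\boldsymbol{p}',\boldsymbol{q}'}$ is a descendant of $u_{\boldsymbol{p},\boldsymbol{q}}$, then $\boldsymbol{p}$ is a prefix of $\boldsymbol{p}'$ and $\boldsymbol{q}$ is a prefix of $\boldsymbol{q}'$. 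Since every token vector is nonnegative, $\boldsymbol{c}(\boldsymbol{p})\preceq\boldsymbol{c}(\boldsymbol{p}')$ and $B\boldsymbol{1}-\boldsymbol{c}(\boldsymbol{q}')\preceq B\boldsymbol{1}-\boldsymbol{c}(\boldsymbol{q})$. Consequently both the lower-half condition $\boldsymbol{c}(\boldsymbol{p}')\preceq\boldsymbol{x}$ and the upper-half condition $\boldsymbol{x}\preceq B\boldsymbol{1}-\boldsymbol{c}(\boldsymbol{q}')$ become more restrictive, which gives $\hat D(u_{\boldsymbol{p}',\boldsymbol{q}'})\subseteq\hat D(u_{\boldsymbol{p},\boldsymbol{q}})$.

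For the covering property, I will prove the contrapositive: if $\boldsymbol{x}\notin\hat D(u_{\boldsymbol{p},\boldsymbol{q}})$, then $\boldsymbol{f}_\pi(\boldsymbol{x})$ is the same for every $\pi\in\mathcal{B}(\boldsymbol{p},\boldsymbol{q})$. Let $t=|\boldsymbol{p}|$. Every such $\pi$ shares the spine vertices $\boldsymbol{x}_i^\pi=\boldsymbol{c}(\boldsymbol{p}_{<i})$ for $i\le t$, and $\boldsymbol{x}_{N-i}^\pi=B\boldsymbol{1}-\boldsymbol{c}(\boldsymbol{q}_{<i})$ for $i\le t$. In particular $\boldsymbol{x}_t^\pi=\boldsymbol{c}(\boldsymbol{p})$ and $\boldsymbol{x}_{N-t}^\pi=B\boldsymbol{1}-\boldsymbol{c}(\boldsymbol{q})$.

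Take first the case $\norm{\boldsymbol{x}}\le W/2$ with $\boldsymbol{c}(\boldsymbol{p})\not\preceq\boldsymbol{x}$. Because the spine is a chain, $\boldsymbol{x}_j^\pi\succeq\boldsymbol{x}_t^\pi$ for every $j\ge t$, so none of these vertices lies below $\boldsymbol{x}$. Hence $\rho_\pi(\boldsymbol{x})=\max\{i<t:\boldsymbol{c}(\boldsymbol{p}_{<i})\preceq\boldsymbol{x}\}$, which is independent of $\pi$; it exists because $i=0$ qualifies. Since $t\le m$, this index $i<t\le m$ satisfies $h(i)=i+1\le t$. So $\boldsymbol{f}_\pi(\boldsymbol{x})=\boldsymbol{x}^\pi_{i+1}$ is a spine vertex fixed by $\boldsymbol{p}$, and it is the same for all $\pi$.

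The case $\norm{\boldsymbol{x}}>W/2$ with $\boldsymbol{x}\not\preceq B\boldsymbol{1}-\boldsymbol{c}(\boldsymbol{q})$ is symmetric. Every $\boldsymbol{x}_j^\pi$ with $j\le N-t$ lies below $\boldsymbol{x}_{N-t}^\pi$, so none of them is above $\boldsymbol{x}$. The minimizing index is therefore $j>N-t\ge m$, determined by $\boldsymbol{q}$, and $h(j)=j-1\ge N-t$ again indexes a spine vertex fixed by $\boldsymbol{q}$.

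The main point needing care is the boundary indices. I need $h(i)$ to stay within the revealed part of the spine, i.e.\ $i+1\le t$ and $j-1\ge N-t$; both follow from $t\le m$ together with the strict inequalities $i<t$ and $j>N-t$. When $t=0$, $\hat D$ of the root should be all of $I$, and indeed it is, since $\boldsymbol{c}(\varepsilon)=\boldsymbol{0}\preceq\boldsymbol{x}$ and $\boldsymbol{x}\preceq B\boldsymbol{1}$ always hold. Balancedness of $\pi$ is not needed for this lemma.
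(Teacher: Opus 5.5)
Your proof is correct and follows essentially the same route as the paper. You prove nesting via $\boldsymbol{c}(\boldsymbol{p})\preceq\boldsymbol{c}(\boldsymbol{p}')$ and $B\boldsymbol{1}-\boldsymbol{c}(\boldsymbol{q}')\preceq B\boldsymbol{1}-\boldsymbol{c}(\boldsymbol{q})$, and the covering property via the contrapositive in the two half-lattice cases, using the shared revealed spine vertices and $t\le m$ to keep $h(\rho_\pi(\boldsymbol{x}))$ within the revealed part of the spine.
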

\begin{proof}
    Fix a node $u_{\boldsymbol{p},\boldsymbol{q}}$ at depth $\ell$ and two
    instances $\boldsymbol{f}_\pi,\boldsymbol{f}_\sigma\in
    \mathcal{F}(u_{\boldsymbol{p},\boldsymbol{q}})$. We first prove $D(\boldsymbol{f}_\pi,\boldsymbol{f}_\sigma)
    \subseteq\hat{D}(u_{\boldsymbol{p},\boldsymbol{q}})$. It is enough to show that
    $\boldsymbol{x}\notin
    \hat{D}(u_{\boldsymbol{p},\boldsymbol{q}})$ implies
    $\boldsymbol{f}_\pi(\boldsymbol{x})=\boldsymbol{f}_\sigma(\boldsymbol{x})$.
    
    Suppose first that $\norm{\boldsymbol{x}}\leq W/2$. Then
    $\boldsymbol{c}(\boldsymbol{p})\npreceq\boldsymbol{x}$. The two spines
    share the vertices with indices $0,\ldots,\ell$, and $\boldsymbol{x}_\ell^\pi=\boldsymbol{x}_\ell^\sigma
    =\boldsymbol{c}(\boldsymbol{p})$. Since each spine is increasing, no vertex with index at least $\ell$ lies
    below $\boldsymbol{x}$. Hence
    $a:=\rho_\pi(\boldsymbol{x})=\rho_\sigma(\boldsymbol{x})<\ell$.
    Because $\ell\leq m$, we have $a<m$, so $h(a)=a+1\leq\ell$. Therefore,
    $\boldsymbol{f}_\pi(\boldsymbol{x})=\boldsymbol{x}_{a+1}^\pi
    =\boldsymbol{x}_{a+1}^\sigma=\boldsymbol{f}_\sigma(\boldsymbol{x})$.
    
    Suppose instead that $\norm{\boldsymbol{x}}>W/2$.  Then $\boldsymbol{x} \not\preceq B\boldsymbol{1} - \boldsymbol{c}(\boldsymbol{q})$. The two spines share the vertices with indices $N - \ell, \ldots, N$ and $\boldsymbol{x}^\pi_{N - \ell} = \boldsymbol{x}_{N - \ell}^\sigma = B\boldsymbol{1} - \boldsymbol{c}(\boldsymbol{q})$. Since each spine is increasing, no vertex with index at most $N - \ell$ lies above $\boldsymbol{x}$. Hence $a := \rho_\pi(\boldsymbol{x}) = \rho_{\sigma}(\boldsymbol{x}) > N - \ell$. Because $N - \ell \geq N - m = m$, we have $a > m$, so $h(a) = a - 1 \geq N - \ell$. Therefore, $\boldsymbol{f}_\pi(\boldsymbol{x}) = \boldsymbol{x}^\pi_{a - 1} = \boldsymbol{x}^\sigma_{a - 1} = \boldsymbol{f}_\sigma(\boldsymbol{x})$.
    
    For clarity, the intuition behind the proof on the lower half of the lattice is visualized in Figure \ref{fig:difference-criterion}.
    \begin{figure}[H]
        \centering
        \resizebox{0.5\linewidth}{!}{%
\begin{tikzpicture}[
    >=Stealth,
    point/.style={
        circle,
        fill=black,
        draw=none,
        inner sep=0pt,
        minimum size=3.2pt
    },
    red point/.style={
        point,
        fill=red!80!black
    },
    axis/.style={
        draw=black,
        line width=0.55pt,
        -{Stealth[length=4.5pt,width=4pt]}
    },
    ordinary arrow/.style={
        draw=black,
        line width=0.65pt,
        -{Stealth[length=4pt,width=3.6pt]}
    },
    red arrow/.style={
        draw=red!80!black,
        line width=0.8pt,
        -{Stealth[length=4.2pt,width=3.8pt]}
    },
    continuation arrow/.style={
        draw=black,
        line width=0.65pt,
        dashed,
        dash pattern=on 3pt off 2.2pt,
        -{Stealth[length=4pt,width=3.6pt]}
    },
    every node/.style={
        font=\small
    }
]

% Important coordinates
\coordinate (O) at (0,0);
\coordinate (X) at (4.25,3.35);
\coordinate (P) at (4.25,1.85);
\coordinate (F) at (5.45,2.62);
\coordinate (C) at (6.95,3.82);
\coordinate (PiEnd) at (8.65,5.25);
\coordinate (SigmaEnd) at (8.85,4.48);

% Light-blue rectangle
\fill[blue!9] (O) rectangle (X);

% Coordinate axes, without labels
\draw[axis] (O) -- (9.45,0);
\draw[axis] (O) -- (0,5.95);

% Thick dashed upper and right boundaries
\draw[
    draw=blue!65!black,
    line width=3.2pt,
    dashed,
    dash pattern=on 7pt off 5pt
]
    (0,3.35) -- (X) -- (4.25,0);

% Label in the upper-left corner of the rectangle
\node[
    anchor=north west,
    text=blue!65!black
] at (0.16,3.17) {$\preceq x$};

% Monotone curved arrow from the origin to Pi(x)
% Both control points increase in both coordinates.
\draw[ordinary arrow]
    (0.08,0.08)
    .. controls (1.05,0.28) and (2.85,1.28) ..
    (P);

% Straight red arrow from x to Pi(x)
% The white underlay keeps it visible over the blue dashed boundary.
\draw[
    red arrow,
    preaction={
        draw=white,
        line width=2.0pt,
        shorten <=2.3pt,
        shorten >=2.3pt
    },
    shorten <=2.3pt,
    shorten >=2.3pt
]
    (X) -- (P);

% Short red arrow from Pi(x) to f_pi(x)=f_sigma(x)
\draw[
    red arrow,
    shorten <=2.2pt,
    shorten >=2.2pt
]
    (P) -- (F);

% Curved black arrow from f_pi(x)=f_sigma(x) to C(p)
\draw[
    ordinary arrow,
    shorten <=2.2pt,
    shorten >=2.2pt
]
    (F)
    .. controls (5.86,2.86) and (6.38,3.58) ..
    (C);

% Two different upward-right dashed continuations
\draw[
    continuation arrow,
    shorten <=2.2pt
]
    (C)
    .. controls (7.36,4.08) and (8.03,4.88) ..
    (PiEnd);

\draw[
    continuation arrow,
    shorten <=2.2pt
]
    (C)
    .. controls (7.45,3.94) and (8.20,4.12) ..
    (SigmaEnd);

% Solid points
\node[point]     at (X) {};
\node[red point] at (P) {};
\node[red point] at (F) {};
\node[point]     at (C) {};

% Point labels
\node[anchor=south west] at ([xshift=2pt,yshift=2pt]X)
    {$x$};

\node[
    anchor=north west,
    text=red!80!black
] at ([xshift=3pt,yshift=-2pt]P)
    {$\Pi(x)$};

\node[
    anchor=south west,
    text=red!80!black
] at ([xshift=3pt,yshift=2pt]F)
    {$f_\pi(x)=f_\sigma(x)$};

\node[
    anchor=south east
] at ([xshift=-3pt,yshift=3pt]C)
    {$\boldsymbol{C}(\boldsymbol{p})$};

% Labels at the ends of the dashed arrows
\node[anchor=south west]
    at ([xshift=2pt,yshift=1pt]PiEnd)
    {$\pi$};

\node[anchor=west]
    at ([xshift=3pt]SigmaEnd)
    {$\sigma$};

\end{tikzpicture}%
}
        \caption{{Visualization of the proof of Lemma \ref{lem:difference-criterion} on the lower half of the lattice.} This figure shows that $\norm{\boldsymbol{x}} \leq W / 2$ and $\boldsymbol{c}(\boldsymbol{p})\not\preceq \boldsymbol{x}$ implies $\boldsymbol{f}_\pi(\boldsymbol{x}) = \boldsymbol{f}_\sigma(\boldsymbol{x})$.}
        \label{fig:difference-criterion}
    \end{figure}
    
    It remains to verify $\hat{D}$ is nested along $\mathcal{T}$. If
    $u_{\boldsymbol{p}',\boldsymbol{q}'}$ is a descendant of
    $u_{\boldsymbol{p},\boldsymbol{q}}$, then
    $\boldsymbol{p}'$ and $\boldsymbol{q}'$ extend
    $\boldsymbol{p}$ and $\boldsymbol{q}$, respectively. Hence
    $\boldsymbol{c}(\boldsymbol{p})\preceq
    \boldsymbol{c}(\boldsymbol{p}')$ and
    $B\boldsymbol{1}-\boldsymbol{c}(\boldsymbol{q}')\preceq
    B\boldsymbol{1}-\boldsymbol{c}(\boldsymbol{q})$,
    which directly gives
    $\hat{D}(u_{\boldsymbol{p}',\boldsymbol{q}'})
    \subseteq
    \hat{D}(u_{\boldsymbol{p},\boldsymbol{q}})$.
\end{proof}

We now define the distinguishing random walk $\mathcal{A}$. For every node $u_{\boldsymbol{p}, \boldsymbol{q}}$ and $r\in [U + 1]$, let
$$
R_r(\boldsymbol{p}, \boldsymbol{q}) := \{a\in \textsf{T} : r_a = r,~a\notin \boldsymbol{p}, a\notin \boldsymbol{q}\}
$$
denote the set of remaining tokens at each layer, where membership in a sequence means membership among its entries, and
$$
M_r := |R_r(\boldsymbol{p}, \boldsymbol{q})| = k - 2\boldsymbol{d}(\boldsymbol{p})_r.
$$

At an internal node, set
$$
Z := \sum_{r = 0}^U M_r4^r.
$$
From the assumption that $k$ is even, all $M_r$ are even, and at least one is positive; hence $Z > 0$.

\begin{definition}[Distinguishing random walk on $\mathcal{T}_{n, k}$] \label{def:distinguishing-walk}
At a non-leaf node $u_{\boldsymbol{p}, \boldsymbol{q}}$, the random walk $\mathcal{A}$ first chooses a layer $r$ with $M_r > 0$, assigning it probability $M_r4^r / Z$. It then chooses ${p}^+$ and ${q}^+$ uniformly from $R_r(\boldsymbol{p}, \boldsymbol{q})$ without replacing. The next node is
$
u_{\boldsymbol{p} \mathbin{\|} p^+, \boldsymbol{q} \mathbin{\|} q^+}.
$
\end{definition}

For distinct $p^+, q^+\in R_r(\boldsymbol{p}, \boldsymbol{q})$, the corresponding transition probability is
\begin{align}
P_{\mathcal{A}}\left(u_{\boldsymbol{p} \mathbin{\|} p^+, \boldsymbol{q} \mathbin{\|} q^+} \mid u_{\boldsymbol{p}, \boldsymbol{q}}\right) = \frac{4^r}{Z(M_r - 1)}. \label{eq:transition_probability}
\end{align}
Thus, every child has positive probability and the outgoing probabilities sum to one. Moreover, every unused token $a = (j, r)$ satisfies
\begin{align}
    \Pr_{\mathcal{A}}[p^+ = a\mid u_{\boldsymbol{p}, \boldsymbol{q}}] = \Pr_{\mathcal{A}}[q^+ = a \mid u_{\boldsymbol{p}, \boldsymbol{q}}] = \frac{4^r}{Z}. \label{eq:token-marginal}
\end{align}

For a node $u_{\boldsymbol{p}, \boldsymbol{q}}$, denote its set of possible fixed points by $\mathcal{Z}(\boldsymbol{p}, \boldsymbol{q}) := \{\boldsymbol{x}_m^\pi : \pi\in \mathcal{B}(\boldsymbol{p}, \boldsymbol{q})\}$. Let $H(\boldsymbol{p}, \boldsymbol{q}) := |\mathcal{Z}(\boldsymbol{p}, \boldsymbol{q})|$.

\begin{lemma} \label{lem:A-uniform-fixed-point}
    Conditioned on $\mathcal{A}$ reaching $u_{\boldsymbol{p}, \boldsymbol{q}}$, the fixed point associated with the random leaf is uniform on $\mathcal{Z}(\boldsymbol{p}, \boldsymbol{q})$. Furthermore,
    $$
    H(\boldsymbol{p}, \boldsymbol{q}) = \prod_{r = 0}^U\binom{M_r}{M_r / 2}.
    $$
\end{lemma}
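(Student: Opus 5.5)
The plan is to parametrize the possible fixed points below $u_{\boldsymbol{p},\boldsymbol{q}}$ by a tuple of subsets, one for each layer, and then prove two things: that this parametrization is a bijection, and that the walk induces the uniform distribution on the tuples.

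\emph{Setting up the parametrization.} Let $\pi\in\mathcal{B}(\boldsymbol{p},\boldsymbol{q})$ with $|\boldsymbol{p}|=\ell$. Then
\[
\boldsymbol{x}_m^\pi=\boldsymbol{c}(\boldsymbol{p})+\sum_{i=\ell}^{m-1}\boldsymbol{v}_{\pi_i}.
\]
Balancedness forces each half of $\pi$ to contain exactly $k/2$ tokens of every layer. Hence the tokens $\pi_\ell,\dots,\pi_{m-1}$ are exactly $S_0\cup\dots\cup S_U$, where each $S_r\subseteq R_r(\boldsymbol{p},\boldsymbol{q})$ has $|S_r|=M_r/2$. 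Conversely, every such tuple $(S_r)_r$ is realized by some $\pi\in\mathcal{B}(\boldsymbol{p},\boldsymbol{q})$: order the chosen tokens so that each prefix token at position $i$ is paired with a suffix token of the same layer at position $N-1-i$. This is possible because each layer contributes $M_r/2$ tokens to each side. So the fixed point is $\boldsymbol{c}(\boldsymbol{p})+\Phi((S_r)_r)$, where
\[
\Phi((S_r)_r)_j=\sum_{r}4^r\,\mathbf{1}\{(j,r)\in S_r\}.
\]

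\emph{Injectivity and the count.} Tokens of one layer have distinct directions. So coordinate $j$ of $\Phi$ is a base-$4$ number whose digits lie in $\{0,1\}$, and digit $r$ records whether $(j,r)\in S_r$. Base-$4$ expansions are unique, so $\Phi$ is injective. This gives $H(\boldsymbol{p},\boldsymbol{q})=\prod_r\binom{M_r}{M_r/2}$.

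\emph{Uniformity.} The first-half set $S_r$ is exactly the set of tokens appended as $p^+$ at steps that choose layer $r$ during the walk from $u_{\boldsymbol{p},\boldsymbol{q}}$ to the leaf. By Equation~\eqref{eq:transition_probability}, each transition probability depends only on the layer counts $(M_r)$, never on which directions the tokens carry. Therefore the law of the whole trajectory is invariant under the group $G=\prod_r\mathrm{Sym}(R_r(\boldsymbol{p},\boldsymbol{q}))$ acting by relabeling tokens, and so the law of $(S_r)_r$ is $G$-invariant. The group $G$ acts transitively on the set of tuples $(S_r)_r$ with $|S_r|=M_r/2$, because each symmetric group is transitive on the subsets of a fixed size. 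A $G$-invariant distribution on a transitive $G$-set is uniform. Pushing it forward through the bijection $\Phi$ (shifted by $\boldsymbol{c}(\boldsymbol{p})$) gives the uniform distribution on $\mathcal{Z}(\boldsymbol{p},\boldsymbol{q})$.

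\emph{Main obstacle.} The step needing the most care is the invariance claim. One must check that relabeling tokens within a layer maps children to children of the relabeled node and preserves each transition probability along the entire path. This holds because the sets $R_r$ are transformed equivariantly and the counts $M_r$ are unchanged.
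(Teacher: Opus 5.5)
Your proof is correct and follows essentially the same route as the paper. You use the same parametrization of midpoints by layerwise half-subsets of $R_r(\boldsymbol{p},\boldsymbol{q})$, the same base-$4$ injectivity argument for the count, and the same relabeling symmetry (transition probabilities depend only on the $M_r$) for uniformity; the only cosmetic difference is that the paper exhibits one explicit layer-preserving bijection $\varphi_r$ carrying each tuple to a fixed reference tuple, whereas you phrase this as invariance under $\prod_r\mathrm{Sym}(R_r(\boldsymbol{p},\boldsymbol{q}))$ plus transitivity.
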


\begin{proof}
    A balanced leaf below $u_{\boldsymbol{p},\boldsymbol{q}}$ places exactly $M_r/2$ tokens from $R_r(\boldsymbol{p},\boldsymbol{q})$ in the remaining part of the prefix. If $A_r \subseteq R_r(\boldsymbol{p}, \boldsymbol{q})$ ($|A_r| = M_r / 2$) denotes this subset, then its midpoint is
    $$
    \mathrm{Mid}(A_0, \ldots, A_U) := \boldsymbol{c}(\boldsymbol{p}) + \sum_{r = 0}^U\sum_{a\in A_r}\boldsymbol{v}_a.
    $$

    Conversely, every tuple $(A_0, \ldots, A_U)$ satisfying these conditions can be completed to a balanced leaf by pairing the tokens in $A_r$ with those in $R_r\setminus A_r$ within each layer. Moreover, for each coordinate $j$, the base-$4$ expansion of $(\boldsymbol{x}^*-\boldsymbol{c}(\boldsymbol{p}))_j$ uniquely determines whether $(j,r)\in A_r$ for every $r$. Hence this correspondence is a bijection between the tuples $(A_0,\ldots,A_U)$ and the fixed points hidden in the subtree, which proves the formula for $H(\boldsymbol{p},\boldsymbol{q})$.

    It remains to prove the uniformity. Order the tokens in each $R_r(\boldsymbol{p}, \boldsymbol{q})$ by their directions, that is $(j_1, r) < (j_2, r)$ if $j_1 < j_2$, and let $A_r^{\mathrm{ref}}$ consist of the first $M_r / 2$ tokens in $R_r(\boldsymbol{p}, \boldsymbol{q})$ in this order. We compare an arbitrary tuple $(A_0, \ldots, A_U)$ with this representative tuple $(A_0^{\mathrm{ref}}, \ldots, A_U^{\mathrm{ref}})$.

    For each $r$, define a bijection $\varphi_r\colon \ R_r(\boldsymbol{p}, \boldsymbol{q})\rightarrow R_r(\boldsymbol{p}, \boldsymbol{q})$ by matching, in the order defined above, the $i$-th token of $A_r$ with the $i$-th token of $A_r^{\mathrm{ref}}$, and the $i$-th token of $R_r(\boldsymbol{p}, \boldsymbol{q}) \setminus A_r$ with the $i$-th token of $R_r(\boldsymbol{p}, \boldsymbol{q})\setminus A_r^{\mathrm{ref}}$. Applying these bijections token-wise to every path generated by the random walk below $u_{\boldsymbol{p}, \boldsymbol{q}}$ gives a bijection between the paths producing $\mathrm{Mid}(A_0, \ldots, A_U)$ and those producing $\mathrm{Mid}(A_0^{\mathrm{ref}}, \ldots, A_U^{\mathrm{ref}})$.

    This bijection couples the layer $r$ chosen at every step. Consequently, corresponding states have the same number $M_r$ of remaining tokens in every layer and hence the same value of $Z$. By Equation \eqref{eq:transition_probability}, the transition probability only depends on $r, M_r$ and $Z$. Hence, the transitions have the same probability. Thus corresponding paths have equal probabilities. It follows that every tuple $(A_0, \ldots, A_U)$, and therefore every $\operatorname{Mid}(A_0, \ldots, A_U)$ occurs with the same probability.
\end{proof}

By Lemma \ref{lem:A-uniform-fixed-point}, establishing $1/2$-anti-concentration now reduces to the simple fact that every non-leaf subtree contains at least two distinct possible points.

\begin{lemma}[Anti-concentration] \label{lem:A-anti-concentration}
    Let $a_0 = a_1 = \cdots  = a_{m - 1} = 1/2$. Then $\mathcal{A}$ satisfies the
    $\boldsymbol{a}=(a_0,\ldots,a_{m - 1})$-anti-concentration property for
    $\operatorname{Fix}_{n,k}$.
\end{lemma}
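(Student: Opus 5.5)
Fix a non-leaf node $u_{\boldsymbol{p},\boldsymbol{q}}$ at depth $t<m$ and a leaf $\ell$ below it, with fixed point $\boldsymbol{x}^*:=\operatorname{Fix}_{n,k}(\mathrm{ins}(\ell))\in\mathcal{Z}(\boldsymbol{p},\boldsymbol{q})$. By Lemma~\ref{lem:A-uniform-fixed-point}, conditioned on $\mathcal{A}\xrightarrow{t}u_{\boldsymbol{p},\boldsymbol{q}}$, the fixed point of the random leaf $V$ is uniform on $\mathcal{Z}(\boldsymbol{p},\boldsymbol{q})$. Hence
\[
\Pr_{V\xleftarrow{m}\mathcal{A}}\bigl[\operatorname{Fix}_{n,k}(\mathrm{ins}(\ell))\ne\operatorname{Fix}_{n,k}(\mathrm{ins}(V))\mid\mathcal{A}\xrightarrow{t}u_{\boldsymbol{p},\boldsymbol{q}}\bigr]
=1-\frac{1}{H(\boldsymbol{p},\boldsymbol{q})}.
\]
So it suffices to show that $H(\boldsymbol{p},\boldsymbol{q})\geq 2$ at every non-leaf node.

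For this I would use the product formula $H(\boldsymbol{p},\boldsymbol{q})=\prod_{r=0}^U\binom{M_r}{M_r/2}$ from the same lemma. As noted before Definition~\ref{def:distinguishing-walk}, every $M_r=k-2\boldsymbol{d}(\boldsymbol{p})_r$ is even. Moreover, $\sum_r M_r=N-2t>0$ because $t<m$. So some layer has $M_r\geq2$, and for that layer $\binom{M_r}{M_r/2}\geq\binom{2}{1}=2$. Every other factor is at least $1$, with $\binom{0}{0}=1$. Therefore $H\geq2$, and the probability above is at least $1-1/2=1/2=a_t$.

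There is no real obstacle here; the substantive work (uniformity and the bijection with tuples $(A_0,\ldots,A_U)$) is already done in Lemma~\ref{lem:A-uniform-fixed-point}. The only points to check are that the output of $\operatorname{Fix}_{n,k}$ on $\mathrm{ins}(V)$ is exactly the midpoint $\boldsymbol{x}_m^\pi$ counted in $\mathcal{Z}$, which holds by Lemma~\ref{lem:unique_fixed_point} and Lemma~\ref{lem:tarski-filtration-tree}, and that $\ell$'s own fixed point lies in $\mathcal{Z}(\boldsymbol{p},\boldsymbol{q})$, so that the probability of agreement is exactly $1/H$.
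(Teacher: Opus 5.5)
Your proposal is correct and follows essentially the same argument as the paper. Both use Lemma~\ref{lem:A-uniform-fixed-point} to reduce the claim to $H(\boldsymbol{p},\boldsymbol{q})\geq 2$, and both get this from the evenness of the $M_r$ together with $\sum_r M_r = N-2t>0$, which forces some layer with $M_r\geq 2$ and hence a factor $\binom{M_r}{M_r/2}\geq 2$ in the product formula.
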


\begin{proof}
    At a node of depth $t < m$,
    $$
    \sum_{r = 0}^U M_r = N - 2t = 2(m - t) > 0.
    $$
     Since every $M_r$ is even, some layer satisfies $M_r \geq 2$. Therefore, $H(\boldsymbol{p}, \boldsymbol{q}) \geq 2$. By Lemma \ref{lem:A-uniform-fixed-point}, for every fixed descendant leaf $\ell$ and a random descendant leaf $V\xleftarrow{m}\mathcal{A}$, 
     $$
     \Pr_{V\xleftarrow{m}\mathcal{A}}\left[\mathrm{Fix}_{n, k}(\mathrm{ins}(V)) = \mathrm{Fix}_{n, k}(\mathrm{ins}(\ell)) ~\middle|~ \mathcal{A}\xrightarrow{t} u_{\boldsymbol{p}, \boldsymbol{q}}\right] = \frac{1}{H(\boldsymbol{p}, \boldsymbol{q})} \leq 1/2.
     $$

     Taking the complementary event gives the required lower bound $a_t = 1/2$. 
\end{proof}

The next lemma gives the one-step contraction used to establish geometric-indistinguishability.

\begin{lemma}[$2/3$-Compression for Adjacent Layers] \label{lem:2/3-compression}
    Let $u=u_{\boldsymbol{p},\boldsymbol{q}}$ have depth $t<m$, and let $V_{t+1}\xleftarrow{t+1}\mathcal{A}$ be its random child. For every $\boldsymbol{x}\in[n]^k$,
    $$
    \Pr_{V_{t+1}\xleftarrow{t+1}\mathcal{A}}\left[\boldsymbol{x}\in \hat{D}(V_{t + 1})~\middle|~\mathcal{A}\xrightarrow{t}u\right]\leq \frac 23.
    $$
\end{lemma}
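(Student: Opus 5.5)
The plan is to reduce the event to a statement about the single new token, compute its probability exactly with the marginal formula \eqref{eq:token-marginal}, and bound it with the geometric-sum inequality $\sum_{4^r\le A}4^r\le\frac43 A$ that motivated the choice of powers of four. Fix $\boldsymbol{x}$ and write $V_{t+1}=u_{\boldsymbol{p}\|p^+,\boldsymbol{q}\|q^+}$. We split according to whether $\boldsymbol{x}$ lies in the lower half ($\norm{\boldsymbol{x}}\le W/2$) or the upper half ($\norm{\boldsymbol{x}}>W/2$). By Definition~\ref{def:difference-criterion}, in the lower half the event $\boldsymbol{x}\in\hat D(V_{t+1})$ depends only on $p^+$, and in the upper half only on $q^+$.

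\emph{Lower half.} Here $\boldsymbol{x}\in\hat D(V_{t+1})$ if and only if $\boldsymbol{c}(\boldsymbol{p})+\boldsymbol{v}_{p^+}\preceq\boldsymbol{x}$. If $\boldsymbol{c}(\boldsymbol{p})\not\preceq\boldsymbol{x}$, the probability is $0$, which is also immediate from nesting (Lemma~\ref{lem:difference-criterion}). Otherwise set $\boldsymbol{y}:=\boldsymbol{x}-\boldsymbol{c}(\boldsymbol{p})\succeq\boldsymbol{0}$. Then $p^+=(j,r)$ triggers the event exactly when $4^r\le y_j$. By \eqref{eq:token-marginal},
\[
\Pr\bigl[\boldsymbol{x}\in\hat D(V_{t+1})\bigr]=\frac1Z\sum_{j\in[k]}\ \sum_{\substack{r:\,(j,r)\text{ unused}\\ 4^r\le y_j}}4^r .
\]
For a fixed $j$ the layers $r$ in the inner sum are distinct with $4^r\le y_j$, so the inner sum is at most $\frac43 y_j$. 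Hence the numerator is at most $\frac43\bigl(\norm{\boldsymbol{x}}-\norm{\boldsymbol{c}(\boldsymbol{p})}\bigr)\le\frac43\bigl(W/2-\norm{\boldsymbol{c}(\boldsymbol{p})}\bigr)$.

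The key identity to check is $Z=W-2\norm{\boldsymbol{c}(\boldsymbol{p})}$. Indeed, $Z$ is the total length of the unused tokens, which equals $W-\norm{\boldsymbol{c}(\boldsymbol{p})}-\norm{\boldsymbol{c}(\boldsymbol{q})}$. Because $(\boldsymbol{p},\boldsymbol{q})\in\textsf{SPair}^t_{\textsf{T}}$ pairs tokens layer by layer, $\boldsymbol{d}(\boldsymbol{p})=\boldsymbol{d}(\boldsymbol{q})$, and therefore $\norm{\boldsymbol{c}(\boldsymbol{p})}=\norm{\boldsymbol{c}(\boldsymbol{q})}$. Thus $W/2-\norm{\boldsymbol{c}(\boldsymbol{p})}=Z/2$, and the probability is at most $\frac43\cdot\frac{Z/2}{Z}=\frac23$.

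\emph{Upper half.} Symmetrically, $\boldsymbol{x}\in\hat D(V_{t+1})$ if and only if $\boldsymbol{x}\preceq B\boldsymbol{1}-\boldsymbol{c}(\boldsymbol{q})-\boldsymbol{v}_{q^+}$. If this already fails with $\boldsymbol{v}_{q^+}$ removed, the probability is $0$. Otherwise set $\boldsymbol{y}:=B\boldsymbol{1}-\boldsymbol{c}(\boldsymbol{q})-\boldsymbol{x}\succeq\boldsymbol{0}$, so that $\norm{\boldsymbol{y}}=W-\norm{\boldsymbol{c}(\boldsymbol{q})}-\norm{\boldsymbol{x}}<W/2-\norm{\boldsymbol{c}(\boldsymbol{q})}=Z/2$. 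The event is again $4^{r}\le y_{j}$ for $q^+=(j,r)$, and the marginal of $q^+$ given by \eqref{eq:token-marginal} is the same as that of $p^+$. The identical computation gives the bound $2/3$.

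The only delicate points are the identity $Z=W-2\norm{\boldsymbol{c}(\boldsymbol{p})}$, which relies on the balanced pairing, and the observation that the tokens counted for a fixed direction $j$ have distinct layers. The latter is what makes the factor $4/3$ valid coordinatewise. I expect no further obstacle.
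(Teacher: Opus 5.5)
Your proposal is correct and follows the paper's proof essentially step for step. It uses the same split into the lower half and upper half, the same exact computation via the token marginal \eqref{eq:token-marginal}, the same coordinatewise bound $\sum_{4^r\le A}4^r\le\frac43A$, and the same identity $Z=W-2\norm{\boldsymbol{c}(\boldsymbol{p})}$. The only cosmetic difference is that you derive this identity from the total unused length together with $\boldsymbol{d}(\boldsymbol{p})=\boldsymbol{d}(\boldsymbol{q})$, whereas the paper reads it off directly from $M_r=k-2\boldsymbol{d}(\boldsymbol{p})_r$.
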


\begin{proof}
    If $\boldsymbol{x}\notin \hat{D}(u)$, the probability is zero by the nesting of $\hat{D}$. Assume henceforth that $\boldsymbol{x}\in \hat{D}(u)$. First observe that
    \begin{align}
        Z = \sum_{r = 0}^U 
\bigl(k-2\boldsymbol{d}(\boldsymbol{p})_r\bigr)4^r = W - 2\norm{\boldsymbol{c}(\boldsymbol{p})}. \label{eq:normalization-Z}
    \end{align}
    The same equation applies for $\boldsymbol{q}$ since $(\boldsymbol{p}, \boldsymbol{q})$ is balanced. We also use the elementary bound
    \begin{align}
        \sum_{\substack{r\in[U+1]:\,4^r\leq A}}4^r \leq \sum_{i = 0}^\infty \frac{1}{4^i}A =  \frac43 A \qquad(A\geq0). \label{eq:geometric-slack}
    \end{align}

    Suppose first that $\norm{\boldsymbol{x}} \leq W / 2$. Since $\boldsymbol{x}\in \hat{D}(u)$, the per-coordinate overheads $A_j := x_j - \boldsymbol{c}(\boldsymbol{p})_j$ are nonnegative. A child contains $\boldsymbol{x}$ in its simplified difference criterion when $p^+ = (j, r)$ satisfies $4^r \leq A_j$. By Equation \eqref{eq:token-marginal}, \begin{align*}
        \Pr_{V_{t+1}\xleftarrow{t+1}\mathcal{A}}\left[\boldsymbol{x}\in \hat{D}(V_{t + 1}) \mid \mathcal{A}\xrightarrow{t}u\right] &= \frac{1}{Z}\sum_{r = 0}^U\sum_{(j, r)\in R_r(\boldsymbol{p}, \boldsymbol{q})} 4^r\mathds{I}\{4^r \leq A_j\} \\
        &\leq \frac 1Z \sum_{j\in [k]}\sum_{\substack{r\in[U+1]:\,4^r\leq A_j}}4^r \\
        &\leq \frac{4}{3Z}\sum_{j\in [k]}A_j \\
        &= \frac{4}{3Z}\left(\norm{\boldsymbol{x}} - \norm{\boldsymbol{c}(\boldsymbol{p})}\right) \\
        &\leq \frac{4}{3Z}\left(\frac W2 - \norm{\boldsymbol{c}(\boldsymbol{p})}\right) = \frac 23,
    \end{align*}
    where the last equality uses Equation \eqref{eq:normalization-Z}.

    Suppose instead $\norm{\boldsymbol{x}} > W / 2$. Since $\boldsymbol{x}\in \hat{D}(u)$, the per-coordinate overheads $A_j := (B - \boldsymbol{c}(\boldsymbol{q})_j) - x_j$ are nonnegative. A child contains $\boldsymbol{x}$ in its simplified difference criterion when ${q}^+ = (j, r)$ satisfies $4^r \leq A_j$. By Equation \eqref{eq:token-marginal},
    \begin{align*}
        \Pr_{V_{t+1}\xleftarrow{t+1}\mathcal{A}}\left[\boldsymbol{x}\in \hat{D}(V_{t + 1}) \mid \mathcal{A}\xrightarrow{t}u\right] &= \frac{1}{Z}\sum_{r = 0}^U\sum_{(j, r)\in R_r(\boldsymbol{p}, \boldsymbol{q})} 4^r\mathds{I}\{4^r \leq A_j\} \\
        &\leq \frac 1Z \sum_{j\in [k]}\sum_{\substack{r\in[U+1]:\,4^r\leq A_j}}4^r \\
        &\leq \frac{4}{3Z}\sum_{j\in [k]}A_j \\
        &= \frac{4}{3Z}\left((\norm{B\boldsymbol{1}} - \norm{\boldsymbol{c}(\boldsymbol{q})}) - \norm{\boldsymbol{x}}\right) \\
        &< \frac{4}{3Z}\left(\frac W2 - \norm{\boldsymbol{c}(\boldsymbol{q})}\right) = \frac 23,
    \end{align*}
    which completes the proof.
\end{proof}

\begin{lemma}[Geometric-indistinguishability] \label{lem:A-geometric-indistinguishability}
$\mathcal{A}$ satisfies the $2/3$-geometric-indistinguishability property.
\end{lemma}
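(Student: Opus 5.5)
Fix $\boldsymbol{x}\in[n]^k$, depths $0\leq t<s\leq m$, and a node $u$ at depth $t$. The plan is an induction on $s-t$ that uses Lemma~\ref{lem:2/3-compression} as the one-step contraction, combined with the nesting of $\hat D$ and the Markov (top-down) structure of $\mathcal{A}$.

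\textbf{Base case.} For $s=t+1$, Lemma~\ref{lem:2/3-compression} gives directly
\[
\Pr\bigl[\boldsymbol{x}\in\hat D(V)\mid \mathcal{A}\xrightarrow{t}u\bigr]\leq \tfrac23 .
\]

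\textbf{Inductive step.} Assume the bound $(2/3)^{s-t}$ holds for depth $s$ and every starting node at depth $t$. Let $V_{s+1}$ be the random depth-$(s+1)$ node and $V_s$ its parent.

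First, by nesting (Lemma~\ref{lem:difference-criterion}), $\boldsymbol{x}\in\hat D(V_{s+1})$ implies $\boldsymbol{x}\in\hat D(V_s)$. Hence
\[
\Pr[\boldsymbol{x}\in\hat D(V_{s+1})] = \sum_{w} \Pr[V_s=w]\,\mathbf{1}\{\boldsymbol{x}\in\hat D(w)\}\,\Pr\bigl[\boldsymbol{x}\in\hat D(V_{s+1})\mid V_s=w\bigr],
\]
where the sum runs over depth-$s$ descendants $w$ of $u$ and all probabilities are conditioned on $\mathcal{A}\xrightarrow{t}u$.

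Second, the walk is top-down: given that it reaches $w$, the next step depends only on $w$. So the inner conditional probability equals the one-step probability starting from $w$. By Lemma~\ref{lem:2/3-compression}, it is at most $2/3$.

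Third, the remaining sum is exactly $\Pr[\boldsymbol{x}\in\hat D(V_s)\mid\mathcal{A}\xrightarrow{t}u]$. The induction hypothesis bounds this by $(2/3)^{s-t}$, so the total is at most $(2/3)^{s+1-t}$.

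The identity $p_w/p_u=\Pr[\mathcal{A}\xrightarrow{s}w\mid \mathcal{A}\xrightarrow{t}u]$ from Definition~\ref{def:induced-distribution} justifies writing the conditional distribution as this product over path steps.

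\textbf{Main obstacle.} The only delicate point is that Lemma~\ref{lem:2/3-compression} must apply at every intermediate node $w$ of depth $s<m$, including nodes with $\boldsymbol{x}\notin\hat D(w)$. Those nodes contribute zero, so they cause no trouble. The bound must also hold uniformly over $\boldsymbol{x}$ in both halves of the lattice, which the lemma already provides. Everything else is routine bookkeeping with the Markov property.
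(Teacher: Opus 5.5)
Your proposal is correct and follows essentially the same route as the paper. The paper conditions on the depth-$d$ node, uses nesting together with Lemma~\ref{lem:2/3-compression} and the Markov property to get the one-step bound $\Pr[E_{d+1}\mid V_t=u]\leq\frac23\Pr[E_d\mid V_t=u]$, and iterates from $t$ to $s$; your induction on $s-t$ is that same argument.
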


\begin{proof}
    Fix depths $0\leq t < s \leq m$, a node $u$ at depth $t$, and a query point $\boldsymbol{x}\in [n]^k$. Let $V_d$ denote the random node at depth $d$. For $d\geq t$, let $E_d$ be the event $\boldsymbol{x}\in \hat{D}(V_d)$. For every $t \leq d < s$, conditioning on the depth-$d$ node and applying Lemma \ref{lem:2/3-compression} gives
    \begin{align*}
        \Pr_{V_{d+1}\xleftarrow{d+1}\mathcal{A}}[E_{d + 1} \mid V_t = u] &= \sum_{\substack{w\in\mathcal{V}_{n,k}\\\mathrm{dep}(w)=d}} \Pr[V_d=w\mid V_t=u]\, \Pr[E_{d+1}\mid V_d=w] \\
        &\leq \frac23 \sum_{\substack{w\in\mathcal{V}_{n,k} \\ \mathrm{dep}(w)=d}} \Pr[V_d=w\mid V_t=u]\, \mathds{I}\{\boldsymbol{x}\in\hat{D}(w)\} \\
        &=\frac23\Pr_{V_d\xleftarrow{d}\mathcal{A}}[E_d\mid V_t=u].
    \end{align*}
    For the first equality, recall that $\mathcal{A}$ is Markov, and hence $\Pr[E_{d + 1} \mid V_d = w, V_t = u] = \Pr[E_{d + 1} \mid V_d = w]$.

    Iterating this inequality yields
    $$
    \Pr_{V_{s}\xleftarrow{s}\mathcal{A}}[E_s\mid V_t = u] \leq \left(\frac 23\right)^{s - t} \Pr_{V_t\xleftarrow{t}\mathcal{A}}[E_t\mid V_t=u] \leq \left(\frac 23\right)^{s - t}.
    $$
    This is precisely the required geometric-indistinguishability bound.
\end{proof}

With all the results above, we can prove our main theorem.

\begin{proof}[Proof of Theorem \ref{thm:main-theorem}]
    The case where $n = 1$ is trivial. When $k = 1$, the problem reduces to ordered search \cite{chang2008complexity} and is $\Omega(\log n)$ by \cite{HPJ01BinarySearch}. Otherwise, for the normalized parameters, Lemma \ref{lem:tarski-filtration-tree} provides an $m$-equal-depth filtration tree for $\mathcal{F}_{n, k}$, Lemma \ref{lem:difference-criterion} provides a simplified difference criterion, and Lemmas \ref{lem:A-anti-concentration} and \ref{lem:A-geometric-indistinguishability} establish the two required properties of $\mathcal{A}$. We apply Theorem \ref{thm:tree-adversary} with
    $$
    L = m ,\qquad a_0 = \cdots = a_{m - 1} = 1/2, \qquad r=\frac 23.
    $$

    For every constant $0\leq \epsilon < 1/2$, this gives \begin{align*}
        Q_\epsilon(\mathrm{Fix}_{n, k}) &\geq \left(1 - 2\sqrt{\epsilon(1 - \epsilon)}\right)\frac{1 - \sqrt{2/3}}{1 + \sqrt{2/3}}\sum_{t = 0}^{m - 1} a_t \\
        &= \left(1 - 2\sqrt{\epsilon(1 - \epsilon)}\right)\frac{1 - \sqrt{2/3}}{1 + \sqrt{2/3}}\frac m2 \\
        &= \Omega(k\log n).
    \end{align*}

    It remains to restore the original parameters. For arbitrary $n, k\geq 2$, let $\bar{k}$ be the largest even integer at most $k$, let $\bar{U}$ be maximal subject to $1 + \sum_{r = 0}^{\bar{U}}4^r \leq n$, and set $\bar{n} = 1 + \sum_{r = 0}^{\bar{U}}4^r$. It follows that $4(\bar{n} - 1) = \sum_{r = 1}^{\bar{U} + 1}4^r > n - 2$ and $\bar{n} \leq n$, hence $\bar{n} = \Theta(n)$ and consequently $\bar{U} = \Theta(\log n)$. Moreover, $\bar{k} \geq k - 1$, hence $\bar{k} = \Theta(k)$. By monotonicity of the Tarski query complexity in both parameters and by the fact that every Tarski algorithm computes the restriction on the hard instance family,
    $$
    Q_\epsilon(\textsc{Tarski}(n, k)) \geq Q_\epsilon(\textsc{Tarski}(\bar{n}, \bar{k})) \geq Q_\epsilon(\mathrm{Fix}_{\bar{n}, \bar{k}}) = \Omega(\bar{k}\log \bar{n}) = \Omega(k\log n).
    $$

    This completes the proof.
\end{proof}

Finally, we observe that our lower bound is already tight for this family of hard instances. Indeed, under the normalized parameter setting, the iterative algorithm that starts from $\boldsymbol{0}$ and repeatedly applies $\boldsymbol{x}\leftarrow\boldsymbol{f}(\boldsymbol{x})$ reaches the fixed point within $m = k(U + 1) / 2 = O(k\log n)$ iterations. Therefore,
$$
Q_\epsilon(\operatorname{Fix}_{n, k}) = \Theta(k\log n).
$$

\section{Conclusion} \label{sec:conclusion}

We prove an $\Omega(k\log n)$ bounded-error quantum query lower bound for
\textsc{Tarski}$(n,k)$. To the best of our knowledge, this is the first quantum lower bound for general $n$ and $k$, capturing
their joint influence on the complexity. The bound is tight in two extremal
regimes: it matches the optimal $\Theta(k)$ complexity when $n=2$ and the
optimal $\Theta(\log n)$ complexity when $k=1$~\cite{BPRRandomizedLowerbound,chang2008complexity,HPJ01BinarySearch}. For $n< k$, our result improves the best previously known classical lower bounds. When $n\geq k$, a logarithmic-factor gap of $\log n / \log k$ remains.

Beyond this application, the Tree-Filtration Adversary Method we use in the proof may be of
independent interest. For a hierarchically organized family of hard instances
and a suitable distinguishing random walk, the method constructs a canonical
nonnegative adversary matrix and reduces its analysis to two probabilistic
properties: anti-concentration and geometric-indistinguishability. By
separating the combinatorial design of the hard instances from the spectral
analysis of the adversary matrix, this framework potentially provides a reusable route
to quantum query lower bounds for other problems with tree-structured instance
families.

\paragraph{Open Problems.} We leave the following open problems for future investigation:

\begin{enumerate}
    \item Can quantum algorithms achieve a speedup over classical algorithms for $\textsc{Tarski}(n,k)$? \cite{phillips2026quantum} has shown that even quantum algorithms cannot readily escape the nested binary-search structure to achieve a speedup over classical algorithms in the low-dimensional regime. Meanwhile, our $\Omega(k\log n)$ lower bound suggests that quadratic-speedup techniques such as the wildcard search in Appendix \ref{apdx:bpr25-upperbound} are unlikely to be effective. However, developing a coherent version of the \textit{decomposition lemma} \cite{dang2011computational,chen2022improved} may provide a path toward a quantum speedup over classical algorithms in high dimensions.
    
    \item What is the quantum query complexity of the multidimensional-herringbone instances \cite{branzei2025tarski}? This family of hard instances yields a $\Omega(k\log^2 n / \log k)$ classical lower bound, hence its quantum query complexity is worth studying.  A key step established in Lemma 18 of \cite{branzei2025tarski} implies potential quantum speedup: it shows that for any $0 \leq x \leq (n − 1)k$, one can find a spine vertex with $\ell_1$-norm $x$ using $O(k \log n)$ queries, and it is worth investigating whether an approach similar to the wildcard search in Appendix \ref{apdx:bpr25-upperbound} can yield a quadratic quantum speedup. This lemma is used to derive the $O(k\log n\log(nk))$ query classical upper bound in \cite{branzei2025tarski}, making it a natural starting point for exploring quantum improvements.

    \item Can we extend the Tree-Filtration Adversary Method? The current version of the method is a simple abstraction sufficient to handle the two applications in this paper. Possible extensions include, but are not limited to, allowing non-equal-depth filtration trees, permitting the map from leaves to hard instances to be surjective rather than bijective, allowing nonuniform anti-concentration, and introducing layer-specific or even subtree-specific geometric indistinguishability. 
\end{enumerate}

\paragraph{Acknowledgement.} We thank Yuhao Li for early disscusion in this project. Tongyang Li and Ziyi Yang are supported by the National Natural Science Foundation of China under Grant No.~62372006. We applied GPT-5.6 to polishing presentation and checking the clarity and completeness of proofs.

\newcommand{\etalchar}[1]{$^{#1}$}

\newpage
\appendix

\section{Quantum Query Complexity of a Previous Hard Instance Family} \label{apdx:bpr25-upperbound}

In this section, we prove the quantum query complexity of the hard instance family in \cite{BPRRandomizedLowerbound} is: for constant $0 < \epsilon < 1/2$,
$$
Q_\epsilon(\operatorname{Fix}^{\text{BPR}}_{n, k}) =\Omega\left(\frac{\sqrt{k}}{\log k}\right) ~ \text{and}~ Q_\epsilon(\operatorname{Fix}^{\text{BPR}}_{n, k}) = O(\sqrt{k}\log k\log n\log\log n).
$$

Assume $n, k\geq 2$. Recall that in \cite{BPRRandomizedLowerbound}, for any $\boldsymbol{a}\in [n]^k$, a corresponding hard instance $\boldsymbol{f}_{\boldsymbol{a}} \colon [n]^k\rightarrow [n]^k$ is defined coordinatewise by
$$
\boldsymbol{f}_{\boldsymbol{a}}(\boldsymbol{v})_i = \begin{cases}
    v_i - 1, & \text{if $v_i > a_i$ and $\forall j < i, v_j \leq a_j$,} \\
    v_i + 1, & \text{if $v_i < a_i$ and $\forall j < i, v_j \geq a_j$,} \\
    v_i,     & \text{otherwise.}
\end{cases}
$$

Denote the fixed point computation restricted to this family of hard instances by $\operatorname{Fix}^{\text{BPR}}_{n, k}$. We first prove that $\operatorname{Fix}^{\text{BPR}}_{n, k}$ admits an $\widetilde{O}(\sqrt k)$-query quantum algorithm.

We recover the hidden vector $\boldsymbol{a}$ by reducing each round of a coordinatewise binary search to the quantum search-with-wildcards problem of \cite{WildcardSearch}.

Fix a threshold vector $\boldsymbol{m} \in \{1, \ldots, n - 1\}^k$ and define $\boldsymbol{b}$ by
$$
b_i = \mathbf{1}\{a_i < m_i\}.
$$

We first show how to simulate wildcard queries to $\boldsymbol{b}$ using only a constant number of queries to $\mathbf{O}_{\boldsymbol{f}_{\boldsymbol{a}}}$.

For $S\subseteq [k]$, Algorithm~\ref{alg:judge-all-0} tests whether $b_i = 0$ for every $i \in S$ using one query.

\begin{algorithm}[H]
\caption{$\textsc{JudgeAllZero}(S, \boldsymbol{m})$}
\label{alg:judge-all-0}
\begin{algorithmic}[1]
    \Require{$S\subseteq [k], \boldsymbol{m}\in [n]^k$}
    \State Define $\boldsymbol{m}' \in [n]^k$ by
    $$
    m'_i = \begin{cases}
    m_i,& i\in S, \\
        0, & i\notin S.
    \end{cases}
    $$
    \State Query $\boldsymbol{v}\leftarrow \mathbf{O}_{\boldsymbol{f}_{\boldsymbol{a}}}(\boldsymbol{m}')$
    \If{$\exists i\in [k], v_i = m'_i - 1$}
        \State \textbf{return} False
    \Else
        \State \textbf{return} True
    \EndIf
\end{algorithmic}
\end{algorithm}

\begin{lemma} \label{lem:judge-all-0}
    Algorithm \ref{alg:judge-all-0} returns True if and only if $\forall i\in S, b_i = 0$.
\end{lemma}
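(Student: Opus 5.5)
Fix $S\subseteq[k]$ and $\boldsymbol{m}\in\{1,\ldots,n-1\}^k$, and let $\boldsymbol{m}'$ be as in Algorithm~\ref{alg:judge-all-0}. The plan is to compute $\boldsymbol{v}=\boldsymbol{f}_{\boldsymbol{a}}(\boldsymbol{m}')$ directly from the coordinatewise definition. We will show that the only coordinate that can decrease is the \emph{first} index $i\in S$ with $a_i<m_i$, and that it does decrease whenever such an index exists. Since every coordinate changes by at most one under $\boldsymbol{f}_{\boldsymbol{a}}$, the event ``$v_i=m'_i-1$ for some $i$'' is the same as ``some coordinate strictly decreases''. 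So it suffices to characterize when a decrease occurs.

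First, a coordinate $i\notin S$ never decreases. Here $m'_i=0$, and the first case of the definition requires $v_i>a_i\ge 0$, which is impossible. For $i\notin S$ we are therefore in the increase or unchanged case. Note also that $v_i=-1$ is not a valid value, so the test cannot fire at such $i$.

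Next consider $i\in S$. The decrease case requires $m_i>a_i$, i.e.\ $b_i=1$, together with $m'_j\le a_j$ for all $j<i$.
\begin{itemize}
\item For $j\notin S$, $m'_j=0\le a_j$ holds automatically.
\item For $j\in S$, the condition $m_j\le a_j$ is exactly $b_j=0$.
\end{itemize}
Hence coordinate $i$ decreases if and only if $i\in S$, $b_i=1$, and $b_j=0$ for all $j\in S$ with $j<i$.

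The two directions now follow.
\begin{itemize}
\item If $b_i=0$ for all $i\in S$, no coordinate satisfies this condition, so nothing decreases and the algorithm returns True.
\item Otherwise, let $i^*$ be the smallest index in $S$ with $b_{i^*}=1$. The condition holds at $i^*$, so $v_{i^*}=m_{i^*}-1=m'_{i^*}-1$ and the algorithm returns False.
\end{itemize}

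The only potentially delicate point is the phrase ``$\exists i,\ v_i=m'_i-1$'' at coordinates with $m'_i=0$. We handle it by the observation above that no coordinate outside $S$ decreases, interpreting $m'_i-1$ as an integer rather than modulo $n$. The rest is direct case checking.
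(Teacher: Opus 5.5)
Your proof is correct and follows essentially the same route as the paper. Both arguments observe that a decrease can occur only at a coordinate with $m'_i>a_i$, which must lie in $S$ with $b_i=1$ because $m'_j=0\le a_j$ off $S$, and that the first such coordinate does decrease because every earlier coordinate satisfies $m'_j\le a_j$. Your remark that $m'_i-1$ must be read as an integer rather than modulo $n$ is a worthwhile addition the paper leaves implicit: for $n=2$ the modular reading would make the test fire on an increase from $0$ to $1$ outside $S$.
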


\begin{proof}
    Note that the oracle decreases some coordinate only if $m'_i > a_i$ for some $i\in [k]$. Conversely, if at least one such coordinate exists, the first one necessarily satisfies the condition in the first case of the definition of $\boldsymbol{f}_{\boldsymbol{a}}$. Since $m'_i = 0\leq a_i$ outside $S$, Algorithm \ref{alg:judge-all-0} returns false if and only if $m_i > a_i$ for some $i\in S$, or equivalently, if and only if $b_i = 1$ for some $i\in S$.
\end{proof}

Similarly, Algorithm \ref{alg:judge-all-1} tests if $\forall i\in S, b_i = 1$ with $1$ query to $\mathbf{O}_{\boldsymbol{f}_{\boldsymbol{a}}}$.

\begin{algorithm}[H]
\caption{$\textsc{JudgeAllOne}(S, \boldsymbol{m})$}
\label{alg:judge-all-1}
\begin{algorithmic}[1]
    \Require{$S\subseteq [k], \boldsymbol{m}\in [n]^k$}
    \State Define $\boldsymbol{m}'$ by
    $$
    m'_i = \begin{cases}
        m_i - 1,& i\in S, \\
        n - 1, & \text{otherwise}.
    \end{cases}
    $$
    \State Query $\boldsymbol{v}\leftarrow \mathbf{O}_{\boldsymbol{f}_{\boldsymbol{a}}}(\boldsymbol{m}')$
    \If{$\exists i\in [k], v_i = m'_i + 1$}
        \State \textbf{return} False
    \Else
        \State \textbf{return} True
    \EndIf
\end{algorithmic}
\end{algorithm}

\begin{lemma} \label{lem:judge-all-1}
    Algorithm \ref{alg:judge-all-1} returns True if and only if $\forall i \in S, b_i = 1$.
\end{lemma}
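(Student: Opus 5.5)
We mirror the proof of Lemma~\ref{lem:judge-all-0}, with the roles of ``decrease'' and ``increase'' exchanged. By construction, $m'_i = m_i - 1$ for $i\in S$, and $m'_i = n-1$ for $i\notin S$. We argue in two directions.

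\textbf{The oracle can increase a coordinate only if some coordinate of $\boldsymbol{m}'$ is below $\boldsymbol{a}$.} By the definition of $\boldsymbol{f}_{\boldsymbol{a}}$, a coordinate $i$ is increased (i.e.\ $v_i = m'_i+1$) only if $m'_i < a_i$. Since $a_i \leq n-1 = m'_i$ for $i\notin S$, no coordinate outside $S$ can be increased. Moreover, for $i\in S$ we have $m'_i < a_i$ iff $m_i - 1 < a_i$ iff $a_i \geq m_i$ iff $b_i = 0$. Hence, if Algorithm~\ref{alg:judge-all-1} returns False, some $i\in S$ has $b_i=0$.

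\textbf{Conversely, some coordinate is actually increased.} Suppose some $i\in S$ has $b_i=0$. Then the set $J=\{j : m'_j < a_j\}$ is nonempty; let $i^*$ be its least element. We must check the guard of the second case of $\boldsymbol{f}_{\boldsymbol{a}}$ at $i^*$, namely $m'_j \geq a_j$ for all $j<i^*$. This holds because every $j<i^*$ lies outside $J$. Therefore $v_{i^*} = m'_{i^*}+1$, and the algorithm returns False.

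There is one subtlety: the test in the algorithm is ``$\exists i,\ v_i = m'_i+1$'', so we must make sure no coordinate can equal $m'_i+1$ by some other route. The other cases give $v_i \in \{m'_i-1, m'_i\}$, so they cannot produce $m'_i+1$.

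Combining both directions, the algorithm returns True iff no $i\in S$ has $b_i=0$, i.e.\ iff $b_i=1$ for all $i\in S$. I expect no real obstacle here; the only care needed is the first-index guard and checking that the padding value $n-1$ never triggers an increase. A side remark is that $m'_i=m_i-1\geq 0$ is well-defined because $m_i\geq 1$.
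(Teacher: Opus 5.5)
Your proof is correct and takes essentially the same approach as the paper: an increase at coordinate $i$ requires $m'_i < a_i$, the padding value $n-1$ excludes coordinates outside $S$, and the least index with $m'_i < a_i$ satisfies the guard of the second case. Your extra check that the remaining cases give $v_i \in \{m'_i - 1, m'_i\}$, so they cannot trigger the test, fills in a step the paper leaves implicit.
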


\begin{proof}
    Note that the oracle increases some coordinate only if $m'_i < a_i$ for some $i\in [k]$. Conversely, if at least one such coordinate exists, the first one necessarily satisfies the condition in the second case of the definition of $\boldsymbol{f}_{\boldsymbol{a}}$. Since $m'_i = n - 1 \geq a_i$ outside $S$, Algorithm \ref{alg:judge-all-1} returns false if and only if $m_i \leq a_i$ for some $i\in S$, or equivalently, if and only if $b_i = 0$ for some $i\in S$.
\end{proof}

We can now implement the required wildcard oracle in Algorithm \ref{alg:wildcard-oracle}. 

\begin{algorithm}[H]
\caption{$\textsc{WildcardOracle}(S, \boldsymbol{x})$}
\label{alg:wildcard-oracle}
\begin{algorithmic}[1]
    \Require{$S\subseteq [k], \boldsymbol{x}\in [2]^k$}
    \State $S_1 \leftarrow \{i\in S : x_i = 1\}$ and $S_0 \leftarrow \{i\in S:x_i = 0\}$
    \State \textbf{return} $\textsc{JudgeAllOne}(S_1, \boldsymbol{m}) \wedge \textsc{JudgeAllZero}(S_0, \boldsymbol{m})$
\end{algorithmic}
\end{algorithm}

Algorithm \ref{alg:wildcard-oracle} returns True exactly when $\boldsymbol{x}_S = \boldsymbol{b}_S$. Each call uses two queries to $\mathbf{O}_{\boldsymbol{f}_{\boldsymbol{a}}}$. To run this algorithm coherently, it suffices to store its output and then run the algorithm once in reverse to uncompute the work registers. The number of queries to \(\mathbf{O}_{\boldsymbol{f}_{\boldsymbol{a}}}\) remains constant, and in particular $4$. By Theorem 1 of \cite{WildcardSearch}, the entire string $\boldsymbol{b}$ can therefore be recovered with $O(\sqrt{k}\log k)$ quantum queries to $\textsc{WildcardOracle}$ with constant probability.

Finally, maintain an interval $[l_i, r_i]$ of possible values for each coordinate $a_i$. In every round, choose the midpoint $m_i = \lceil(l_i + r_i) / 2\rceil$ for each nonsingleton interval $[l_i, r_i]$ ($m_i$ for $l_i = r_i$ can be chosen arbitrarily, we choose $1$ for convenience), recover the comparison bits $b_i = \mathbf{1}\{a_i < m_i\}$ simultaneously, and retain the indicated half of each nonsingleton interval. After $\lceil \log_2 n\rceil$ rounds, all coordinates of $\boldsymbol{a}$ are determined. 

It takes $O(\log \log n)$ repetitions to amplify the success probability of each wildcard-search subroutine to $1 - O(1 / \log n)$, which suffices to union bound the success probability of $\lceil\log n\rceil$ rounds.  Hence the total quantum query complexity is $O(\sqrt{k}\log k\log n\log\log n)$ when $\epsilon$ is constant.

Furthermore, we can prove that this query upper bound is near optimal. More specifically, we prove
$$
Q_\epsilon(\operatorname{Fix}^{\text{BPR}}_{n, k}) = \widetilde{\Omega}\left(\sqrt k\right).
$$
We first restrict the hard instance family to $\{\boldsymbol{f}_{(n - 1)\boldsymbol{a}} : \boldsymbol{a}\in [2]^k\}$, and denote the resulting problem by $\widetilde{\operatorname{Fix}}_{n, k}$. Then, we show how an quantum algorithm for $\widetilde{\operatorname{Fix}}_{n,k}$ can be used to solve \(k\)-bit search-with-wildcard problem.

For any binary string $\boldsymbol{a}\in [2]^k$, given the following wildcard oracle access to $\boldsymbol{a}$: for $S\subseteq [k], \boldsymbol{x}\in [2]^{|S|}, z\in \{0, 1\}$
$$
\mathbf{O}_{\boldsymbol{a}}\ket{S, \boldsymbol{x}, z} = \ket{S, \boldsymbol{x}, z\oplus \mathds{I}\{\boldsymbol{a}_S = \boldsymbol{x}\}},
$$
we show how to simulate $\mathbf{O}_{\boldsymbol{f}_{(n - 1)\boldsymbol{a}}}$ using access to $\mathbf{O}_{\boldsymbol{a}}$. Throughout the remainder of this section, whenever $S\subseteq[k]$, we regard the elements of $S$ as listed in increasing order. It suffices to find the first coordinate $i$ where $x_i > 0, a_i = 0$ and the first coordinate $j$ where $x_j < n - 1, a_j = 1$. These coordinates can be found with binary search as in Algorithm \ref{alg:find-first-1} and Algorithm \ref{alg:find-first-0} respectively, and $\mathbf{O}_{\boldsymbol{f}_{(n - 1)\boldsymbol{a}}}$ can be implemented using Algorithm \ref{alg:hard-function}.
\begin{figure}[H]
\noindent
\begin{minipage}[t]{0.48\textwidth}
\vspace{0pt}
\begin{algorithm}[H]
\caption{$\textsc{FindFirstZero}(\boldsymbol{x})$}
\label{alg:find-first-1}
\begin{algorithmic}[1]
    \Require{$\boldsymbol{x}\in [n]^k$};
    \State $S \leftarrow \{i \in [k]:x_i > 0\}$;
    \State $l\leftarrow 1, r\leftarrow |S|, p\leftarrow 0$;
    \While{$l \leq r$}
        \State $m \leftarrow \lfloor(l + r) / 2\rfloor$;
        \State $Q \leftarrow \textsf{first $m$ elements in $S$}$;
        \If{$\mathbf{O}_{\boldsymbol{a}}(Q, \boldsymbol{1}) = 1$}
            \State $p\leftarrow m, l\leftarrow m + 1$;
        \Else
            \State $r\leftarrow m - 1$;
        \EndIf
    \EndWhile
    \If{$p = |S|$}
        \State \textbf{return} $\textsf{None}$;
    \Else
        \State \textbf{return} \textsf{the $p + 1$-th element in $S$};
    \EndIf
\end{algorithmic}
\end{algorithm}
\end{minipage}
\hfill
\begin{minipage}[t]{0.48\textwidth}
\vspace{0pt}
\begin{algorithm}[H]
\caption{$\textsc{FindFirstOne}(\boldsymbol{x})$}
\label{alg:find-first-0}
\begin{algorithmic}[1]
    \Require{$\boldsymbol{x}\in [n]^k$};
    \State $S \leftarrow \{i \in [k]:x_i < n - 1\}$;
    \State $l\leftarrow 1, r\leftarrow |S|, p\leftarrow 0$;
    \While{$l \leq r$}
        \State $m \leftarrow \lfloor(l + r) / 2\rfloor$;
        \State $Q \leftarrow \textsf{first $m$ elements in $S$}$;
        \If{$\mathbf{O}_{\boldsymbol{a}}(Q, \boldsymbol{0}) = 1$}
            \State $p\leftarrow m, l\leftarrow m + 1$;
        \Else
            \State $r\leftarrow m - 1$;
        \EndIf
    \EndWhile
    \If{$p = |S|$}
        \State \textbf{return} $\textsf{None}$;
    \Else
        \State \textbf{return} \textsf{the $p + 1$-th element in $S$};
    \EndIf
\end{algorithmic}
\end{algorithm}
\end{minipage}
\end{figure}

\begin{algorithm}[H]
\caption{$\boldsymbol{f}_{(n - 1)\boldsymbol{a}}(\boldsymbol{x})$}
\label{alg:hard-function}
\begin{algorithmic}[1]
    \Require $\boldsymbol{x}\in [n]^k$;
    \State $i\leftarrow \textsc{FindFirstZero}(\boldsymbol{x}), j\leftarrow \textsc{FindFirstOne}(\boldsymbol{x})$;
    \State $\boldsymbol{y} = \boldsymbol{x}$;
    \If{$i\ne \textsf{None}$}
        \State $y_i = y_i - 1$;
    \EndIf
    \If{$j \ne \textsf{None}$}
        \State $y_j = y_j + 1$;
    \EndIf
    \State \textbf{return} $\boldsymbol{y}$;
\end{algorithmic}
\end{algorithm}

We analyze Algorithm \ref{alg:find-first-1} as an example; the analysis of Algorithm \ref{alg:find-first-0} is analogous. On line 6, $\mathbf{O}_{\boldsymbol{a}}(Q, \boldsymbol{1})$ returns $1$ if the first $m$ entries of $S$ satisfy $a_i = 1$. Algorithm \ref{alg:find-first-1} is a standard binary search procedure whose answer is stored in $p$, hence $p$ stores the length of the longest prefix of $S$ all of whose elements satisfy $a_i = 1$, and $p = 0$ if no such prefix exists. Consequently, the correctness of Algorithm \ref{alg:find-first-1} reduces to the following case analysis:
\begin{enumerate}
    \item If $p = 0$ and $S\neq \varnothing$, then the minimal element $i$ of $S$, which satisfies $x_i > 0$ and $a_i = 0$, is the required coordinate. The procedure returns the smallest element in $S$, as intended.
    \item If $0 < p < |S|$, then the first $p$ elements in $S$ satisfy $x_i > 0$, $a_i = 1$. However, the $p + 1$-th element of $S$ satisfies $x_i > 0$, $a_i = 0$, which is returned as intended.
    \item If $p = |S|$, then $a_i = 1$ for all $i\in S$, equivalently, for every coordinate with $x_i > 0$. Hence, the required coordinate does not exist, and the procedure returns $\textsf{None}$.
\end{enumerate}

Together with the analysis above, it is immediate that Algorithm \ref{alg:hard-function} implements $\boldsymbol{f}_{(n - 1)\boldsymbol{a}}$ with $O(\log k)$ wildcard queries. To implement the simulation coherently, we run the binary search procedure for a fixed $O(\log(k+1))$ number of iterations, using an active flag and dummy queries to pad branches that terminate early, and then reverse the computation to uncompute all workspace registers. The number of wildcard queries remains $O(\log k)$. The unique fixed point of $\boldsymbol f_{(n-1)\boldsymbol a}$ is $(n-1)\boldsymbol a$, from which \(\boldsymbol a\) can be recovered without any additional queries. Consequently, an algorithm that computes $\widetilde{\operatorname{Fix}}_{n, k}$ in $T$ queries would compute $k$-bit $\textsc{WildcardSearch}$ within $O(T\log k)$ queries. Thus,
$$
Q_\epsilon(\textsc{WildcardSearch}_k) \leq O\left(\log k \cdot Q_\epsilon(\widetilde{\operatorname{Fix}}_{n, k})\right).
$$
Since $Q_\epsilon(\textsc{WildcardSearch}_k) \geq \Omega(\sqrt{k})$, it follows that
$$
Q_\epsilon(\operatorname{Fix}^{\text{BPR}}_{n, k}) \geq Q_\epsilon(\widetilde{\operatorname{Fix}}_{n, k}) = \Omega\left(\frac{\sqrt{k}}{\log k}\right).
$$

\section{Alternate Proof of Lower Bound for Binary Search} \label{apdx:binary-search}

We illustrate the tree-filtration adversary method by deriving the standard $\Omega(\log n)$ quantum query lower bound for binary search. The input instances are functions $f_i \colon [n]\rightarrow [2]$, indexed by $i\in [n]$, where
$$
f_i(x) = \begin{cases}
    1, & x \leq i, \\
    0, & x > i.
\end{cases}
$$

The task is to identify the threshold $i$.

It suffices to consider $n$ that is a power of $2$: for general $n$, we may restrict to the first $2^{\lfloor\log_2 n\rfloor}$ instances. Write $n = 2^d$, and represent every $x\in [n]$ by a $d$-bit string $x_0\cdots x_{d - 1}$, with the most significant bit first. Let $\mathcal{T}$ be the complete binary trie of these representations. A node $u$ at depth $t$ is identified with a prefix $u_0\cdots u_{t - 1}$, and its descendant leaves represent exactly the thresholds whose binary expansions begin with this prefix. Define

$$
\hat{D}(u) = \{x\in [n] : x_0\cdots x_{t-1} = u_0\cdots u_{t - 1}\}.
$$

\begin{lemma}
    $\hat{D}$ is a simplified difference criterion for $\mathcal{T}$.
\end{lemma}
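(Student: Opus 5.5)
We must verify the two conditions of Definition~\ref{def:simplified_difference_criterion}: the covering condition $D(f_i,f_j)\subseteq\hat{D}(u)$ for all $f_i,f_j\in\mathcal{F}(u)$, and nestedness along $\mathcal{T}$. Nestedness is immediate and we dispose of it first. If $v$ is a descendant of $u$ with $\mathrm{dep}(v)=s\geq t=\mathrm{dep}(u)$, then the prefix $v_0\cdots v_{s-1}$ extends $u_0\cdots u_{t-1}$. Any $x$ whose first $s$ bits equal $v_0\cdots v_{s-1}$ therefore has first $t$ bits equal to $u_0\cdots u_{t-1}$, so $\hat{D}(v)\subseteq\hat{D}(u)$.

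For the covering condition, we fix a node $u$ at depth $t$ with prefix $u_0\cdots u_{t-1}$. The leaves below $u$ are exactly the thresholds $i$ whose binary expansion begins with this prefix. These thresholds form the contiguous interval
\[
J_u:=\{\,P2^{d-t},\,P2^{d-t}+1,\ldots,(P+1)2^{d-t}-1\,\},
\]
where $P$ is the integer with binary representation $u_0\cdots u_{t-1}$ (and $J_u=[n]$ when $t=0$). The key observation is that $\hat{D}(u)$ is this very interval: a point $x\in[n]$ has its top $t$ bits equal to the prefix if and only if $x\in J_u$.

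Next we take $i,j\in J_u$ with, say, $i<j$, and determine $D(f_i,f_j)$. By the definition of $f_i$:
\begin{itemize}
    \item for $x\leq i$, both $f_i(x)=1$ and $f_j(x)=1$;
    \item for $x>j$, both $f_i(x)=0$ and $f_j(x)=0$;
    \item for $i<x\leq j$, we have $f_i(x)=0\neq 1=f_j(x)$.
\end{itemize}
Hence $D(f_i,f_j)=\{i+1,\ldots,j\}$, and $D(f_i,f_j)=\varnothing$ when $i=j$. Since $J_u$ is an interval containing both $i$ and $j$, it contains every integer between them. So $D(f_i,f_j)\subseteq J_u=\hat{D}(u)$, which finishes the proof.

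There is no real obstacle here. The only point needing care is identifying $\hat{D}(u)$ with the contiguous block $J_u$ of thresholds below $u$. This follows from the most-significant-bit-first convention, which makes every prefix class an interval of $[n]$. Once that identification is in place, the difference set of two threshold functions is an interval lying between their thresholds, and containment is automatic.
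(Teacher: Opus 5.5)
Your proof is correct and follows essentially the same route as the paper: both rest on identifying $\hat{D}(u)$ with the contiguous block of thresholds lying below $u$, and both dispatch nestedness by noting that extending a prefix shrinks this set. The only difference is that the paper argues contrapositively (every $x$ outside the block is evaluated identically by all instances below $u$), whereas you compute $D(f_i,f_j)=\{i+1,\ldots,j\}$ explicitly and observe that it lies inside the block; the two checks are equivalent.
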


\begin{proof}
    Let $L_{u}$ and $R_u$ be, respectively, the smallest and largest integers whose binary expansions extend $u_0\cdots u_{t - 1}$. Thus the instances below $u$ are precisely $f_i$ with $L_u \leq i \leq R_u$. If $x\notin \hat{D}(u)$, then either $x < L_u$, in which case $f_i(x) = 1$ for every such $i$, or $x > R_u$, in which case $f_i(x) = 0$ for every such $i$. Therefore for all instances $f_i, f_j$ below $u$,
    $$
    D(f_i, f_j)\subseteq \hat{D}(u).
    $$

    Moreover extending a binary prefix can only shrink the corresponding set. So $\hat{D}(v)\subseteq \hat{D}(u)$ whenever $v$ is a descendant of $u$.
\end{proof}

Let $\mathcal{A}$ be the top-down random walk on $\mathcal{T}$ that chooses each child with probability $1/2$.

\begin{lemma}
    $\mathcal{A}$ satisfies the $\boldsymbol{a}$-anti-concentration property with $a_t = 1 - \frac{1}{2^{d-t}}$ for $0\leq t < d$.
\end{lemma}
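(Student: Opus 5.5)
Fix a node $u$ at depth $t<d$, identified with the prefix $u_0\cdots u_{t-1}$, and a descendant leaf $\ell$. The plan is to compute the conditional distribution of the random leaf $V\xleftarrow{d}\mathcal{A}$ given $\mathcal{A}\xrightarrow{t}u$. Then I would show that the event $f(\mathrm{ins}(\ell)) = f(\mathrm{ins}(V))$ occurs only when $V=\ell$. This gives the bound directly.

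First, the induced distribution. Since $\mathcal{A}$ picks each child with probability $1/2$, Definition~\ref{def:induced-distribution} gives $\Pr_{\mathcal{A}}[w]=2^{-\mathrm{dep}(w)}$ for every node $w$. Conditioned on reaching $u$, each depth-$d$ descendant $v$ is therefore sampled with probability
\[
\Pr_{\mathcal{A}}[v]/\Pr_{\mathcal{A}}[u]=2^{-d}/2^{-t}=2^{-(d-t)}.
\]
Hence $V$ is uniform over the $2^{d-t}$ leaves of $\mathrm{Sub}(u)$.

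Second, distinctness of answers. The function here is the threshold-identification map $f_i\mapsto i$. By the definition of the trie, distinct leaves correspond to distinct $d$-bit strings and hence to distinct thresholds $i$. Therefore $f(\mathrm{ins}(V))=f(\mathrm{ins}(\ell))$ holds if and only if $V=\ell$. Since $V$ is uniform, this happens with probability exactly $2^{-(d-t)}$. Taking the complement gives
\[
\Pr_{V\xleftarrow{d}\mathcal{A}}\bigl[f(\mathrm{ins}(\ell))\ne f(\mathrm{ins}(V))\mid\mathcal{A}\xrightarrow{t}u\bigr]=1-2^{-(d-t)}=a_t,
\]
which is the $\boldsymbol{a}$-anti-concentration property of Definition~\ref{def:anti-concentration}.

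There is no real obstacle here. The only point needing care is to interpret $f$ as outputting the threshold index, so that $\mathrm{ins}$ is a bijection with distinct outputs, rather than as the Boolean values $f_i(x)$. With that interpretation, the lemma is immediate from the uniformity of $V$ over $\mathrm{Sub}(u)$ established in the first step.
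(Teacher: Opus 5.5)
Your proof is correct and follows the paper's argument. Both use the fact that, conditioned on reaching $u$, the walk is uniform over the $2^{d-t}$ descendant leaves, and that distinct leaves carry distinct thresholds, so the complementary probability is exactly $1-2^{-(d-t)}$.
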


\begin{proof}
    Conditioned on reaching a node at depth $t$, the random walk is uniform over its $2^{d - t}$ descendant leaves, which correspond to distinct thresholds.  Hence, for any fixed descendant leaf, the probability that the random leaf has a different output is
    $$
    1 - 2^{-(d - t)},
    $$
    whenever $t < d$.
\end{proof}

\begin{lemma}
    $\mathcal{A}$ satisfies the $1/2$-geometric-indistinguishability property.
\end{lemma}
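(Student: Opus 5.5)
We have to show that for every query point $x\in[n]$, every pair of depths $0\le t<s\le d$, and every node $u$ at depth $t$,
\[
\Pr_{V\xleftarrow{s}\mathcal{A}}\bigl[x\in\hat{D}(V)\mid\mathcal{A}\xrightarrow{t}u\bigr]\le 2^{-(s-t)}.
\]
The plan is to compute this probability exactly, without going through a one-step contraction lemma. The filtration here is very rigid: at each depth $s$ the sets $\hat{D}(V)$, over the nodes $V$ at that depth, partition $[n]$.

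First, suppose $x\notin\hat{D}(u)$, meaning the first $t$ bits of $x$ differ from the prefix $u_0\cdots u_{t-1}$. Every descendant $V$ of $u$ extends this prefix. By the nesting property of $\hat{D}$ (the preceding lemma), $\hat{D}(V)\subseteq\hat{D}(u)$, so the probability is $0$.

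Next, suppose $x\in\hat{D}(u)$, so $x_0\cdots x_{t-1}=u_0\cdots u_{t-1}$. A descendant $V$ at depth $s$ satisfies $x\in\hat{D}(V)$ if and only if its prefix $V_0\cdots V_{s-1}$ agrees with $x_0\cdots x_{s-1}$. Among the descendants of $u$ at depth $s$, exactly one node has this property, namely the one whose additional bits are $x_t\cdots x_{s-1}$.

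Conditioned on reaching $u$, the walk picks each of the next $s-t$ bits independently and uniformly, since every transition probability is $1/2$. Hence, by Definition~\ref{def:induced-distribution}, each of the $2^{s-t}$ descendants at depth $s$ is reached with probability exactly $2^{-(s-t)}$. The desired probability therefore equals $2^{-(s-t)}=(1/2)^{s-t}$, which proves the claim with $r=1/2$.

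There is no real obstacle. The only point needing care is the identification of the event $x\in\hat{D}(V)$ with a single descendant node, and this follows directly from the definition of $\hat{D}$ as prefix agreement.
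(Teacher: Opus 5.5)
Your proposal is correct and takes the same approach as the paper's proof. Both split on whether $x\in\hat{D}(u)$ and dispose of the negative case by nesting. In the positive case, both note that $x\in\hat{D}(V)$ holds exactly when the next $s-t$ uniformly chosen bits match $x_t\cdots x_{s-1}$, which gives probability exactly $2^{-(s-t)}$.
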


\begin{proof}
    Fix depths $0 \leq t < s \leq d$, a node $u$ at depth $t$, and a query coordinate $x\in [n]$. If $x\notin \hat{D}(u)$, then $x\notin \hat{D}(v)$ for every descendant $v$ of $u$. Otherwise, for a random descendant $V$ at depth $s$, the event $x\in \hat{D}(V)$ holds exactly when its next $s - t$ bits agree with those of $x$. Since these bits are independent and uniform, 
    $$
    \Pr_{V\xleftarrow{s}\mathcal{A}}\left[x\in \hat{D}(V) \middle| \mathcal{A}\xrightarrow{t} u\right] = 2^{-(s - t)}.
    $$

    The required upper bound follows with $r = 1/2$.
\end{proof}

Applying Theorem \ref{thm:tree-adversary} with $r = 1/2$ and $a_t = 1 - 2^{-(d-t)}$, we obtain
$$
Q_\epsilon(\mathrm{BinarySearch}_n) \geq \left(1 - 2\sqrt{\epsilon(1 - \epsilon)}\right)\frac{1 - \sqrt{1 / 2}}{1 + \sqrt{1 / 2}}\left({\log_2 n} - 1 + \frac{1}{n}\right) = \Omega(\log n).
$$

Thus the tree-filtration framework recovers the optimal asymptotic lower bound without invoking the spectral analysis of the Hilbert matrix.

\end{document}